\newcommand{\opt}[0]{{\ensuremath{\sf{opt}}}}
\numberwithin{equation}{section}
\numberwithin{table}{section}
\newtheorem{fact}{Fact}[section]
\newtheorem{lemma}[fact]{Lemma}
\newtheorem{theorem}[fact]{Theorem}
\newtheorem{corollary}[fact]{Corollary}
\newtheorem{proposition}[fact]{Proposition}
\newtheorem{claim}[fact]{Claim}
\newcommand{\R}{\ensuremath{\mathbb R}}
\newcommand{\E}[1]{{\mathbb{E}}\left[#1\right]}
\newcommand{\junk}[1]{}
\renewcommand{\l}{\lambda}
\renewcommand{\L}{{\mathcal L}}
\newcommand{\vol}{{\rm vol}}
\providecommand{\norm}[1]{\left\lVert#1\right\rVert}
\newenvironment{proofof}[1]{{\medbreak\noindent \em Proof of #1.  }}{\hfill\qed\medbreak}
\def\b1{{\bf 1}}
\def\eps{{\epsilon}}
\def\cM{{\cal M}}
\def\cE{{\cal E}}
\def\cL{{\cal L}}
\def\R{\mathbb{R}}
\def\cR{{\mathcal{R}}}
\def\mass{\ell}
\def\sqt{{\sqrt{t}}}
\DeclareMathOperator{\sspan}{span}
\DeclareMathOperator{\len}{len}
\DeclareMathOperator{\argmin}{argmin}
\DeclareMathOperator{\dist}{dist}
\DeclareMathOperator{\range}{range}
\DeclareMathOperator{\supp}{supp}
\title{Improved Cheeger's Inequality: \\Analysis of Spectral Partitioning Algorithms \\through Higher Order Spectral Gap}
\author{ 
Tsz Chiu Kwok\thanks{The Chinese University of Hong Kong. Supported by Hong Kong RGC grant 2150701.  Email: \protect\url{tckwok@cse.cuhk.edu.hk}}
\and
Lap Chi Lau\thanks{The Chinese University of Hong Kong.  Supported by Hong Kong RGC grant 2150701.  Email: \protect\url{chi@cse.cuhk.edu.hk}}
\and
Yin Tat Lee\thanks{The Chinese University of Hong Kong.  Currently a PhD student of MIT.  Email: \protect\url{yintat@mit.edu}}
\and
Shayan Oveis Gharan\thanks{Department of Management Science and Engineering, Stanford University. Supported by a Stanford Graduate Fellowship. Email: \protect\url{shayan@stanford.edu}}
\and
Luca Trevisan\thanks{Department of Computer Science, Stanford University. This material is based upon  work supported by the National Science Foundation under grant No.  CCF 1017403. Email: \protect\url{trevisan@stanford.edu}}
}
\date{}
\begin{document}

\maketitle

\begin{abstract}
Let $\phi(G)$ be the minimum conductance of an undirected graph $G$, and let $0=\lambda_1 \leq \lambda_2 \leq \ldots \leq \lambda_n \leq 2$ be the eigenvalues of the normalized Laplacian matrix of $G$. 
We prove that for any graph $G$ and any $k\geq 2$, 
\[\phi(G) = O(k) \frac{\l_2}{\sqrt{\l_k}},\] 
and this performance guarantee is achieved by the spectral partitioning algorithm.
This improves Cheeger's inequality, 
and the bound is optimal up to a constant factor for any $k$.
Our result shows that the spectral partitioning algorithm is a constant factor approximation algorithm for finding a sparse cut if $\l_k$ is a constant for some constant $k$.
This provides some theoretical justification to its empirical performance in image segmentation and clustering problems.
We extend the analysis to other graph partitioning problems, including multi-way partition, balanced separator, and maximum cut.
\end{abstract}

\newpage

\tableofcontents

\newpage

\section{Introduction}

We study the performance of spectral algorithms for graph partitioning problems.  
For the moment, we assume the graphs are unweighted and $d$-regular for simplicity, while the results in the paper hold for arbitrary weighted graphs, with suitable changes to the definitions.
Let $G=(V,E)$ be a $d$-regular undirected graph. 
The {\em conductance} of a subset $S \subseteq V$ is defined as 
\[\phi(S) = \frac{|E(S,\overline{S})|}{d \min\{|S|,|\overline{S}|\}},\]
where $E(S,\overline{S})$ denotes the set of edges of $G$ crossing from $S$ to its complement.
The conductance of the graph $G$ is defined as 
\[\phi(G) = \min_{S\subset V} \phi(S).\]
Finding a set of small conductance, also called a sparse cut, is an algorithmic problem that comes up in different areas of computer science.
Some applications include image segmentation~\cite{shi-malik,tolliver-miller}, clustering~\cite{ng-jordan-weiss,kannan-vempala-vetta,von-luxburg}, community detection~\cite{leskovec-lang-mahoney}, and designing approximation algorithms~\cite{shmoys}.

A fundamental result in spectral graph theory provides a connection between the conductance of a graph and the second eigenvalue of its normalized Laplacian matrix.
The normalized Laplacian matrix $\cL \in \mathbb R^{V \times V}$ is defined
as $\cL = I - \frac{1}{d} A$, where $A$ is the adjacency matrix of $G$.  
The eigenvalues of $\cL$ satisfy $0 = \l_1 \leq \l_2 \leq \ldots \leq \l_{|V|} \leq 2$.
It is a basic fact that $\phi(G)=0$ if and only if $\lambda_2=0$.
Cheeger's inequality for graphs 
provides a quantitative generalization of this fact:
\begin{equation}
\label{eq:Cheegerineq}
\frac12 \l_2 \leq \phi(G) \leq \sqrt{2\l_2}.
\end{equation}
This is first proved in the manifold setting by Cheeger~\cite{cheeger} and is extended to undirected graphs by Alon and Milman~\cite{alon-milman,alon}.
Cheeger's inequality is an influential result in spectral graph theory with applications in spectral clustering \cite{spielman-teng,kannan-vempala-vetta}, explicit construction of expander graphs \cite{jimbo-maruoka,hoory-linial-wigderson,lee12}, approximate counting \cite{sinclair-jerrum,jerrum-sinclair-vigoda}, and image segmentation \cite{shi-malik}.

We improve Cheeger's inequality using higher eigenvalues of the normalized Laplacian matrix.

\begin{theorem} \label{t:maincheeger}
For every undirected graph $G$ and any $k \geq 2$, it holds that 
\[\phi(G) = O(k) \frac{\l_2}{\sqrt{\l_k}}.\]
\end{theorem}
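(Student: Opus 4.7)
The plan is to analyze the standard spectral-partitioning algorithm --- take the second eigenvector $f = v_2$, sort the vertices by $f$-value, and output the best threshold (sweep) cut --- and prove by contradiction that this algorithm always returns a cut of conductance at most $\rho := C k \lambda_2/\sqrt{\lambda_k}$ for a sufficiently large constant $C$. Normalize $f$ so that $\|f\|^2 = 1$ and recall that $R(f) := \sum_{(u,v) \in E}(f(u)-f(v))^2 = \lambda_2 d$. By the standard positive/negative-part trick, I may assume $f \geq 0$ and that $\supp(f)$ has volume at most $d|V|/2$, so that every sub-level set is a candidate sweep cut.

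Assume for contradiction that every sweep cut has conductance $\geq \rho$. The goal is to build $k$ functions $g_1,\ldots,g_k$ with pairwise disjoint supports such that
\[
 \max_i \frac{R(g_i)}{d\,\|g_i\|^2} \;<\; \lambda_k,
\]
which contradicts the Courant--Fischer variational characterization of $\lambda_k$ (the min over $k$-tuples of disjointly supported functions of the maximum Rayleigh quotient). I partition the range of $f$ into $\Theta(k)$ consecutive intervals $I_1,\ldots,I_m$ so that each ``slab'' $V_i := f^{-1}(I_i)$ carries $\Theta(1/k)$ of the $f^2$-mass. The test function $g_i$ is defined as $f$ restricted to $V_i$, with values clamped at the endpoints of $I_i$ so that $g_i$ is Lipschitz in $f$ and supported exactly on $V_i$.

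Bounding $R(g_i)$ splits into a contribution from interior edges of $V_i$, which in total across all $i$ is at most $R(f) = \lambda_2 d$, and a contribution from crossing edges. The latter is where the no-good-sweep-cut hypothesis enters: each crossing edge straddles a threshold on the boundary of $I_i$, and by hypothesis the number of edges cut at any threshold $t$ is at most $\rho$ times the volume of the smaller side. A co-area style integration over the boundary thresholds of $I_i$, combined with Cauchy--Schwarz (mirroring the classical Cheeger analysis but localized to the slab), yields a crossing contribution of order $\lambda_2^2/\rho^2$ per unit local $f^2$-mass. Together this gives $R(g_i)/(d\,\|g_i\|^2) = O(\lambda_2^2/\rho^2)$ up to polynomial factors in $k$. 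For $C$ sufficiently large, this is strictly less than $\lambda_k$, which contradicts the min-max bound and forces some sweep cut of conductance at most $\rho$.

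The main obstacle is precisely the crossing-edge estimate. Translating ``every sweep cut has conductance $\geq \rho$'' into a clean upper bound on $\sum_{(u,v) \in E,\ u \in V_i,\ v \notin V_i}(g_i(u)-g_i(v))^2$ in terms of the local $f^2$-mass requires a careful co-area argument that generalizes the single-threshold Cheeger calculation to the entire boundary of a slab, while simultaneously keeping the factors of $k$ under control. Additional care is needed in the choice of the interval endpoints --- one must balance the $f^2$-masses of the slabs \emph{and} ensure no single slab concentrates the boundary cost --- a pigeonhole over $\Theta(k)$ candidate partitions is what ultimately produces the $O(k)$ factor in the theorem.
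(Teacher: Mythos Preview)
Your high-level strategy --- build $k$ disjointly supported test functions from slabs of $f$ and invoke $\lambda_k \le 2\max_i \cR(g_i)$ --- is the skeleton of both of the paper's proofs. But the one place where the sweep-cut hypothesis enters is stated with the wrong sign, and as written the argument cannot close. You say ``by hypothesis the number of edges cut at any threshold $t$ is at most $\rho$ times the volume of the smaller side,'' and use this to upper-bound the crossing contribution to $R(g_i)$. The assumption $\phi(f)\ge\rho$ says the opposite: at every threshold the cut weight is \emph{at least} $\rho\cdot\vol$. Co-area over a slab boundary therefore produces a \emph{lower} bound on energy, not an upper bound, and cannot cap $R(g_i)$.

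This is not a sign you can simply flip; the mechanism by which $\rho$ enters must change. In the paper the hypothesis is used the other way around: the energy lower bound forces each ``light'' subinterval (one carrying little mass) to cost a definite fraction of the total energy $\cR(f)=\lambda_2$, so light subintervals are rare, hence most subintervals are ``heavy,'' and the disjoint functions are supported on those heavy pieces. In other words, $\rho$ controls the \emph{denominators} $\|g_i\|^2$ from below, while the numerators are handled by Lipschitzness and pigeonhole alone, with no appeal to $\rho$. Your equal-$f^2$-mass partition does not give this denominator control: after clamping to a tent on $I_i$, the norm $\|g_i\|^2$ can be arbitrarily small even when $\sum_{v\in V_i} w(v)f(v)^2=\Theta(1/k)$ (place all the vertices near the endpoints of $I_i$). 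The paper's first proof avoids this by choosing thresholds greedily so that the tent-function norms themselves are equalized; the second proof uses a geometric partition together with the heavy/light dichotomy just described.
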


This improves Cheeger's inequality, as
it shows that $\l_2$ is a better approximation of $\phi(G)$ when there is a large gap between $\l_2$ and $\l_k$ for any $k \geq 3$.
The bound is optimal up to a constant factor for any $k \geq 2$, 
as the cycle example shows that $\phi(G) = \Omega (k \l_2/\sqrt{\l_k})$ for any $k \geq 2$.

\subsection{The Spectral Partitioning Algorithm}

The proof of Cheeger's inequality is constructive and it gives the following simple nearly-linear time algorithm (the {\em spectral partitioning algorithm}) that finds cuts with approximately minimal conductance.
Compute the second eigenfunction $g \in {\mathbb R}^V$ of the normalized Laplacian matrix $\cL$, and let $f = g/\sqrt{d}$. 
For a threshold $t\in \mathbb{R}$, let $V(t):=\{v: f(v) \geq t\}$ be a threshold set of $f$.
Return the threshold set of $f$ with the minimum conductance among all thresholds $t$.
Let $\phi(f)$ denote the conductance of the return set of the algorithm.
The proof of Cheeger's inequality shows that $\frac{1}{2} \l_2 \leq \phi(f) \leq \sqrt{2\l_2}$, and hence
the spectral partitioning algorithm is a nearly-linear  time $O(1/\sqrt{\l_2})$-approximation algorithm for finding a sparse cut.
In particular, it gives a constant factor approximation algorithm when $\l_2$ is a constant, but since $\l_2$ could be as small as $1/n^2$ even for a simple unweighted graph (e.g. for the cycle), the performance guarantee could be $\Omega(n)$.

We prove \autoref{t:maincheeger} by showing a stronger statement, that is $\phi(f)$ is upper-bounded by $O(k\lambda_2/\sqrt{\lambda_k})$.
\begin{theorem}
\label{t:main}
For any undirected graph $G$, and $k\geq 2$, 
$$ \phi(f) = O(k)\frac{\lambda_2}{\sqrt{\lambda_k}}.$$
\end{theorem}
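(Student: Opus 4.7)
My plan is to improve the classical sweep-cut analysis by replacing the generic Cauchy--Schwarz step with one that exploits the following structural consequence of large $\lambda_k$: the second eigenfunction can be well approximated by a function taking only $O(k)$ distinct values, and for such a step function the weighted-averaging inequality over its nested level sets yields a sweep-cut conductance scaling like $\mathcal R$, rather than the $\sqrt{\mathcal R}$ one gets for a generic test function.

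First, I would make the standard reduction to a non-negative test function. Let $g$ be the second eigenfunction and let $f$ be the positive or negative part of $g$ with smaller degree-weighted support. Then $f\geq 0$, $\mathrm{vol}(\mathrm{supp}\, f)\leq\tfrac{1}{2}\mathrm{vol}(V)$, and $\mathcal R(f)\leq\lambda_2$. Any threshold set of $f$ is a threshold set of $g$, so it suffices to exhibit a threshold set of $f$ of conductance $O(k)\lambda_2/\sqrt{\lambda_k}$.

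The heart of the proof is a structural lemma producing thresholds $0=t_0<t_1<\dots<t_{k-1}$ with corresponding nested level sets $L_i=\{v: f(v)\geq t_i\}$, together with the quantitative bound
\[
\sum_{i=1}^{k-1}(t_i^2-t_{i-1}^2)\,|E(L_i,\overline{L_i})| \;\leq\; O(k)\cdot\frac{\lambda_2}{\sqrt{\lambda_k}} \cdot \sum_{i=1}^{k-1}(t_i^2-t_{i-1}^2)\,\mathrm{vol}(L_i).
\]
Equivalently, the step function $\widetilde{f}$ with value $t_i$ on $L_i\setminus L_{i+1}$ has sweep ratio $\sum_{uv}|\widetilde{f}(u)^2-\widetilde{f}(v)^2|\big/\sum_v d(v)\widetilde{f}(v)^2$ at most $O(k)\lambda_2/\sqrt{\lambda_k}$. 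The argument I would use to produce such thresholds proceeds by contrapositive: if no good choice existed, then $f$ would be "spread across too many scales," and by a dyadic-scale pigeonhole over $\log_2(\max_v f(v)/\min_{v:f(v)>0}f(v))$ one could isolate $k$ disjointly-supported value bands of $f$ (truncations of the form $\min(f,b)-\min(f,a)$ on successive intervals), each of which is a legitimate test function with Rayleigh quotient strictly less than $\lambda_k$; the existence of $k$ such disjointly-supported functions violates the variational characterization of $\lambda_k$. The key obstacle, where I expect essentially all the work to lie, is making this dichotomy quantitatively sharp enough that the bad case produces witnesses with Rayleigh quotient below $\lambda_k$ by the full factor needed; the $\sqrt{\lambda_k}$ (rather than $\sqrt[4]{\lambda_k}$) improvement demands careful bookkeeping of the Dirichlet energies of the truncated bands against the sweep energy $\sum_{uv}|f(u)^2-f(v)^2|$, which classical Cheeger already bounds by $\sqrt{2\lambda_2}\, d\|f\|^2$.

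With the structural lemma in hand, the theorem follows from the weighted-averaging inequality
\[
\min_{1\leq i\leq k-1}\phi(L_i) \;\leq\; \frac{\sum_{i}(t_i^2-t_{i-1}^2)\,|E(L_i,\overline{L_i})|}{\sum_{i}(t_i^2-t_{i-1}^2)\,\mathrm{vol}(L_i)} \;\leq\; O(k)\,\frac{\lambda_2}{\sqrt{\lambda_k}},
\]
where the first step uses $\mathrm{vol}(L_i)\leq\tfrac{1}{2}\mathrm{vol}(V)$ (so that $|E(L_i,\overline{L_i})|/\mathrm{vol}(L_i)=\phi(L_i)$) and is the elementary fact that the minimum of ratios is at most any convex combination of them. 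Since each $L_i$ is a threshold set of $f$ and hence of $g$, the sweep-cut algorithm returns a cut of the claimed conductance, proving Theorem~\ref{t:main}.
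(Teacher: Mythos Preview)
Your high-level strategy matches the paper's: show that either $f$ is close to an $O(k)$-step function (in which case a good sweep cut exists), or else one can extract $k$ disjointly-supported test functions with Rayleigh quotient below $\lambda_k$, contradicting the variational characterization. Your contrapositive step is essentially the paper's Lemma~\ref{l:lambda_k}, and the tent-shaped localizations used there are close cousins of your truncations $\min(f,b)-\min(f,a)$.

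However, your structural lemma as stated is not the right intermediate target, and the step connecting step-function approximability to a sweep-cut bound has a genuine gap. You want to bound the sweep ratio $\sum_{uv}|\widetilde f(u)^2-\widetilde f(v)^2|\big/\sum_v d(v)\widetilde f(v)^2$ of the step function itself. But rounding $f$ to $\widetilde f$ can dramatically inflate the numerator: an edge $\{u,v\}$ with $f(u),f(v)$ arbitrarily close but straddling a threshold $t_i$ contributes $t_i^2-t_{i-1}^2$ to the step-function numerator regardless of how small $|f(u)-f(v)|$ is. Neither the $\ell^2$ closeness $\|f-\widetilde f\|_w$ nor the energy $\cR(f)$ controls this blow-up, and your dyadic pigeonhole does not address it either; choosing thresholds to dodge such edges is not obviously compatible with simultaneously forcing small $\|f-\widetilde f\|_w$.

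The paper does \emph{not} bound the sweep ratio of any step function. Instead, given a step approximation $g$ with thresholds $t_0,\dots,t_{2k}$, it introduces an auxiliary function $h(v)=\int_0^{f(v)}\mu(x)\,dx$, where $\mu(x)$ is the distance from $x$ to the nearest threshold. This $h$ is monotone in $f$ (hence has the same threshold sets) but, crucially,
\[
|h(u)-h(v)|\;\le\;\tfrac12\,|f(u)-f(v)|\,\bigl(|f(u)-g(u)|+|f(v)-g(v)|+|f(u)-f(v)|\bigr),
\]
which vanishes to second order near thresholds and so is immune to the inflation above. One Cauchy--Schwarz then gives $\sum_{uv}w(u,v)|h(u)-h(v)|\lesssim \cR(f)+\sqrt{\cR(f)}\,\|f-g\|_w$, while a direct calculation shows $\sum_v w(v)h(v)\ge \|f\|_w^2/(8k)$. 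Applying Lemma~\ref{lem:dirichletcheeger} to $h$ yields $\phi(f)=\phi(h)\le O(k)\bigl(\cR(f)+\|f-g\|_w\sqrt{\cR(f)}\bigr)$, and plugging in $\|f-g\|_w^2\le O(\cR(f)/\lambda_k)$ from the step-approximation lemma finishes. This $h$-function device is exactly the ``key obstacle'' you flagged but did not resolve; without it (or an equivalent smoothing) your argument does not close.
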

This shows that the spectral partitioning algorithm is a $O(k/\sqrt{\l_k})$-approximation algorithm for the sparsest cut problem, even though it does not employ any information about higher eigenvalues.
In particular, spectral partitioning provides a constant factor approximation for the sparsest cut problem when $\l_k$ is a constant for some constant $k$.

\subsection{Generalizations of Cheeger's Inequality}

There are several recent results showing new connections between the expansion profile of a graph and the higher eigenvalues of its normalized Laplacian matrix.
The first result in this direction is about the small set expansion problem. Arora, Barak and Steurer~\cite{arora-barak-steurer} show that
if there are $k$ small eigenvalues for some large $k$, then the graph has a sparse cut $S$ with $|S| \approx n/k$.
In particular, if $k = |V|^{\epsilon}$ for $\epsilon \in (0,1)$, then the graph has a sparse cut $S$ with $\phi(S) \leq O(\sqrt{\l_k})$ and $|S| \approx n/k$.
This can be seen as a generalization of Cheeger's inequality to the small set expansion problem (see~\cite{steurer,gharan-trevisan,odonnell-witmer} for some improvements).

Cheeger's inequality for graph partitioning can also be extended to higher-order Cheeger's inequality for $k$-way graph partitioning~\cite{louis+2,lee-gharan-trevisan}: 
If there are $k$ small eigenvalues, then there are $k$ disjoint sparse cuts.
Let 
$$\phi_k(G) := \min_{S_1, \ldots, S_k} \max_{1\leq i\leq k}\phi(S_i)$$ 
where $S_1, \ldots, S_k$ are over non-empty disjoint subsets $S_1, \ldots, S_k \subseteq V$.
Then
\[\frac{1}{2} \l_k \leq \phi_k(G) \leq O(k^2) \sqrt{\l_k}.\]

Our result can be applied to $k$-way graph partitioning by combining with a result in~\cite{lee-gharan-trevisan}.

\begin{corollary} \label{c:multiway}
For every undirected graph $G$ and any $l > k \geq 2$, it holds that
\begin{enumerate}[(i)]
\item \[\phi_{k}(G) \leq O(l k^6) \frac{\l_k}{\sqrt{\l_l}}.\]
\item For any $\delta \in (0,1)$, 
$$ \phi_{(1-\delta)k}(G) \leq O\left( \frac{l\log^2{k}}{\delta^8 k}\right) \frac{\lambda_k}{\sqrt{\lambda_l}}.$$
\item If $G$ excludes $K_h$ as a minor, then for any $\delta\in (0,1)$ 
$$ \phi_{(1-\delta)k}(G) \leq O\left(\frac{h^4l}{\delta^5k}\right)  \frac{\lambda_k}{\sqrt{\lambda_l}}.$$
\end{enumerate}
\end{corollary}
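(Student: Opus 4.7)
The plan is to derive Corollary~\ref{c:multiway} by plugging Theorem~\ref{t:main} into the higher-order Cheeger framework of Lee, Gharan, and Trevisan, using the improved sweep in place of the classical one. The starting point is the LGT decomposition: from the bottom $k$ eigenvectors of $\cL$ one extracts disjointly-supported ``approximate eigenvector'' functions $\psi_1,\dots,\psi_k$ on sets $S_1,\dots,S_k \subseteq V$ with Rayleigh quotient $\cR(\psi_i) = \mathrm{poly}(k)\,\lambda_k$. Classical LGT applies the standard Cheeger sweep to each $\psi_i$, losing a square root and obtaining $\phi_k(G) \leq O(k^2)\sqrt{\lambda_k}$. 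My strategy is instead to view each $\psi_i$ as a Fiedler-like function for the Dirichlet problem on $S_i$ and apply the improved sweep of Theorem~\ref{t:main}, which avoids the square root by exploiting a gap to a higher Dirichlet eigenvalue.

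To feed Theorem~\ref{t:main} I need a lower bound on the $l'$-th Dirichlet eigenvalue of each $S_i$, for some $l' \approx l/k$. The key observation is eigenvalue counting: Dirichlet eigenfunctions on the disjoint sets $S_1,\dots,S_k$ are mutually orthogonal and retain their Rayleigh quotients when viewed as vectors in $G$, so if $m_i$ denotes the number of Dirichlet eigenvalues of $S_i$ strictly below $\lambda_l$ then $\sum_i m_i \leq l-1$. Pigeonholing across the $k$ pieces, together with recursively subdividing any offending $S_i$ using its own Dirichlet spectrum (subdivision only creates additional small-conductance pieces, which can be culled back), forces $\lambda_{l'}^{\mathrm{Dir}}(S_i) \gtrsim \lambda_l$ on each retained piece with $l' = O(l/k)$. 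Theorem~\ref{t:main} then produces threshold sets $T_i \subseteq S_i$ with
\[
\phi(T_i) \;=\; O(l') \cdot \frac{\cR(\psi_i)}{\sqrt{\lambda_{l'}^{\mathrm{Dir}}(S_i)}} \;\leq\; O(l/k) \cdot \frac{\mathrm{poly}(k)\,\lambda_k}{\sqrt{\lambda_l}},
\]
which matches the $O(lk^6)\lambda_k/\sqrt{\lambda_l}$ bound of part (i) after bookkeeping constants.

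Parts (ii) and (iii) follow the same template but with the basic LGT decomposition replaced by its sharper $(1-\delta)k$-piece variants: the general random-partition version reduces the Rayleigh quotient to $O(\log^2 k / \mathrm{poly}(\delta))\,\lambda_k$, and the $K_h$-minor-free version to $O(h^4/\mathrm{poly}(\delta))\,\lambda_k$, at the cost of losing a $\delta$ fraction of pieces. The main obstacle I anticipate is the eigenvalue-counting step: the $\psi_i$ are only approximate rather than exact Dirichlet eigenfunctions, so the pigeonhole must be robust to the slack between $\cR(\psi_i)$ and $\lambda_1^{\mathrm{Dir}}(S_i)$, and the subdivide-and-cull procedure has to be engineered so that every piece ultimately retained genuinely inherits the lower bound $\lambda_{l'}^{\mathrm{Dir}}(S_i) \gtrsim \lambda_l$. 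Quantitatively absorbing this slack is presumably where the polynomial is pushed from $k^2$ up to $k^6$, and where the $\log^2 k/\delta^8$ factor of part (ii) is picked up.
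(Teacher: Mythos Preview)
Your overall strategy --- take the disjointly supported functions $\psi_1,\dots,\psi_k$ from the Lee--Gharan--Trevisan decomposition and apply the improved sweep to each one --- is exactly the paper's approach. The issue is in how you set up the sweep.

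First, a technical point: the result you are invoking is not \autoref{t:main} (which is stated for the second eigenfunction) but its functional version, \autoref{t:mainfunc}, whose dichotomy applies to any nonnegative $f$ with $\vol(\supp f)\le \vol(V)/2$: either $\phi(f)\le O(m)\cR(f)$, or there are $m$ disjointly supported functions inside $\supp(f)$ with Rayleigh quotient at most $O(m^2)\cR(f)^2/\phi(f)^2$. Combined with \autoref{c:orthogonal}, this yields $\phi(f)\le O(m)\,\cR(f)/\sqrt{\lambda_m}$, where $\lambda_m$ is the \emph{global} eigenvalue of $G$ --- no Dirichlet spectrum is needed.

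This is where your proposal goes astray for part~(i). You aim for $l'=O(l/k)$ and then need $\lambda_{l'}^{\mathrm{Dir}}(S_i)\gtrsim \lambda_l$ for \emph{every} piece. Your pigeonhole gives this for only one piece; and your ``subdivide-and-cull'' repair does not work, because the subpieces you would carve out of an offending $S_i$ come with functions whose Rayleigh quotient is controlled by the Dirichlet spectrum of $S_i$ (i.e.\ by $\lambda_l$), not by $\lambda_k$ --- so on those subpieces you lose the numerator $\lambda_k$ in the target bound. The fix is simply to take $l'=l$: your own counting argument already shows $\lambda_l^{\mathrm{Dir}}(S_i)\ge \lambda_l$ for every $i$ with no pigeonhole, and then $\phi(T_i)\le O(l)\,\cR(\psi_i)/\sqrt{\lambda_l}\le O(lk^6)\lambda_k/\sqrt{\lambda_l}$, which is precisely the paper's proof of~(i).

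For parts (ii) and (iii) the paper does exploit a $1/(\delta k)$ saving, but not via Dirichlet eigenvalues. It sorts the pieces by $\phi(S_i)$, applies the dichotomy of \autoref{t:mainfunc} with parameter $m=O(l/(\delta k))$ only to the worst $\delta k/2$ pieces, and then \emph{aggregates} the resulting $m$ disjoint functions across those $\delta k/2$ pieces to obtain $l$ disjoint functions in total, bounding the global $\lambda_l$ via \autoref{c:orthogonal}. Your pigeonhole-on-Dirichlet-spectra idea can be made to give the same thing here (drop the at most $\delta k/2$ pieces with $m_i\ge 2l/(\delta k)$ and use $l'=2l/(\delta k)$ on the rest), but the paper's direct use of the dichotomy avoids the Dirichlet detour entirely.
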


Part (i) shows that $\l_k$ is a better approximation of $\phi_k(G)$ when there is a large gap between $\l_k$ and $\l_l$ for any $l > k$.
Part (ii) implies that $\phi_{0.9k}(G) \leq O(\l_k \log^2k/\sqrt{\l_{2k}})$, and similarly part (iii) implies that $\phi_{0.9k}(G) \leq O(\l_k/\sqrt{\l_{2k}})$ for planar graphs.

Furthermore, our proof shows that the spectral algorithms in~\cite{lee-gharan-trevisan} achieve the corresponding approximation factors. 
For instance, when $\l_l$ is a constant for a constant $l > k$, there is  a constant factor approximation algorithm for the $k$-way partitioning problem.

\subsection{Analysis of Practical Instances}

Spectral partitioning is a popular heuristic in practice, as it is easy to be implemented and can be solved efficiently by standard linear algebra methods.
Also, it has good empirical performance in applications including image segmentation~\cite{shi-malik} and clustering~\cite{von-luxburg}, much better than the worst case performance guarantee provided by Cheeger's inequality.
It has been an open problem to explain this phenomenon rigorously~\cite{spielman-teng,guattery-miller}.
There are some research directions towards this objective.

One direction is to analyze the average case performance of spectral partitioning.
A well-studied model is the random planted model~\cite{boppana,alon-krivelevich-sudakov,mcsherry},
where there is a hidden bisection $(X,Y)$ of $V$ and there is an edge between two vertices in $X$ and two vertices in $Y$ with probability $p$ and there is an edge between a vertex in $X$ and a vertex in $Y$ with probability $q$.
It is proved that spectral techniques can be used to recover the hidden partition with high probability, as long as $p-q \geq \Omega(\sqrt{p \log |V|/|V|})$~\cite{boppana,mcsherry}.
The spectral approach can also be used for other hidden graph partitioning problems~\cite{alon-krivelevich-sudakov,mcsherry}.
Note that the spectral algorithms used are usually not exactly the same as the spectral partitioning algorithm.
Some of these proofs explicitly or implicitly use the fact that there is a gap between the second and the third eigenvalues. See \autoref{s:semirandom} for more details.

To better model practical instances, 
Bilu and Linial~\cite{bilu-linial} introduced the notion of stable instances for clustering problems.
One definition for the sparsest cut problem is as follows: an instance is said to be $\gamma$-stable if there is an optimal sparse cut $S \subseteq V$ which will remain optimal even if the weight of each edge is perturbed by a factor of $\gamma$.
Intuitively this notion is to capture the instances with an outstanding solution that is stable under noise, and arguably they are the meaningful instances in practice.
Note that a planted bisection instance is stable if $p-q$ is large enough, and so this is a more general model than the planted random model.
Several clustering problems are shown to be easier on stable instances~\cite{balcan-blum-gupta,awasthi-blum-sheffet,daniely-linial-saks}, and spectral techniques have been analyzed for the stable maximum cut problem~\cite{bilu-linial,bilu-daniely-linial-saks}.
See \autoref{s:stable} for more details.

Informally, the higher order Cheeger's inequality shows that an undirected graph has $k$ disjoint sparse cuts if and only if $\l_k$ is small. 
This suggests that the graph has at most $k-1$ outstanding sparse cuts when $\l_{k-1}$ is small and $\l_k$ is large.
The algebraic condition that $\l_2$ is small and $\l_3$ is large seems similar to the stability condition but more adaptable to spectral analysis.
This motivates us to analyze the performance of the spectral partitioning algorithm through higher-order spectral gaps.

In practical instances of image segmentation, there are usually only a few outstanding objects in the image, and so $\l_k$ is large for a small $k$~\cite{von-luxburg}.
Thus \autoref{t:main} provides a theoretical explanation to why the spectral partitioning algorithm performs much better than the worst case bound by Cheeger's inequality in those instances.
In clustering applications, there is a well-known eigengap heuristic that partitions the data into $k$ clusters if $\l_k$ is small and $\l_{k+1}$ is large~\cite{von-luxburg}.
\autoref{c:multiway} shows that in such situations the spectral algorithms in~\cite{lee-gharan-trevisan} perform better than the worst case bound by the higher order Cheeger's inequality.

\subsection{Other Graph Partitioning Problems}

Our techniques can be used to improve the spectral algorithms for other graph partitioning problems using higher order eigenvalues.
In the minimum bisection problem, the objective is to find a set $S$ with minimum conductance among the sets with $|V|/2$ vertices.
While it is very nontrivial to find a sparse cut with exactly $|V|/2$ vertices~\cite{feige-krauthgamer, racke},
it is well known that a simple recursive spectral algorithm can find a {\em balanced separator} $S$ with $\phi(S) = O(\sqrt{\eps})$ with $|S| = \Omega(|V|)$, where $\eps$ denotes the conductance of the minimum bisection (e.g. ~\cite{kannan-vempala-vetta}).
We use \autoref{t:main} to generalize the recursive spectral algorithm to obtain a better approximation guarantee when $\l_k$ is large for a small $k$.

\begin{theorem}
\label{thm:minbisection-intro}
Let 
$$\eps:=\min_{|S|=|V|/2} \phi(S).$$
There is a polynomial time algorithm
that finds a set $S$ such that $|V|/5 \leq |S| \leq 4|V|/5$ and $\phi(S)\leq O(k \eps/\lambda_k)$.
\end{theorem}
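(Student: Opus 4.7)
The plan is to combine the standard test-vector bound $\lambda_2 \le 2\eps$ with the improved Cheeger inequality \autoref{t:main} in a recursive balanced-separator scheme. First, using the optimal bisection $(B,\bar B)$, the vector $x := \chi_B - \chi_{\bar B}$ is orthogonal to $\b1$ since $|B|=|V|/2$, and a direct computation of its Rayleigh quotient gives $x^T \cL x / x^T x = 2\eps$, so $\lambda_2 \le 2\eps$. Applying \autoref{t:main} to $G$ then produces a cut $S_1$ with $\phi(S_1) \le O(k\eps/\sqrt{\lambda_k})$. If $|V|/5 \le |S_1| \le 4|V|/5$ the algorithm returns $S_1$.

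Otherwise, without loss of generality $|S_1| < |V|/5$, and I would recurse on the induced subgraph $G_1 := G[V \setminus S_1]$. Maintaining $T_i := \bigcup_{j \le i} S_j$ and $G_i := G[V \setminus T_i]$, the process halts the first time $|T_i| \ge |V|/5$; at that moment $|T_i| \le 4|V|/5$, since $|T_{i-1}|<|V|/5$ and $|S_i|\le |V_{i-1}|/2$, so the final $T$ is balanced. To keep obtaining cuts of conductance $O(k\eps/\sqrt{\lambda_k})$ by applying \autoref{t:main} to $G_i$, I need (a) $\lambda_2(G_i) = O(\eps)$ and (b) $\lambda_k(G_i) = \Omega(\lambda_k(G))$. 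Bound (a) follows by restricting $(B,\bar B)$ to $V \setminus T_i$: both sides retain at least $3|V|/10$ vertices with at most $\eps d |V|/2$ crossing edges in $G_i$, and since $|T_i|\le |V|/5$ the volume of each side in $G_i$ is still $\Omega(d|V|)$; hence the restricted cut has conductance $O(\eps)$ in $G_i$ and $\lambda_2(G_i) = O(\eps)$.

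Granting (b), every iterate satisfies $\phi_{G_{i-1}}(S_i) \le O(k\eps/\sqrt{\lambda_k})$. Since the $S_i$'s are disjoint and each $S_i$ is cut inside $G_{i-1}$,
$$ |E_G(T,\bar T)| \;\le\; \sum_i |E_{G_{i-1}}(S_i, \bar S_i)| \;\le\; d|T| \cdot \max_i \phi_{G_{i-1}}(S_i),$$
so $\phi(T) \le O(k\eps/\sqrt{\lambda_k}) \le O(k\eps/\lambda_k)$, where the last step uses the trivial inequality $1/\sqrt{\lambda_k} \le \sqrt{2}/\lambda_k$ valid for $\lambda_k \le 2$.

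The hardest step will be establishing (b): unlike the combinatorial Laplacian, normalized Laplacian eigenvalues do not interlace straightforwardly under vertex removal because the degree normalization shifts at neighbors of removed vertices. I expect this to be handled either by a perturbation argument exploiting that each $S_i$ has small conductance in $G_{i-1}$ (so its removal barely perturbs the Rayleigh quotient on the top $k$-eigenspace), or by sidestepping the recursion altogether and analyzing all threshold cuts of the second eigenfunction of $G$ simultaneously to find one that is both balanced and of small conductance, thereby directly converting the unbalanced guarantee of \autoref{t:main} into a balanced one.
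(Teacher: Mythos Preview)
Your recursive scheme and the bound $\lambda_2(G_i)=O(\eps)$ are both correct and match the paper's setup. The genuine gap is exactly the one you flag as (b): you need $\lambda_k(G_i)=\Omega(\lambda_k(G))$, and there is no interlacing or perturbation argument that delivers this in general. Removing a small-conductance set can drastically lower higher eigenvalues of the normalized Laplacian (think of peeling a thin slice off a cycle), so this step cannot be rescued as stated.

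The paper does \emph{not} prove (b). Instead it avoids the issue by invoking the stronger structural statement behind \autoref{t:main} (\autoref{t:mainfunc}): applied to the function $f$ on the current induced graph $H$, either $\phi_H(f)\le O(k)\cR_H(f)$ (and then you remove the threshold set and maintain the invariant $\phi_G(\overline U)\le O(k\eps/\lambda_k)$ directly), or one obtains $k$ disjointly supported functions $f_1,\dots,f_k$ with $\phi_H(f)\le O(k)\cR_H(f)/\sqrt{\max_i \cR_H(f_i)}$. Extending each $f_i$ by zero to all of $V$, the paper then proves a dichotomy (its \autoref{lem:rayleighenlargement}): for each $f_i$, either some threshold set $S$ satisfies $w(E(S,U\setminus S))\le w(E(S,\overline U))$, in which case removing $S$ can only \emph{decrease} $\phi_G(\overline U)$; or else $\cR_H(f_i)\ge \Omega(\cR_G(f_i)^2)$. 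If the second alternative holds for all $i$, then since the $f_i$ are disjointly supported in $G$ one gets $\max_i \cR_G(f_i)\ge \lambda_k(G)/2$, and substituting yields $\phi_H(f)\le O(k)\lambda'_2/\lambda_k(G)$ directly, with $\lambda_k$ of the \emph{original} graph appearing without ever comparing it to $\lambda_k(G_i)$. This extra ``free removal'' step is the missing idea in your outline.
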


In the maximum cut problem, the objective is to find a partition of the vertices which maximizes the weight of edges whose endpoints are on different sides of the partition.
Goemans and Williamson~\cite{goemans-williamson} gave an SDP-based $0.878$-approximation algorithm for the maximum cut problem.
Trevisan~\cite{trevisan09} gave a spectral algorithm with approximation ratio strictly better than $1/2$.
Both algorithms find a solution that cuts at least $1-O(\sqrt{\eps})$ fraction of edges when the optimal solution cuts at least $1-O(\eps)$ fraction of edges.
Using a similar method as in the proof of \autoref{t:main},
we generalize the spectral algorithm in~\cite{trevisan09} for the maximum cut problem to obtain a better approximation guarantee when $\l_{n-k}$ is small for a small $k$.

\begin{theorem}
\label{thm:maxcut-intro}
There is a polynomial time algorithm that on input graph $G$ finds a cut $(S,\overline{S})$ such that if the optimal solution cuts at least $1-\eps$ fraction of the edges, 
then $(S,\overline{S})$ cuts at least 
$$1-O(k)\log\Big(\frac{2-\lambda_{n-k}}{k\eps}\Big)\frac{\eps}{2-\lambda_{n-k}}$$ 
fraction of edges. 
\end{theorem}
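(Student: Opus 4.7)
The plan is to mirror the strategy of Theorem 1.3 on the ``large-eigenvalue'' side of the spectrum, where Trevisan's bipartiteness theory lives. For a vector $f$, recall the bipartiteness Rayleigh quotient
\[
\beta(f) \;=\; \frac{\sum_{uv\in E}(f(u)+f(v))^2}{d\sum_v f(v)^2},
\]
which equals $2-\lambda$ on an eigenvector of eigenvalue $\lambda$; in particular $\min_f \beta(f)=2-\lambda_n$. Trevisan's algorithm performs a two-sided threshold rounding of $f=g_n/\sqrt{d}$ (where $g_n$ is the top eigenvector of $\cL$) and returns a cut whose fraction of uncut edges is $O(\sqrt{\beta(f)})$, giving a defect of $O(\sqrt{\eps})$ when the optimum is $1-\eps$. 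Our goal is to replace $\sqrt{\eps}$ by roughly $\eps/(2-\lambda_{n-k})$ at the cost of a logarithmic factor.

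First I would construct an improved test vector $\hat f$ by emulating the truncation argument behind Theorem 1.3, but with the $k$-dimensional ``near-bipartite'' subspace spanned by $g_n,g_{n-1},\dots,g_{n-k+1}$ playing the role of the low-frequency subspace used there. Writing $\gamma_i := 2-\lambda_{n-i+1}$ so that $\gamma_1\leq\dots\leq\gamma_k$, a gap $\gamma_k \gg \gamma_1$ means one can spatially localize $g_n$ while keeping its bipartiteness quotient small. Concretely, the construction should output a vector whose mass is concentrated on a small set, whose bipartiteness Rayleigh quotient is on the order of $\gamma_1$, and whose $\ell_\infty$ norm is controlled by $1/\sqrt{\gamma_k}$, in direct analogy with the sparse-cut construction.

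Next I would plug $\hat f$ into a refined Trevisan-style rounding. The key input is a sharper ``localization'' inequality: whenever $\hat f$ is concentrated, the two-sided threshold rounding produces a cut whose defect is of order $\beta(\hat f)\cdot\|\hat f\|_\infty^2/\|\hat f\|_2^2$ rather than $\sqrt{\beta(\hat f)}$. Combining this with the above bounds on $\beta(\hat f)$ and on the localization of $\hat f$ yields the claimed factor $\eps/(2-\lambda_{n-k})$ up to the $O(k)$ loss from handling the $k$-dimensional subspace.

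The main obstacle I expect is the appearance of the logarithm $\log\bigl((2-\lambda_{n-k})/(k\eps)\bigr)$. I anticipate that this factor comes from a dyadic scale decomposition of the range of $\hat f$: the algorithm cannot identify a priori the correct truncation scale without knowing the optimal bipartition, so it enumerates $O\bigl(\log((2-\lambda_{n-k})/(k\eps))\bigr)$ geometric scales, runs the rounding at each, and returns the best resulting cut. The analysis must then show that at least one of these scales witnesses the desired Rayleigh-quotient/localization tradeoff, and that the logarithmic loss is only paid once via a careful averaging, parallel to the amortization used in the proof of Theorem 1.3. Ensuring that the losses do not compound across scales, and that the bipartiteness quotient is preserved under the truncation on the signed side of the spectrum, will be the technically delicate part of the argument.
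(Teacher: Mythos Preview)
Your proposal misidentifies both the source of the logarithm and the structure of the algorithm. The paper's algorithm is \emph{iterative}, not single-shot: it repeatedly computes the first eigenvector of $\cM_H = I + D^{-1/2}AD^{-1/2}$ on the \emph{remaining} induced subgraph $H$, uses the bipartiteness-ratio analogue of \autoref{t:main} (namely $\beta(f) \leq O(k)\cR(f)/\sqrt{2-\lambda_{n-k}}$, proved via the same step-function approximation as in \autoref{l:lambda_k} and \autoref{l:jump}) to extract a threshold cut $(L',R')$, removes $L'\cup R'$, and recurses. If at step $j$ the remaining graph holds a $\rho_j$ fraction of the edges, then its smallest $\cM$-eigenvalue is at most $2\eps/\rho_j$, so the extracted piece has bipartiteness ratio $O(k\eps/(\rho_j(2-\lambda_{n-k})))$. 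The logarithm comes from summing (integrating) these losses over the decreasing sequence $\rho_j$:
\[
\int_{\Theta(k\eps/(2-\lambda_{n-k}))}^{1} \frac{O(k)\eps}{r(2-\lambda_{n-k})}\,dr \;=\; O(k)\frac{\eps}{2-\lambda_{n-k}}\log\Big(\frac{2-\lambda_{n-k}}{k\eps}\Big).
\]
It has nothing to do with enumerating dyadic truncation scales of a single vector.

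Your single-vector plan cannot work as stated because a two-sided threshold rounding of one function $f$ produces only an \emph{induced} cut $(L,R)$ on $L\cup R\subsetneq V$; the vertices outside $L\cup R$ are unassigned, and the edges among them are not yet accounted for. Handling those residual edges is precisely what forces the recursion and hence the $\log$ factor. Moreover, your description of the proof of \autoref{t:main} via ``$\ell_\infty$ control'' and ``truncation'' is not what the paper does: both the sparsest-cut and the bipartiteness versions are proved by approximating $f$ with a $(2k{+}1)$-step function $g$, bounding $\|f-g\|_w^2 \leq O(\cR(f)/(2-\lambda_{n-k}))$ via the existence of $k$ disjointly supported functions, and then applying a Cheeger-with-Dirichlet-boundary rounding (\autoref{lem:trevisanbipartiteness}) to a reweighted function $h$. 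There is no localization or $\ell_\infty$ argument, and the claimed inequality ``defect $\lesssim \beta(\hat f)\cdot\|\hat f\|_\infty^2/\|\hat f\|_2^2$'' is neither stated nor needed.
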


\subsection{More Related Work} \label{s:related}

{\em Approximating Graph Partitioning Problems:}
Besides spectral partitioning, there are approximation algorithms for the sparsest cut problem based on linear and semidefinite programming relaxations.
There is an LP-based $O(\log n)$ approximation algorithm by Leighton and Rao~\cite{leighton-rao}, and an SDP-based $O(\sqrt{\log n})$ approximation algorithm by Arora, Rao and Vazirani~\cite{arora-rao-vazirani}.
The subspace enumeration algorithm by Arora, Barak and Steurer~\cite{arora-barak-steurer} provides an $O(1/\lambda_k)$ approximation algorithm for the sparsest cut problem with running time $n^{O(k)}$,  by searching for a sparse cut in the $(k-1)$-dimensional eigenspace corresponding to $\lambda_1,\ldots,\lambda_{k-1}$.
It is worth noting that for $k=3$ the subspace enumeration algorithm is exactly the same as the spectral partitioning algorithm. 
Nonetheless, the result in~\cite{arora-barak-steurer} is incomparable to \autoref{t:main} since it does not upper-bound $\phi(G)$ by a function of $\l_2$ and $\l_3$.
Recently, using the Lasserre hierarchy for SDP relaxations, Guruswami and Sinop~\cite{guruswami-sinop2} gave an $O(1/\l_k)$ approximation algorithm for the sparsest cut problem with running time $n^{O(1)}2^{O(k)}$. 
Moreover, the general framework of Guruswami and Sinop~\cite{guruswami-sinop2} applies to other graph partitioning problems including minimum bisection and maximum cut, obtaining approximation algorithms with similar performance guarantees and running times.
This line of recent work is closely related to ours in the sense that it shows that many graph partitioning problems are easier to approximate on graphs with fast growing spectrums, i.e. $\l_k$ is large for a small $k$.
Although their results give much better  approximation guarantees when $k$ is large,  our results show that simple spectral algorithms provide nontrivial performance guarantees.

{\em Higher Eigenvalues of Special Graphs:}
Another direction to show that spectral algorithms work well is to analyze their performance in special graph classes.
Spielman and Teng~\cite{spielman-teng} showed that $\l_2 = O(1/n)$ for a bounded degree planar graph and a spectral algorithm can find a separator of size $O(\sqrt{n})$ in such graphs.
This result is extended to bounded genus graphs by Kelner~\cite{kelner} and to fixed minor free graphs by Biswal, Lee and Rao~\cite{biswal-lee-rao}.
This is further extended to higher eigenvalues by Kelner, Lee, Price and Teng~\cite{kelner-lee-price-teng}: $\l_k = O(k/n)$ for planar graphs, bounded genus graphs, and fixed minor free graphs when the maximum degree is bounded.
Combining with a higher order Cheeger inequality for planar graphs~\cite{lee-gharan-trevisan}, this implies that $\phi_k(G) = O(\sqrt{k/n})$ for bounded degree planar graphs.
We note that these results give mathematical bounds on the conductances of the resulting partitions, but they do not imply that the approximation guarantee of Cheeger's inequality could be improved for these graphs,
neither does our result as these graphs have slowly growing spectrums.

{\em Planted Random Instances, Semi-Random Instances, and Stable Instances:}
We have discussed some previous work on these topics, and we will discuss some relations to our results in \autoref{s:semirandom} and \autoref{s:stable}.

\subsection{Proof Overview}

We start by describing an  informal intuition of the proof of \autoref{t:main} for $k=3$, and 
then we describe how this intuition can be generalized. For a function  $f\in \mathbb{R}^V$,
let $\cR(f)=f^TLf/(d\norm{f}^2)$ be the Rayleigh quotient of $f$ (see (\ref{eq:rayleighquotient}) of \autoref{s:spectral} for the definition in general graphs). Let $f$ be a function that is orthogonal to the constant function and that  $\cR(f)\approx \lambda_2$.

Suppose $\l_2$ is small and $\l_3$ is large.
Then the higher order Cheeger's inequality implies that  there is a partitioning of the graph into two sets of small conductance, but in every partitioning into at least three sets, there is a set of large conductance.
So, we expect the graph to have a sparse cut of which the two parts are expanders; see~\cite{tanaka} for a quantitative statement.
Since $\cR(f)$ is small and  $f$ is orthogonal to the constant function, 
we expect that the vertices in the same expander have similar values in $f$ and the average values of the two expanders are far apart.
Hence, $f$ is similar to a step function with two steps representing a cut, and we expect that $\cR(f) \approx \phi(G)$ in this case.
Therefore, roughly speaking, $\l_3 \gg \l_2$ implies  $\l_2 \approx \phi(G)$.

Conversely, \autoref{t:main} shows that if $\l_2 \approx \phi^2(G)$ then $\l_3 \approx \l_2$.
One way to prove that $\l_2 \approx \l_3$ is to find a function $f'$  of Rayleigh quotient close to $\lambda_2$ such that $f'$ is  orthogonal to both $f$ and the constant function.
For example, if $G$ is a cycle, then  $\l_2 = \Theta(1/n^2)$, $\phi(G) = \Theta(1/n)$,
and $f$ (up to normalizing factors) could represent the cosine function. In this case we may define $f'$ to be the sine function. 
Unfortunately, finding such a function $f'$ in general is not as straightforward. 
Instead, our idea is to find three {\em disjointly supported} functions $f_1,f_2,f_3$ of Rayleigh quotient close to $\lambda_2$. 
As we 
prove in \autoref{c:orthogonal}, 
this would upper-bound $\lambda_3$ by $2\max\{\cR(f_1),\cR(f_2),\cR(f_3)\}$. 
For the cycle example, if $f$ is the cosine function, we may construct $f_1,f_2,f_3$ simply
by first dividing the support of $f$ into three disjoint intervals and then constructing each $f_i$ by
defining a smooth localization of $f$ in one of those intervals. 
To ensure that $\max\{\cR(f_1),\cR(f_2),\cR(f_3)\}\approx \lambda_2$ we need to show that
$f$ is a ``smooth'' function, whose values change continuously. 
We make this rigorous by showing that  if $\l_2 \approx \phi(G)^2$, then the function $f$ must be smooth.
Therefore, we can construct three disjointly supported functions based on $f$ and show that $\l_2 \approx \l_3$.

We provide two proofs of \autoref{t:main}. The first proof generalizes the first observation. We show that if $\lambda_k \gg k\lambda_2$, then $\phi(G)\approx k\lambda_2$.
The main idea is to show that if $\lambda_k \gg k\lambda_2$, then $f$ can be approximated
by a $k$ step function $g$ in the sense that $\norm{f-g} \approx 0$ (in general we show that any function $f$ can be approximated by a $k$ step function $g$  such that any $\norm{f-g}^2 \leq \cR(f)/\lambda_k$).
It is instructive to prove that if $f$ is exactly a $k$-step function then $\phi(G) \leq O(k\cR(f))$. 
Our main technical step, \autoref{l:jump}, provides a robust version of the latter fact by showing that for any $k$-step approximation of $f$, $\phi(f) \leq O(k (\cR(f) + \norm{f-g} \sqrt{\cR(f)}))$.

On the other hand, our second proof generalizes the second observation. Say $\cR(f) \approx \phi(G)^2$. We partition the support of $f$ into disjoint intervals of the form $[2^{-i},2^{-(i+1)}]$,
and we show that the vertices are distributed almost uniformly in most of these intervals
in the sense that if we divide $[2^{-i},2^{-(i+1)}]$ into $k$ equal length subintervals, then we expect to see the same amount of mass in the subintervals. 
This shows that $f$ is a smooth function. 
We then argue that $\lambda_k \lesssim k\lambda_2$, by constructing $k$ disjointly supported
functions each of Rayleigh quotient $O(k^2)\cR(f)$.

\section{Preliminaries} \label{s:prelim}
Let $G=(V,E,w)$ be a finite, undirected graph, with positive weights $w : E \to (0,\infty)$ on the edges.
For a pair of vertices $u,v \in V$, we write $w(u,v)$ for $w(\{u,v\})$.
For a subset of vertices $S \subseteq V$, we write $E(S):=\{ \{u,v\}\in E: u,v\in S\}$. 
For disjoint sets $S,T\subseteq V$, we write $E(S,T) := \{ \{u,v\} \in E : u\in S,v\in T \}$.
For a subset of edges $F \subseteq E$, we write $w(F) = \sum_{e \in F} w(e)$.
We use $u \sim v$ to denote $\{u,v\} \in E$.
We extend the weight to vertices by defining, for a single vertex $v\in V$, $w(v):=\sum_{u\sim v} w(u,v)$. We can think of $w(v)$ as the weighted degree of vertex $v$.
For the sake of clarity we will assume throughout that $w(v) \geq 1$ for every $v \in V$.  For $S \subseteq V$,
we write $\vol(S) = \sum_{v \in S} w(v)$ to denote the {\em volume} of $S$.

Given a subset $S \subseteq V$, 
we denote the {\em Dirichlet conductance} of $S$ by
$$
\phi(S) := \frac{w(E(S,\overline{S}))}{\min\{\vol(S),\vol(\overline{S})\}}\,.
$$
For a function $f \in \mathbb{R}^V$, and a threshold $t\in \mathbb{R}$, let $V_f(t):=\{v: f(v) \geq t\}$ be a threshold set of $f$.
We let 
$$\phi(f) := \min_{t\in\mathbb{R}} \phi(V_f(t)).$$
be the conductance of the best threshold set of the function $f$, and $V_f(t_\opt)$ be the smaller side
(in volume) of that minimum cut.

For any two thresholds $t_1,t_2 \in {\mathbb R}$, we use
\[[t_1,t_2]:=\{x\in \mathbb{R}: \min\{t_1,t_2\} < x \leq \max\{t_1,t_2\} \}.\]
Note that all intervals are defined to be closed on the larger value and open on the smaller value. 
For an interval $I=[t_1,t_2]\subseteq \mathbb{R}$, we use $\len(I):=|t_1-t_2|$ to denote the length of $I$.
For a function $f \in {\mathbb R}^V$,
we define  $V_f(I):=\{v: f(v) \in I\}$ to denote the vertices within $I$. The volume of an  interval $I$ is defined as $\vol_f(I):=\vol(V_f(I))$. We also abuse the notation and use
 $\vol_f(t) := \vol(V_f(t))$ to denote the volume of the interval $[t,\infty]$. 
We define the {\em support} of $f$, $\supp(f) := \{v:f(v) \neq 0\}$, as the set of vertices with nonzero values in $f$. 
We say two functions $f,g\in \mathbb{R}^V$ are disjointly supported if $\supp(f)\cap \supp(g)=\emptyset$. 

For any $t_1, t_2,\ldots,t_l\in \mathbb{R}$, let $\psi:\mathbb{R}\rightarrow \mathbb{R}$ be defined as
$$\psi_{t_1,\ldots,t_l}(x) = \argmin_{t_i} |x-t_i|.$$
In words, for any $x\in \R$, $\psi_{t_1,\ldots,t_i}(x)$ is the value of $t_i$ closest to $x$.

For $\rho>0$, we say a function $g$ is {\em$\rho$-Lipschitz} w.r.t. $f$, if for all pairs of vertices $u,v\in V$,
$$ |g(u) - g(v)|\leq \rho |f(u)-f(v)|.$$ 

The next inequality follows from the Cauchy-Schwarz inequality and will be useful in our proof.
Let $a_1,\ldots,a_m, b_1,\ldots,b_m \geq 0$. Then,
\begin{equation}
\label{eq:cauchy}
\sum_{i=1}^m \frac{a_i^2}{b_i} \geq \frac{\left(\sum_{i=1}^m a_i\right)^2}{\sum_{i=1}^m b_i}.
\end{equation}

\subsection{Spectral Theory of the Weighted Laplacian} \label{s:spectral}

We write $\ell^2(V, w)$ for the Hilbert space of functions $f : V \to \mathbb R$ with
inner product $$\langle f,g \rangle_{w} := \sum_{v \in V} w(v) f(v) g(v),$$
and norm $\|f\|_{w}^2 = \langle f,f\rangle_{w}$.
We reserve $\langle \cdot , \cdot \rangle$ and $\|\cdot\|$ for the standard
inner product and norm on $\mathbb R^k$, $k \in \mathbb N$ and $\ell^2(V)$.

We consider some operators on $\ell^2(V,w)$.  The adjacency operator
is defined by $A f(v) = \sum_{u \sim v} w(u,v) f(u)$, and the diagonal degree operator by
$D f(v) = w(v) f(v)$.  Then the {\em combinatorial Laplacian} is defined by
$L = D-A$, and the {\em normalized Laplacian} is given by
$$\mathcal L_G := I - D^{-1/2} A D^{-1/2}.$$
Observe that for a $d$-regular unweighted graph, we have $\mathcal L_G = \frac{1}{d} L$.

If $g : V \to \mathbb R$ is a non-zero function and $f = D^{-1/2} g$, then
\begin{eqnarray}
\frac{\langle g, \mathcal L_G \,g\rangle}{\langle g,g\rangle}
= \frac{\langle g, D^{-1/2} L D^{-1/2} g\rangle}{\langle g,g\rangle} 
= \frac{\langle f, L f\rangle}{\langle D^{1/2} f, D^{1/2} f\rangle} 
= \frac{\displaystyle \sum_{u \sim v} w(u,v) |f(u)-f(v)|^2}{\displaystyle \norm{f}_w^2}
=: \cR_G(f) 
\label{eq:rayleighquotient}
\end{eqnarray}
where the latter value is referred to as the {\em Rayleigh quotient of $f$ (with respect to $G$)}.
We drop the subscript of $\cR_G(f)$ when the graph is clear in the context. 

In particular, $\mathcal L_G$ is a positive-definite operator with
eigenvalues $$0 = \lambda_1 \leq \lambda_2 \leq \cdots \leq \lambda_n \leq 2\,.$$
For a connected graph, the first eigenvalue corresponds
to the eigenfunction $g = D^{1/2} f$, where $f$ is any non-zero constant function.
Furthermore, by standard variational principles,
\begin{eqnarray}
\lambda_k &=& \min_{g_1, \ldots, g_k \in \ell^2(V)} \max_{g \neq 0} \left\{ \frac{\langle g, \mathcal L_G\, g\rangle}{\langle g, g\rangle} : g \in \mathrm{span}\{g_1, \ldots, g_k\}\right\} \nonumber \\
&=& \min_{f_1, \ldots, f_k \in \ell^2(V,w)} \max_{f \neq 0}  \left\{ \vphantom{\bigoplus} \cR(f) : f \in \mathrm{span}\{f_1, \ldots, f_k\}\right\}, \label{eq:eigenvar}
\end{eqnarray}
where both minimums are over sets of $k$ non-zero orthogonal functions in the Hilbert spaces $\ell^2(V)$ and $\ell^2(V,w)$, respectively.
We refer to \cite{Chung97} for more background on the spectral theory
of the normalized Laplacian.
The following proposition is proved in~\cite{hoory-linial-wigderson} and will be useful in our proof

\begin{proposition}[Horry, Linial and Widgerson~\cite{hoory-linial-wigderson}] \label{c:two}
There are two disjointly supported functions $f_+, f_-\in \ell^2(V,w)$ such that $f_+ \geq 0$ and $f_- \leq 0$ and $\cR(f_+) \leq \l_2$ and $\cR(f_-) \leq \l_2$.
\end{proposition}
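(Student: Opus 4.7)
I would take $f_+ := \max(f,0)$ and $f_- := \min(f,0)$, where $f \in \ell^2(V,w)$ arises from an eigenvector $g$ of $\cL_G$ with eigenvalue $\lambda_2$ via the rescaling $f = D^{-1/2} g$.  By construction these two functions are disjointly supported, nonnegative and nonpositive respectively, and sum to $f$.  Since $\langle f,\mathbf{1}\rangle_w = 0$ (equivalently, $g$ is orthogonal to the first eigenvector $D^{1/2}\mathbf{1}$) and $f$ is not identically zero, it takes both positive and negative values, so neither $f_+$ nor $f_-$ vanishes.  It remains to show $\cR(f_+), \cR(f_-) \leq \lambda_2$; by the symmetry $f \mapsto -f$ I only need to handle $f_+$.

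The main ingredient I would use is the pointwise eigenvalue equation, which after the rescaling $g = D^{1/2} f$ becomes
$$\sum_{u\sim v} w(u,v)(f(v)-f(u)) = \lambda_2\, w(v)\, f(v) \qquad \text{for every } v\in V.$$
I would combine this with the elementary observation that truncation at zero shrinks differences from above: for every $v \in \supp(f_+)$ and every neighbour $u$, we have $f_+(v) = f(v)$ while $f_+(u) = \max(f(u),0) \geq f(u)$, hence $f_+(v) - f_+(u) \leq f(v) - f(u)$.  Summing this inequality over the neighbours of $v$ and invoking the eigenvalue equation yields the pointwise ``subeigenfunction'' inequality
$$\sum_{u\sim v} w(u,v)\,(f_+(v)-f_+(u)) \;\leq\; \lambda_2\, w(v)\, f_+(v) \qquad \text{for every } v \in \supp(f_+).$$

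To finish, I would multiply both sides by $f_+(v) \geq 0$ and sum over all $v \in V$; vertices outside $\supp(f_+)$ contribute nothing since $f_+$ vanishes there.  The standard integration-by-parts identity
$$\sum_v f_+(v) \sum_{u\sim v} w(u,v)(f_+(v)-f_+(u)) \;=\; \sum_{\{u,v\}\in E} w(u,v)(f_+(u)-f_+(v))^2$$
would then deliver $\sum_{\{u,v\}\in E} w(u,v)(f_+(u)-f_+(v))^2 \leq \lambda_2 \|f_+\|_w^2$, i.e., $\cR(f_+) \leq \lambda_2$, and applying the same argument to $-f$ handles $f_-$.  The only genuinely subtle point is that the pointwise subeigenfunction inequality holds only on $\supp(f_+)$ rather than on all of $V$; but this is exactly why one multiplies by $f_+(v)$ before summing, so that the off-support vertices are absorbed harmlessly.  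I do not expect any obstacle beyond carefully verifying the sign direction of the truncation inequality on $\supp(f_+)$.
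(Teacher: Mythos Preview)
Your proposal is correct and essentially identical to the paper's proof: the paper truncates the eigenfunction $g$ of $\mathcal L_G$, shows the pointwise inequality $(\mathcal L g_+)(u)\le \lambda_2\, g(u)$ on $\supp(g_+)$, then takes the inner product with $g_+$ and transfers to $f_+ = D^{-1/2}g_+$, which is exactly your argument rewritten in the $g$-coordinates (note that $D^{-1/2}(g_+) = (D^{-1/2}g)_+$, so your $f_+$ and the paper's coincide). Your explicit remark that the subeigenfunction inequality need only hold on $\supp(f_+)$ because one multiplies by $f_+(v)$ before summing is precisely the mechanism the paper uses when it restricts the sum $\langle g_+, \mathcal L g_+\rangle$ to $u\in\supp(g_+)$.
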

\begin{proof}
Let $g \in \ell^2(V)$ be the second eigenfunction of $\L$.
Let $g_+ \in \ell^2(V)$ be the function with $g_+(u) = \max\{g(u),0\}$ and $g_- \in \ell^2(V)$ be the function with $g_-(u) = \min\{g(u),0\}$.
Then, for any vertex $u \in \supp(g_+)$, 
\[(\L g_+)(u) = g_+(u) - \sum_{v:v\sim u} \frac{w(u,v)g_+(v)}{\sqrt{w(u) w(v)}} \leq g(u) - \sum_{v:v\sim u} \frac{w(u,v)g(v)}{\sqrt{w(u) w(v)}} = (\L g)(u) = \l_2 \cdot g(u).\]
Therefore, 
\[
\langle g_+,\L g_+\rangle = \sum_{u \in \supp(g_+)} g_+(u) \cdot (\L g_+)(u) \leq \sum_{u \in \supp(g_+)} \l_2 \cdot g_+(u)^2 = \l_2 \cdot \norm{g_+}^2.\]
Letting $f_+ = D^{-1/2} g_+$, we get
\[\l_2 \geq \frac{\langle g_+, \L g_+\rangle}{\norm{g_+}^2}
= \frac{\langle f_+, L f_+\rangle }{\norm{f_+}_w^2} = \cR(f_+).\]
Similarly, we can define $f_- = D^{-1/2}g_-$, and show that $\cR(f_-) \leq \l_2$.
\end{proof}

By choosing either of $f_+$ or $f_-$ that has a smaller (in volume) support, and taking
a proper normalization, we get the following corollary. 
\begin{corollary}
\label{cor:smallsuppfunction}
There exists a function $f\in \ell^2(V,w)$ such that $f\geq 0$, $\cR(f)\leq \lambda_2$, 
$\supp(f)\leq \vol(V)/2$, and $\norm{f}_w=1$.
\end{corollary}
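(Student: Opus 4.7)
The proof is essentially a direct packaging of Proposition \ref{c:two} together with two invariance observations for the Rayleigh quotient. I would proceed as follows.

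First, I would invoke Proposition \ref{c:two} to obtain two disjointly supported functions $f_+, f_- \in \ell^2(V,w)$ with $f_+ \geq 0$, $f_- \leq 0$, and $\cR(f_+), \cR(f_-) \leq \lambda_2$. Since $\supp(f_+) \cap \supp(f_-) = \emptyset$, their supports partition into disjoint subsets of $V$, so
\[\vol(\supp(f_+)) + \vol(\supp(f_-)) \leq \vol(V),\]
and consequently at least one of the two supports has volume at most $\vol(V)/2$.

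Next, I would select the function whose support has volume at most $\vol(V)/2$; call it $h$. If it happens to be $f_-$ (nonpositive), I replace it by $-h$, which is now nonnegative, has the same support, and has the same Rayleigh quotient since $\cR$ depends only on squared differences $|f(u)-f(v)|^2$ and on $\|f\|_w^2$, both of which are invariant under sign flip. Thus without loss of generality, $h \geq 0$, $\cR(h) \leq \lambda_2$, and $\vol(\supp(h)) \leq \vol(V)/2$.

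Finally, I would set $f := h/\|h\|_w$, which is well-defined since $h$ is a nonzero function (it is a restriction of an eigenfunction to one sign-class, and eigenfunctions orthogonal to the constants are both positive and negative somewhere). Scaling does not change $\cR$ nor the support, so $f \geq 0$, $\|f\|_w = 1$, $\vol(\supp(f)) \leq \vol(V)/2$, and $\cR(f) \leq \lambda_2$. There is no real obstacle here; the only subtlety is confirming the choice is well-defined (the chosen $f_+$ or $f_-$ is not identically zero), which follows from the fact that the second eigenfunction $g$ takes both positive and negative values—being orthogonal to the constant function $D^{1/2}\mathbf{1}$ in $\ell^2(V)$.
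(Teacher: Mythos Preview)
Your proposal is correct and follows exactly the paper's approach: the paper simply says to choose whichever of $f_+$ or $f_-$ from Proposition~\ref{c:two} has the smaller-volume support and then normalize. You have merely filled in the details (sign flip for $f_-$, invariance of $\cR$ under scaling and negation, and nonvanishing of the chosen function), all of which are straightforward and correct.
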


Instead of directly  upper bounding $\l_k$ in the proof of \autoref{t:main}, we will construct $k$ disjointly supported functions with small Rayleigh quotients.
In the next lemma we show that by the variational principle this gives an upper-bound on $\lambda_k$. 

\begin{lemma}
\label{c:orthogonal}
For any $k$ disjointly supported functions $f_1,f_2,\ldots,f_k\in\ell^2(V,w)$, we have
$$ \lambda_k \leq 2\max_{1\leq i\leq k} \cR(f_i).$$
\end{lemma}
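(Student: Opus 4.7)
The plan is to invoke the variational principle for $\lambda_k$ displayed in \eqref{eq:eigenvar}: it suffices to exhibit $k$ pairwise orthogonal nonzero functions $g_1,\ldots,g_k \in \ell^2(V)$ such that every nonzero $g$ in their span satisfies $\langle g,\mathcal L_G g\rangle/\langle g,g\rangle \leq 2\max_i \cR(f_i)$. The natural choice is $g_i := D^{1/2}f_i$. Since the $f_i$ have pairwise disjoint supports, so do the $g_i$, and hence they are automatically orthogonal in $\ell^2(V)$. Moreover, an arbitrary element of the span can be written as $g=\sum_i c_i g_i = D^{1/2}f$ where $f := \sum_i c_i f_i$, and by the identity \eqref{eq:rayleighquotient} we have $\langle g,\mathcal L_G g\rangle/\langle g,g\rangle = \cR(f)$. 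So the lemma reduces to the inequality $\cR(f) \leq 2\max_i \cR(f_i)$ for every such $f$.

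For the denominator, disjoint supports immediately give $\|f\|_w^2 = \sum_i c_i^2 \|f_i\|_w^2$. The heart of the argument is the numerator $\sum_{u\sim v} w(u,v)(f(u)-f(v))^2$. I would split the edges into three cases according to the supports of $u,v$: (i) $u,v$ lie in the same $\supp(f_i)$ (or one of them lies outside all supports and the other in some single $\supp(f_i)$), in which case the contribution is exactly $c_i^2(f_i(u)-f_i(v))^2$; (ii) $u \in \supp(f_i)$ and $v \in \supp(f_j)$ with $i\neq j$, which is the only nontrivial case; (iii) both $u,v$ outside all supports, contributing $0$. For a case-(ii) edge, using $f_j(u)=f_i(v)=0$ and the elementary inequality $(a-b)^2 \leq 2a^2 + 2b^2$, I would bound
\[
(c_i f_i(u) - c_j f_j(v))^2 \;\leq\; 2 c_i^2 f_i(u)^2 + 2 c_j^2 f_j(v)^2 \;=\; 2c_i^2(f_i(u)-f_i(v))^2 + 2c_j^2(f_j(v)-f_j(u))^2,
\]
so that the edge is charged at most twice to the per-component numerator.

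Summing over all edges then gives
\[
\sum_{u\sim v} w(u,v)(f(u)-f(v))^2 \;\leq\; 2\sum_{i=1}^k c_i^2 \sum_{u\sim v} w(u,v)(f_i(u)-f_i(v))^2 \;=\; 2\sum_{i=1}^k c_i^2 \,\cR(f_i)\,\|f_i\|_w^2.
\]
Combining with the denominator identity, $\cR(f) \leq \frac{2\sum_i c_i^2 \cR(f_i)\|f_i\|_w^2}{\sum_i c_i^2 \|f_i\|_w^2} \leq 2\max_i \cR(f_i)$, as a convex combination. Plugging this back into the variational formulation yields $\lambda_k \leq 2\max_i \cR(f_i)$.

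The only delicate point is the cross-support edges, which is precisely where the factor of $2$ in the statement comes from; without disjointness one could not apply the $(a-b)^2\leq 2a^2+2b^2$ bound edge-by-edge, and without the inequality $(a-b)^2\leq 2a^2+2b^2$ one would not get a clean bound on the Rayleigh numerator of the sum in terms of the Rayleigh numerators of the summands. Everything else (orthogonality, the denominator decomposition, and the final max-over-convex-combination step) is bookkeeping.
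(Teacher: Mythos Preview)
Your proof is correct and follows essentially the same approach as the paper: both invoke the variational characterization \eqref{eq:eigenvar}, use disjointness of supports to obtain orthogonality and to decompose the denominator, and bound the numerator via the pointwise inequality $|f(u)-f(v)|^2 \leq 2\sum_i |c_i f_i(u)-c_i f_i(v)|^2$, which is exactly your case-(ii) estimate. The only cosmetic differences are that the paper absorbs the coefficients $c_i$ into the $f_i$ up front (using scale-invariance of $\cR$) and states the edgewise inequality without spelling out the three cases, whereas you track the $c_i$ explicitly and give the case analysis in full.
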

\begin{proof}
By equation \eqref{eq:eigenvar}, it is sufficient to show that for any function $h\in \sspan\{f_1,\ldots,f_k\}$, $\cR(h)\leq \max_i \cR(f_i)$. 
Note that $\cR(f_i) = \cR(cf_i)$ for any constant $c$,
so we can assume $h:=\sum_{i=1}^k f_i$. 
Since $f_1,\ldots,f_k$ are disjointly supported, for any $u,v\in V$, we have
$$ |h(u)-h(v)|^2 \leq  \sum_{i=1}^k 2|f_i(u) - f_i(v)|^2.$$
Therefore,
\begin{eqnarray*} \cR(h) = \frac{\sum_{u\sim v} w(u,v)|h(u) - h(v)|^2}{\norm{h}_w^2}& \leq& \frac{2\sum_{u\sim v} \sum_{i=1}^k w(u,v)|f_i(u) - f_i(v)|^2}{\norm{h}_w^2} \\
&=& \frac{2\sum_{i=1}^k\sum_{u\sim v} w(u,v)|f_i(u)-f_i(v)|^2 }{ \sum_{i=1}^k \norm{f_i}_w^2}
 \leq 2\max_{1\leq i\leq k} \cR(f_i).
\end{eqnarray*}
\end{proof}


\subsection{Cheeger's Inequality with Dirichlet Boundary Conditions}

Many variants of the following lemma are known; see, e.g. \cite{Chung96}.

\begin{lemma}
\label{lem:dirichletcheeger}
For every  non-negative $h\in \ell^2(V,w)$ such that $\supp(h)\leq \vol(V)/2$, the following holds
$$\phi(h) \leq \frac {\sum_{u\sim v} w(u,v)|h(v) - h(u)| }{\sum_v w(v)h(v) }.$$
\end{lemma}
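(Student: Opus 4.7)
The plan is the classical threshold-rounding (sweep) argument applied to $h$ itself: draw a threshold $t$ at random, analyze the expected numerator and denominator of $\phi(V_h(t))$ separately, and then invoke the standard ratio-averaging principle. Since $h$ is non-negative, every level set $V_h(t) = \{v : h(v) \geq t\}$ with $t > 0$ is contained in $\supp(h)$, and by the volume hypothesis $\vol(V_h(t)) \leq \vol(\supp(h)) \leq \vol(V)/2$. Hence $\min\{\vol(V_h(t)), \vol(\overline{V_h(t)})\} = \vol(V_h(t))$, so $\phi(V_h(t)) = w(E(V_h(t), \overline{V_h(t)}))/\vol(V_h(t))$ throughout the range of candidate thresholds.

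Second, let $M := \max_v h(v)$ and draw $t$ uniformly from $(0, M]$. For an edge $\{u,v\}$ with $h(u) \leq h(v)$, the edge is cut by $V_h(t)$ exactly when $t \in (h(u), h(v)]$, an event of probability $|h(u)-h(v)|/M$. Likewise, $v \in V_h(t)$ with probability $h(v)/M$. Linearity of expectation then gives
$$\mathbb{E}\bigl[w(E(V_h(t), \overline{V_h(t)}))\bigr] = \tfrac{1}{M}\sum_{u \sim v} w(u,v)\,|h(u)-h(v)|, \qquad \mathbb{E}\bigl[\vol(V_h(t))\bigr] = \tfrac{1}{M}\sum_v w(v)\,h(v).$$

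Third, the denominator is strictly positive for every $t \in (0, M]$, since the vertex realizing $M$ always lies in $V_h(t)$. I then invoke the standard ratio-averaging principle: if $X, Y \geq 0$ are random variables with $\mathbb{E}[Y] > 0$, then some outcome satisfies $X/Y \leq \mathbb{E}[X]/\mathbb{E}[Y]$ (otherwise $X > (\mathbb{E}[X]/\mathbb{E}[Y])\,Y$ almost surely, which after integrating yields $\mathbb{E}[X] > \mathbb{E}[X]$, a contradiction). This produces a threshold $t^\star$ such that $\phi(V_h(t^\star))$ is bounded by the target ratio, with the $1/M$ factors cancelling in numerator and denominator. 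Since $\phi(h) = \min_t \phi(V_h(t))$, the lemma follows. There is no genuine obstacle in this proof; the only point of care is verifying that on the relevant range of thresholds $\phi(V_h(t))$ is actually equal to cut-over-volume-of-the-smaller-side, which is precisely what the support hypothesis $\vol(\supp(h)) \leq \vol(V)/2$ is designed to guarantee. One may equivalently phrase the argument deterministically by observing that the ratio $\phi(V_h(t))$ changes only at the finitely many values $t = h(v)$, and selecting the minimizer directly among those candidate sweep cuts.
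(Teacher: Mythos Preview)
Your proof is correct and is essentially the same threshold-rounding argument as the paper's: pick $t$ uniformly at random in $(0,M]$, compute the expected cut weight and expected volume by linearity, and use the ratio-averaging principle to find a good threshold, with the support hypothesis ensuring $V_h(t)$ is always the smaller side. The paper normalizes so that $M=1$ rather than carrying the $1/M$ factors, but otherwise the proofs are identical.
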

\begin{proof}
Since the right hand side is homogeneous in $h$, we may assume that $\max_v h(v)\leq 1$.
Let $0 < t\leq 1$ be chosen uniformly at random. 
Then, by linearity of expectation,
$$ \frac{\E{w(E(V_h(t),\overline{V_h(t)}))}}{\E{\vol(V_h(t))}} = \frac{ \sum_{u\sim v} w(u,v) |h(u) - h(v)| }{\sum_v w(v) h(v)}.$$
This implies that there exists a $0< t\leq 1$ such that $\phi(V_h(t))\leq \frac {\sum_{u\sim v} w(u,v) |h(v) - h(u)| }{\sum_v w(v) h(v) }$. 
The latter holds since for any $t>0$, $\vol(V_h(t)) \leq \vol(V)/2$.
\end{proof}

\subsection{Energy Lower Bound}

We define the {\em energy} of a function $f\in \ell^2(V,w)$ as
\[\cE_f := \sum_{u\sim v} w(u,v) |f(u) - f(v)|^2.\]
Observe that $\cR(f)=\cE_f/\norm{f}_w^2$.
We also define the energy of $f$ {\em restricted} to an interval $I$ as follows:
\[\cE_f(I) := \sum_{u\sim v} w(u,v)\len(I \cap [f(u),f(v)])^2.\]
When the function $f$ is clear from the context we drop the subscripts from the above definitions.

The next fact shows that  by restricting the energy of $f$ to disjoint intervals we may only decrease the energy.
\begin{fact}
\label{f:additivity}
For any set of disjoint intervals $I_1, \ldots, I_m$, we have
 $$ \cE_f \geq \sum_{i=1}^m \cE_f(I_i).$$
\end{fact}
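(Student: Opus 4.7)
The plan is to reduce the claim to an edge-by-edge inequality, since both sides of the desired inequality are weighted sums indexed by edges $u\sim v$ with the same weights $w(u,v)$. It therefore suffices to show that for every ordered pair $u,v$,
$$|f(u)-f(v)|^2 \;\geq\; \sum_{i=1}^m \len\bigl(I_i \cap [f(u),f(v)]\bigr)^2.$$

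The key observation is that since $I_1,\ldots,I_m$ are disjoint intervals of $\mathbb{R}$, the sets $J_i := I_i \cap [f(u),f(v)]$ are disjoint subintervals of $[f(u),f(v)]$. Setting $a_i := \len(J_i)\geq 0$, disjointness gives
$$\sum_{i=1}^m a_i \;=\; \sum_{i=1}^m \len(J_i) \;\leq\; \len\bigl([f(u),f(v)]\bigr) \;=\; |f(u)-f(v)|.$$
Next I would invoke the elementary inequality $\sum_i a_i^2 \leq \bigl(\sum_i a_i\bigr)^2$, which holds for any nonnegative reals because the cross terms $2\sum_{i<j} a_i a_j$ are nonnegative. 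Combining this with the previous display yields
$$\sum_{i=1}^m \len(J_i)^2 \;\leq\; \Bigl(\sum_{i=1}^m \len(J_i)\Bigr)^2 \;\leq\; |f(u)-f(v)|^2,$$
which is exactly the per-edge inequality.

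Finally, I would multiply the per-edge inequality by $w(u,v)$ and sum over all edges $u\sim v$. The left hand side becomes $\sum_{u\sim v} w(u,v)|f(u)-f(v)|^2 = \cE_f$, while the right hand side becomes $\sum_i \sum_{u\sim v} w(u,v)\len(I_i\cap [f(u),f(v)])^2 = \sum_{i=1}^m \cE_f(I_i)$, completing the proof.

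There is essentially no main obstacle here; the result is a routine consequence of disjointness plus the trivial inequality $\sum a_i^2 \leq (\sum a_i)^2$. The only small point to verify carefully is that the half-open interval convention used in the paper (closed at the larger endpoint, open at the smaller) is preserved under intersection with $[f(u),f(v)]$, so that the $J_i$ remain genuinely disjoint and the length-additivity bound $\sum_i \len(J_i) \leq \len([f(u),f(v)])$ is valid without any boundary double-counting.
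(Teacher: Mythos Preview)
Your proposal is correct and follows essentially the same approach as the paper: reduce to the per-edge inequality $|f(u)-f(v)|^2 \geq \sum_i \len(I_i\cap[f(u),f(v)])^2$, then sum over edges. The paper's proof is a one-line chain that leaves this per-edge inequality unjustified, whereas you spell it out via $\sum a_i^2 \le (\sum a_i)^2$ and disjointness; this is just a more detailed version of the same argument.
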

\begin{proof}
\[\cE_f = \sum_{u\sim v} w(u,v)|f(u) - f(v)|^2 \geq \sum_{u\sim v} \sum_{i=1}^m w(u,v)\len(I_i \cap [f(u),f(v)])^2 = \sum_{i=1}^m \cE_f(I_i).
\]
\end{proof}

The following is the key lemma to lower bound the energy of a function  $f$.
It shows that a long interval with small volume must have a significant contribution
to the energy of $f$. 

\begin{lemma} \label{l:drop}
For any non-negative function $f\in \ell^2(V,w)$ with $\vol(\supp(f)) \leq \vol(V)/2$,
for any interval $I=[a,b]$ with $a > b \geq 0$, we have
\[\cE(I) \geq \frac{\phi^2(f) \cdot \vol_f^2(a) \cdot \len^2(I)}{\phi(f) \cdot \vol_f(a) + \vol_f(I)}.\]
\end{lemma}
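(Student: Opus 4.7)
The plan is to apply the Cauchy--Schwarz inequality~\eqref{eq:cauchy} after splitting the edges that contribute to $\cE(I)$ into those with an endpoint inside $V_f(I)$ and those that span $I$ entirely.

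First I would truncate $f$ to $I$: define $g(v) := \min(a,\max(b,f(v))) - b$, so that $g(v)\in[0,\len(I)]$, $|g(u)-g(v)| = \len(I\cap[f(u),f(v)])$, and therefore $\cE(I) = \sum_{u\sim v} w(u,v)\,|g(u)-g(v)|^2$. Let $E_1$ be the edges with $|g(u)-g(v)|>0$ and at least one endpoint in $V_f(I)$, and let $E_2$ consist of the remaining contributing edges; each $uv\in E_2$ has one endpoint with $f$-value strictly above $a$ and the other with $f$-value at most $b$, so $|g(u)-g(v)|=\len(I)$. Write $A := \sum_{E_1} w(u,v)\,|g(u)-g(v)|$ and $B := \sum_{E_2} w(u,v)\,|g(u)-g(v)| = \len(I)\cdot w(E_2)$.

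A standard co-area argument together with the definition of $\phi(f)$ then gives
\[A+B \;=\; \int_b^a w(E(V_f(t),\overline{V_f(t)}))\,dt \;\ge\; \phi(f)\int_b^a \vol_f(t)\,dt \;\ge\; \phi(f)\,\vol_f(a)\,\len(I),\]
where I use $\vol(V_f(t))\le \vol(\supp(f))\le \vol(V)/2$ (so the smaller side of the cut at any threshold $t>0$ is $V_f(t)$) and $\vol_f(t)\ge \vol_f(a)$ for $t\in I$. Applying~\eqref{eq:cauchy} separately to the two pieces of $\cE(I)$, and using $w(E_1)\le \vol_f(I)$ since each $E_1$-edge has an endpoint in $V_f(I)$,
\[\cE(I) \;\ge\; \frac{A^2}{w(E_1)} + \len^2(I)\,w(E_2) \;\ge\; \frac{A^2}{\vol_f(I)} + \frac{B^2}{w(E_2)} \;\ge\; \frac{(A+B)^2}{\vol_f(I)+w(E_2)} \;\ge\; \frac{\phi^2(f)\,\vol_f^2(a)\,\len^2(I)}{\vol_f(I)+w(E_2)}.\]
If $w(E_2)\le \phi(f)\,\vol_f(a)$ the lemma follows immediately. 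Otherwise the $E_2$-contribution alone is already enough: $\cE(I) \ge \len^2(I)\,w(E_2) > \phi(f)\,\vol_f(a)\,\len^2(I) \ge \phi^2(f)\,\vol_f^2(a)\,\len^2(I)\big/(\phi(f)\,\vol_f(a)+\vol_f(I))$.

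The main difficulty is that $\phi(f)\,\vol_f(a)$ is only a \emph{lower} bound on $w(E(V_f(a),\overline{V_f(a)}))$, so the weight $w(E_2)$ of interval-spanning edges cannot be universally absorbed into a single Cauchy--Schwarz denominator. The small case split above sidesteps this: when those edges are heavy, their raw contribution $\len^2(I)\,w(E_2)$ to the energy already dominates the bound we want.
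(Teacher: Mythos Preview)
Your proof is correct and follows essentially the same approach as the paper's: split the edges crossing $I$ into those with an endpoint in $V_f(I)$ (your $E_1$, the paper's $E'\setminus F$) versus those spanning $I$ entirely (your $E_2$, the paper's $F$), use the co-area bound to get $\sum w(u,v)\len(I\cap[f(u),f(v)])\ge \phi(f)\,\vol_f(a)\,\len(I)$, apply Cauchy--Schwarz, and case-split on the weight of the spanning edges. The only cosmetic difference is that the paper encodes the case split as an optimization over a parameter $\beta = \phi(f)\vol_f(a)/(\phi(f)\vol_f(a)+\vol_f(I))$, whereas you split directly on whether $w(E_2)\le \phi(f)\vol_f(a)$ and use~\eqref{eq:cauchy} to merge the two pieces; the resulting bounds coincide.
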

\begin{proof}
Since $f$ is non-negative with $\vol(\supp(f)) \leq \vol(V)/2$,
by the definition of $\phi(f)$,
the total weight of the edges going out the threshold set $V_f(t)$ is at least  $\phi(f) \cdot \vol_f(a)$, for any $a \geq t \geq b \geq 0$.
Therefore, by summing over these threshold sets, we have
\[\sum_{u\sim v} w(u,v)\len(I \cap [f(u),f(v)]) \geq \len(I) \cdot \phi(f) \cdot \vol_f(a).\]
Let $E':=\{\{u,v\}: \len(I \cap [f(u),f(v)])>0\}$ be the set of edges with nonempty intersection with the interval $I$.  
Let $\beta \in (0,1)$ be a parameter to be fixed later.
Let $F\subseteq E'$ be the set of edges of $E'$ that are not adjacent to any of the vertices in $I$. If $w(F)\geq \beta w(E')$, then
\[\cE(I) \geq w(F) \cdot \len(I)^2 \geq \beta \cdot w(E') \cdot \len(I)^2 \geq \beta \cdot \phi(f) \cdot \vol_f (a) \cdot \len(I)^2.\]
Otherwise, $\vol_f(I) \geq (1-\beta)w(E')$. 
Therefore, by the Cauchy Schwarz inequality (\ref{eq:cauchy}), we have
\begin{eqnarray*}
\cE(I) = \sum_{\{u,v\} \in E'} w(u,v)(\len(I \cap [f(u),f(v)]))^2 
&\geq& \frac{\big(\sum_{\{u,v\} \in E'} w(u,v)\len(I \cap [f(u),f(v)])\big)^2}{w(E')}\\
&\geq& \frac{(1-\beta)\len(I)^2 \cdot \phi(f)^2 \cdot \vol_f^2(a)}{\vol_f(I)}.
\end{eqnarray*}
Choosing $\beta=(\phi(f) \cdot \vol_f(a))/(\phi(f) \cdot \vol_f(a) + \vol_f(I))$ such that the above two terms are equal gives the lemma.
\end{proof}

We note that \autoref{l:drop} can be used to give a new proof of Cheeger's inequality with a weaker constant; see \autoref{s:cheeger}.

\section{Analysis of Spectral Partitioning}

Throughout this section we assume that $f\in \ell^2(V,w)$ is a non-negative function of norm $\norm{f}_w^2=1$ such that $\cR(f)\leq \lambda_2$ and $\vol(\supp(f))\leq \vol(V)/2$.  The existence of this function follows from \autoref{cor:smallsuppfunction}. In \autoref{subsec:cuhk}, we give our first proof of
\autoref{t:main} which is based on the idea of approximating $f$ by a $2k+1$ step function $g$. 
Our second proof is given in \autoref{subsec:stanford}.

\subsection{First Proof} \label{subsec:cuhk}

We say a function $g\in \ell^2(V,w)$ is a $l$-step approximation of $f$, if
there exist $l$ thresholds $0=t_0\leq t_1\leq \ldots\leq t_{l-1}$ such that for every vertex $v$,
$$ g(v) = \psi_{t_0,t_1,\ldots,t_{l-1}}(f(v)).$$ 
In words, $g(v)=t_i$ if $t_i$ is the closest threshold to $f(v)$; see \autoref{f:step} for an example.

\begin{figure*}[h] 
	\center
	\includegraphics[scale=0.6]{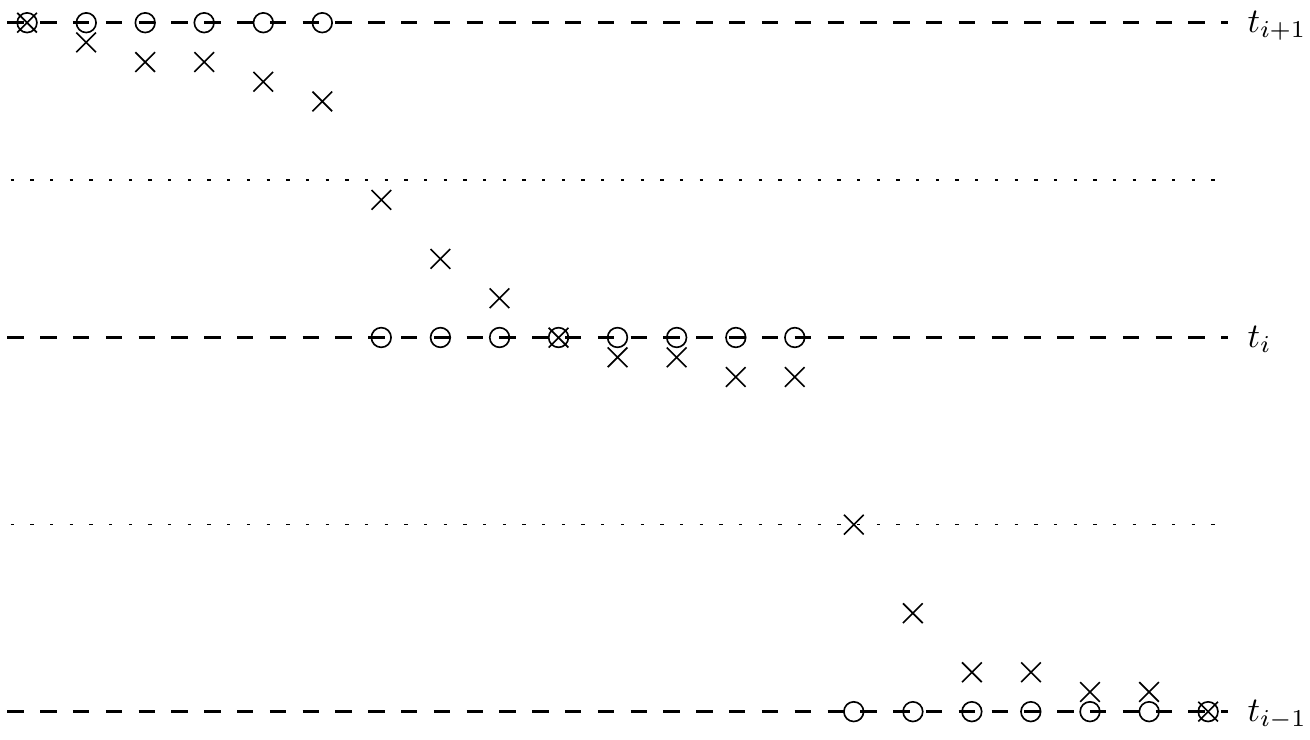}
	\caption{
The crosses denote the values of function $f$, and the circles denote the values of function $g$.
	}
\label{f:step}
\end{figure*}

We show that if there is a large gap between $\l_2$ and $\l_k$,
then the function $f$ is well approximated by a step function $g$ with at most $2k+1$ steps.
Then we define an appropriate $h$ and apply \autoref{lem:dirichletcheeger} to get a lower bound on the energy of $f$ in terms of $\norm{f-g}_w^2$.
One can think of $h$ as a probability distribution function on the threshold sets, and we will define $h$ in such a way that the threshold sets that are further away from the thresholds $t_0, t_1, \ldots, t_{2k}$ have higher probability.

\subsubsection*{Approximating $f$ by a $2k+1$ Step Function}
In the next lemma we show that if there is a large gap between $\cR(f)$ and $\lambda_k$, then there is a $2k+1$-step function $g$ such that
$\norm{f-g}_w^2 = O(\cR(f)/\lambda_k)$.

\begin{lemma} \label{l:lambda_k}
There exists a $2k+1$-step approximation of $f$, call $g$, such that
\begin{equation}
\label{eq:fgclose}
\norm{f-g}_w^2 \leq \frac{4\cR(f)}{\lambda_k}.
\end{equation}
\end{lemma}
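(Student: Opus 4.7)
My plan is to argue by contradiction using Lemma~\ref{c:orthogonal}. Suppose every $(2k+1)$-step approximation $g$ of $f$ satisfies $\norm{f-g}_w^2 > 4\cR(f)/\lambda_k$. I will construct $k$ disjointly supported functions $f_1,\dots,f_k \in \ell^2(V,w)$ each with $\cR(f_j) < \lambda_k/2$; then Lemma~\ref{c:orthogonal} forces $\lambda_k \le 2\max_j\cR(f_j) < \lambda_k$, a contradiction.

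For $2k+1$ thresholds $0 = t_0 < t_1 < \cdots < t_{2k}$ (to be chosen below), let $g$ be the associated step approximation, and for $j = 1,\dots,k$ define the tent
\[
f_j(v) := \max\bigl(0,\ \min\bigl(f(v)-t_{2j-2},\ t_{2j}-f(v)\bigr)\bigr),
\]
supported on the slab $\{v : t_{2j-2} < f(v) < t_{2j}\}$. These slabs are pairwise disjoint, so the $f_j$'s are disjointly supported, and since each $f_j$ is a $1$-Lipschitz function of $f$ we get $\cE_{f_j} \le \cE_f = \cR(f)$. A short case analysis against the three candidate nearest thresholds $t_{2j-2}, t_{2j-1}, t_{2j}$ shows $f_j(v) \ge |f(v)-g(v)|$ for every $v$ in the slab of $f_j$; consequently
\[
\norm{f_j}_w^2 \;\ge\; E_j \;:=\; \sum_{v:\,t_{2j-2}<f(v)\le t_{2j}} w(v)\bigl(f(v)-g(v)\bigr)^2,
\]
and summing gives $\sum_j \norm{f_j}_w^2 \ge \norm{f-g}_w^2$ (after extending $t_{2k}$ to cover $\max_v f(v)$, which is WLOG).

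The main obstacle is upgrading the aggregate inequality to a \emph{per-slab} bound $\norm{f_j}_w^2 > 2\cR(f)/\lambda_k$, which is what yields $\cR(f_j) = \cE_{f_j}/\norm{f_j}_w^2 < \lambda_k/2$. Plain pigeonholing gives only $\max_j \norm{f_j}_w^2 \ge 4\cR(f)/(k\lambda_k)$, too weak for $k \ge 3$. To close the gap I would pick the thresholds adaptively: while some slab $f_{j^\ast}$ has $\norm{f_{j^\ast}}_w^2 \le 2\cR(f)/\lambda_k$, remove a threshold bordering it and insert a new one inside the slab currently carrying the largest error. Each swap preserves the count of $2k+1$ thresholds while strictly increasing a suitable potential (e.g.\ the sorted vector of slab norms in lexicographic order); the bound $\sum_j \norm{f_j}_w^2 \le \norm{f}_w^2 = 1$ guarantees termination. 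At termination either every slab meets the per-slab bound (producing the contradiction) or the rebalancing yields a step approximation $g$ with $\norm{f-g}_w^2 \le 4\cR(f)/\lambda_k$ (contradicting the initial hypothesis). A cleaner alternative I would also try is to draw the thresholds at random from the $f$-pushforward of the vertex measure and apply a second-moment computation to show every slab is simultaneously large with positive probability.
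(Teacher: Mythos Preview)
Your overall architecture matches the paper's: assume no good $(2k+1)$-step approximation exists, build disjointly supported ``tent'' functions from $f$, and invoke Lemma~\ref{c:orthogonal} for a contradiction. The gap is in the quantitative step. By using only $k$ tents and the per-function bound $\cE_{f_j}\le \cE_f$, you are forced to require $\|f_j\|_w^2>2\cR(f)/\lambda_k$ for \emph{every} $j$, i.e.\ a total tent mass exceeding $2k\cR(f)/\lambda_k$; but the contradiction hypothesis only guarantees $\sum_j\|f_j\|_w^2\ge\|f-g\|_w^2>4\cR(f)/\lambda_k$, which is too small once $k\ge 3$. Your rebalancing heuristic does not close this: merging a small slab with a neighbor and splitting a large one does not monotonically increase any clearly-stated potential (merging raises the pointwise tent, but splitting lowers it, and the lex order on sorted norms is not obviously monotone under this move), and since every intermediate configuration is still a $(2k+1)$-step approximation the ``second alternative'' you offer is ruled out by the standing hypothesis. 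The random-threshold idea is likewise only a sketch.

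The paper's fix is twofold and worth internalizing. First, it builds $2k$ tents $f_i$ on \emph{single} intervals $[t_{i-1},t_i]$ rather than $k$ tents on double intervals; this buys an averaging step (if the mean of $2k$ Rayleigh quotients is at most $\lambda_k/4$, then at least $k$ of them are below $\lambda_k/2$). Second, and crucially, it replaces your per-function bound by the \emph{sum} bound $\sum_i|f_i(u)-f_i(v)|^2\le|f(u)-f(v)|^2$, proved by the observation that any edge touches at most two tent supports. With this in hand the thresholds are chosen greedily so that each interval carries exactly $C=2\cR(f)/(k\lambda_k)$ of the approximation error; either the greedy process exhausts its budget before reaching $\max_v f(v)$ (giving the desired approximation directly), or all $2k$ tents have $\|f_i\|_w^2=C$ and the sum energy bound yields $\sum_i\cR(f_i)\le \cR(f)/C=k\lambda_k/2$, so averaging produces $k$ functions with $\cR<\lambda_k/2$. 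Your argument becomes correct if you switch to $2k$ tents, adopt this greedy equalization, and use the summed Lipschitz inequality in place of the individual one.
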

\begin{proof}
Let $M:=\max_v f(v)$. We will find $2k+1$ thresholds $0=:t_0\leq t_1\leq \ldots\leq t_{2k} = M$, then
we let $g$ be a $2k+1$ step approximation of $f$ with these thresholds. 
Let $C:=2\cR(f)/k\lambda_k$.
We choose these thresholds inductively. 
Given $t_0, t_1, \ldots, t_{i-1}$, we let $t_{i-1}\leq t_i\leq M$ to be the smallest number such that 
\begin{equation}
\label{eq:equalnorm}
\sum_{v:t_{i-1} \leq f(v) \leq t_i} w(v) |f(v)-\psi_{t_{i-1},t_i}(f(v))|^2 = C.
\end{equation}
Observe that the left hand side varies continuously with $t_i$: 
when $t_i=t_{i-1}$ the left hand side is zero,
and for larger $t_i$ it is non-decreasing.
If we can satisfy  \eqref{eq:equalnorm} for some $t_{i-1}\leq t_i\leq M$,
then we let $t_i$ to be the smallest such number, and otherwise we set $t_i=M$. 

We say the procedure {\em succeeds} if $t_{2k}=M$. 
We will show that: (i) if the procedure succeeds then the lemma follows, 
and (ii) that the procedure always succeeds.
Part (i) is clear because if we define $g$ to be the $2k+1$ step approximation of $f$ with respect to $t_0,\ldots,t_{2k}$, then
$$\norm{f-g}_w^2 = \sum_{i=1}^{2k} \sum_{v:t_{i-1}\leq f(v)\leq t_i} w(v)|f(v)-\psi_{t_{i-1},t_i}(f(v))|^2   \leq 2kC = \frac{4\cR(f)}{\lambda_k},$$ 
 and we are done.  The inequality in the above equation follows by \eqref{eq:equalnorm}.

Suppose to the contrary that the procedure does not succeed.
We will construct $2k$ disjointly supported functions of Rayleigh quotients less than $\lambda_k/2$,
and then use \autoref{c:orthogonal} to get a contradiction. 
For $1\leq i\leq 2k$, let $f_i$ be the following function (see \autoref{f:disjoint} for an illustration):
\[
f_i(v) := \left\{
\begin{array}{ll}
|f(v)-\psi_{t_{i-1},t_i}(f(v))| & \mbox{if $t_{i-1} \leq f(v) \leq t_i$}\\
0 & \mbox{otherwise}.
\end{array} \right.
\]

\begin{figure}[h!] 
	\center
	\includegraphics[scale=0.6]{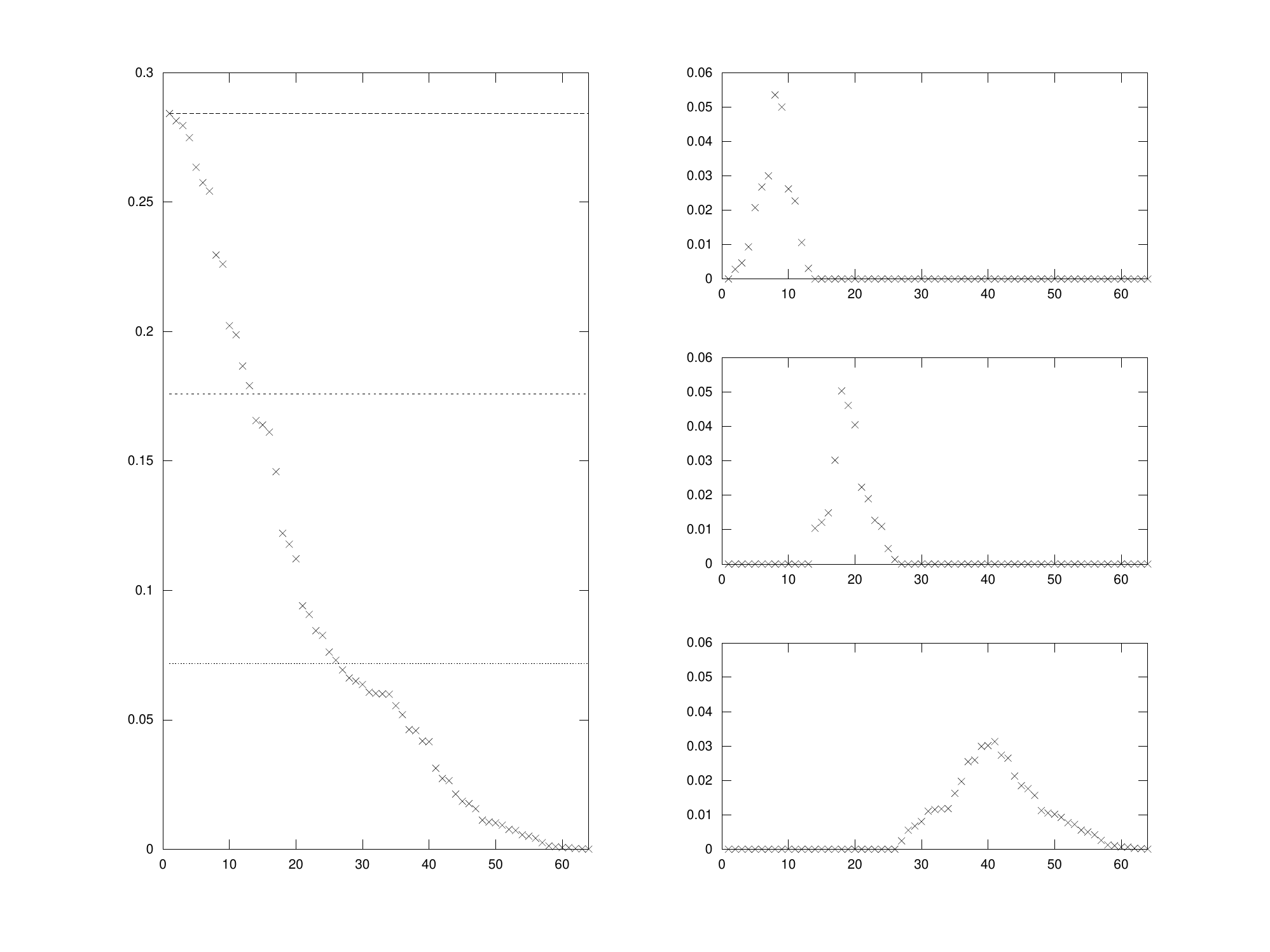}
	\caption{The figure on the left is the function $f$ with $\|f\|_w=1$.  
We cut $f$ into three disjointly supported vectors $f_1,f_2,f_3$ by
setting $t_0=0$, $t_1\approx 0.07$, $t_2 \approx 0.175$, and $t_3=\max f(v)$.
For each $1 \leq i \leq 3$, we define $f_i(v) = \min\{|f(v)-t_{i-1}|,|f(v)-t_i|\}$, if $t_{i-1} \leq f(v) \leq t_i$, and zero otherwise.} \label{f:disjoint}
\end{figure}

We will argue that at least $k$ of these functions have $\cR(f_i) < \frac 12 \lambda_k$.
By \eqref{eq:equalnorm}, we already know that the denominators of $\cR(f_i)$
are equal to $C$ ($\norm{f_i}_w^2=C$), so it remains to find an upper bound for the numerators. 
For any pair of vertices $u,v$, we show that
\begin{equation} 
\label{eq:lipschitzfi}
\sum_{i=1}^{2k} |f_i(u) - f_i(v)|^2 \leq |f(u)-f(v)|^2. 
\end{equation}
The inequality follows using the fact that $f_1,\ldots,f_{2k}$ are disjointly supported, and thus $u,v$ are contained in the support of at most two of these functions.
If both $u$ and $v$ are in the support of only one function, then \eqref{eq:lipschitzfi} holds since each $f_i$ is $1$-Lipschitz w.r.t. $f$. 
Otherwise, say $u\in \supp(f_i)$ and $v\in \supp(f_j)$ for $i<j$, 
then \eqref{eq:lipschitzfi} holds since
\begin{eqnarray*}
|f_i(u) - f_i(v)|^2 + |f_j(u) - f_j(v)|^2 &=& |f(u) - g(u)|^2 + |f(v)-g(v)|^2\\
& \leq& |f(u) - t_i|^2 + |f(v) - t_i|^2 \leq |f(u)-f(v)|^2.
\end{eqnarray*}

Summing \eqref{eq:lipschitzfi} we have
\[\sum_{i=1}^{2k} \cR(f_i) = \frac{1}{C} \sum_{i=1}^{2k} \sum_{u\sim v} w(u,v)|f_i(u) - f_i(v)|^2 \leq \frac{1}{C} \sum_{u\sim v} w(u,v)|f(u)-f(v)|^2 = \frac{k\lambda_k}{2}.\]
Hence, by an averaging argument, there are $k$ disjointly functions $f'_1,\ldots,f'_k$ of Rayleigh quotients less than $\lambda_k/2$,
a contradiction to \autoref{c:orthogonal}. 
\end{proof}


\subsubsection*{Upper Bounding $\phi(f)$ Using $2k+1$ Step Approximation $g$}

Next , we show that we can use any function $g$ that is a $2k+1$ approximation of $f$ to upper-bound
$\phi(f)$ in terms of $\norm{f-g}_w$.

\begin{proposition}
\label{l:jump}
For any $2k+1$-step approximation of $f$ with $\norm{f}_w=1$, called $g$, 
\[ \phi(f) \leq 4k\cR(f) + 4\sqrt2k \norm{f-g}_w \sqrt{ \cR(f) }.   \]
\end{proposition}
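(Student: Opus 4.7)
My plan is to invoke the Dirichlet Cheeger inequality (\autoref{lem:dirichletcheeger}) for a carefully chosen non-negative function $h$ whose threshold sets are contained among the threshold sets of $f$, so that $\phi(f) \leq \phi(h)$ and the lemma yields
\[\phi(f) \;\leq\; \frac{\sum_{u\sim v} w(u,v)\,|h(u)-h(v)|}{\sum_v w(v)\,h(v)}.\]
Following the hint in the excerpt, I would take $h(v):=\psi(f(v))$ for a non-decreasing $\psi:[0,\infty)\to[0,\infty)$ whose derivative is small near each of the $2k+1$ step-thresholds $t_0,\ldots,t_{2k}$ of $g$ and larger between them; a natural candidate is the ``sawtooth'' $\psi'(t)=2\,\dist(t,\{t_0,\ldots,t_{2k}\})$, whose antiderivative is a shifted square on each step, with $\psi(t_j)-\psi(t_{j-1})=(t_j-t_{j-1})^2/2$. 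The heuristic is exactly the one stated by the authors: this choice distributes the random-threshold ``probability'' towards cut points far from $t_0,\ldots,t_{2k}$, which are the cuts most likely to be good separators.

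Once $h$ is fixed, the plan is to bound the denominator from below and the numerator from above. For the denominator, $\psi(t)$ is a small perturbation of a function comparable to $f(v)^2$ on the typical vertex (because, by hypothesis, $f\approx g$ so most $v$ have $f(v)$ close to a threshold, where $\psi$ is locally quadratic in that distance), and using $\norm{f}_w=1$ together with Cauchy--Schwarz against $\norm{f-g}_w$ gives a lower bound of order $1-O(\norm{f-g}_w)$, which in particular is bounded below by a positive constant.

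The heart of the argument is the numerator. I would split the edges into same-step edges (with $g(u)=g(v)$) and cross-step edges (with $g(u)\neq g(v)$). For a same-step edge with both endpoints in step $i$, $|h(u)-h(v)|\leq\max_{s\in(t_{i-1},t_i]}\psi'(s)\cdot|f(u)-f(v)|\leq (t_i-t_{i-1})|f(u)-f(v)|$, and an application of Cauchy--Schwarz together with $\sum_{u\sim v}w(u,v)|f(u)-f(v)|^2=\cR(f)$ and the observation that $(t_i-t_{i-1})^2$ is controlled by the mass of step $i$ contributes $O(\cR(f))$. For a cross-step edge, each step $j$ crossed contributes at most $(t_j-t_{j-1})^2/2$ to $|h(u)-h(v)|$; telescoping over the at-most-$2k$ step-boundaries crossed, rewriting the total as $\sum_j(t_j-t_{j-1})^2\cdot w(E(V_f(s_j),\overline{V_f(s_j)}))$, and applying the identity $\bigl(\sum_{j=i+1}^{l}(t_j-t_{j-1})\bigr)^2\leq(l-i)\sum_{j=i+1}^l(t_j-t_{j-1})^2\leq 2k\sum_j(t_j-t_{j-1})^2$ yields a cross-step contribution of $O(k\,\cR(f))$. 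The approximation error $\norm{f-g}_w$ enters when one compares $|f(u)-f(v)|$ with the sum of step widths it actually covers, and a final Cauchy--Schwarz step producing $\sqrt{\cR(f)}\cdot\norm{f-g}_w$ handles the residual, giving the $O(k\norm{f-g}_w\sqrt{\cR(f)})$ term.

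The main obstacle is to carry out the cross-step bookkeeping so that exactly a factor of $k$ (and not $k^2$) is extracted from the $2k+1$ step decomposition, while simultaneously ensuring that the approximation error appears as a single additive $\norm{f-g}_w\sqrt{\cR(f)}$ term rather than as $k\norm{f-g}_w^2$ or a mixed product. Both are achieved by applying the standard Cauchy--Schwarz inequality $(\sum_{i=1}^k a_i)^2\leq k\sum_{i=1}^k a_i^2$ at the right moment, combined with the explicit sawtooth structure of $\psi'$ and the bound $\sum_{u\sim v}w(u,v)(e(u)+e(v))^2\leq 2\norm{e}_w^2$ for $e:=f-g$.
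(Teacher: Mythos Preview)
Your choice of $h$ via the sawtooth antiderivative $\psi'(t)=2\,\dist(t,\{t_0,\ldots,t_{2k}\})$ is exactly the construction the paper uses (up to the factor of $2$). The gap is in your allocation of the factor $k$: you claim the denominator $\sum_v w(v)\,h(v)$ is bounded below by a constant of the form $1-O(\norm{f-g}_w)$, and then try to extract a factor $k$ from the numerator via a same-step/cross-step case split. The denominator claim is false. Concretely, if the thresholds are equally spaced, $t_j=jM/(2k)$ with $M=\max_v f(v)$, then $\psi(t_j)=\sum_{i\le j}(t_i-t_{i-1})^2=j\,M^2/(4k^2)\le M^2/(2k)$, so for every vertex $h(v)\le f(v)^2\cdot O(1/k)$ and hence $\sum_v w(v)\,h(v)=O(1/k)$, independent of how well $g$ approximates $f$. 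Your heuristic that ``$\psi$ is locally quadratic near each threshold'' only describes the increment $\psi(t_i+\eps)-\psi(t_i)\approx\eps^2$; it says nothing about the size of $\psi(t_i)$ itself relative to $t_i^2$, and Cauchy--Schwarz forces the $1/k$ loss there.

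The paper's argument puts the factor of $k$ entirely in the denominator: one shows directly (\autoref{cl:denom}) that $h(v)\ge f(v)^2/(8k)$ via $f(v)^2=(\sum_{j\le i}(t_j-t_{j-1}))^2\le 2k\sum_j(t_j-t_{j-1})^2$, and the numerator is bounded \emph{uniformly over all edges}, with no case split and no $k$, by the single pointwise inequality
\[
|h(u)-h(v)|\;\le\;\tfrac12\,|f(u)-f(v)|\cdot\bigl(|f(u)-g(u)|+|f(v)-g(v)|+|f(u)-f(v)|\bigr),
\]
which follows because $\mu(x)=\dist(x,\{t_j\})\le\min\{|x-g(u)|,|x-g(v)|\}$ on $[f(u),f(v)]$ and one applies the triangle inequality. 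Summing and a single Cauchy--Schwarz then give the numerator bound $\tfrac12\cR(f)+\tfrac{1}{\sqrt2}\sqrt{\cR(f)}\,\norm{f-g}_w$. Your cross-step bookkeeping is therefore unnecessary, and with your (incorrect) $\Omega(1)$ denominator it would in any case yield $O(k)\cR(f)/\Omega(1)=O(k)\cR(f)$ only if the cross-step sum really collapsed---but since the true denominator is $\Theta(1/k)$, carrying a second $k$ in the numerator would produce $O(k^2)$, which is too weak.
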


%
Let $g$ be a $2k+1$ approximation of $f$ with thresholds $0=t_0\leq t_1\leq \ldots\leq t_{2k}$, i.e. $g(v) := \psi_{t_0,t_1,\ldots,t_{2k}}(f(v))$.
We will define a function $h\in \ell^2(V,w)$ such that each threshold set of $h$ is also a threshold set of $f$ (in particular $\supp(h) = \supp(f)$), and
\begin{equation}
 \frac {\sum_{u\sim v} w(u,v) |h(v) - h(u)| }{\sum_v w(v) h(v) }  
\leq 4k\cR(f) + 4\sqrt2k \norm{f-g}_w \sqrt{ \cR(f) }.
\label{eq:hrayleigh}
\end{equation}
We then simply use \autoref{lem:dirichletcheeger} to prove \autoref{l:jump}.

Let  $\mu:\R\rightarrow\R$,
\[ \mu(x) := |x - \psi_{t_0,t_1,\ldots,t_{2k}}(x)|. \] 
Note
that $|f(v) - g(v)| = \mu(f(v))$.
One can think of $\mu$ as a probability density function to sample the threshold sets, where threshold sets that are further away from the thresholds $t_0, t_1, \ldots, t_{2k}$ are given higher probability.
We define $h$ as follows:

\[ h(v) := \int_{0}^{f(v)} \mu(x) dx \]

Observe that the threshold sets of $h$ and the threshold sets of $f$ are the same,
as $h(u) \geq h(v)$ if and only if $f(u) \geq f(v)$.
It remains to prove \eqref{eq:hrayleigh}. 
We use the following two claims, that bound the denominator and the numerator separately. 
%

\begin{claim} 
\label{cl:denom}
For every vertex $v$,
\[ h(v) \geq \frac 1{8k}  f^2(v).  \]
\end{claim}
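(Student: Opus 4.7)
The plan is to analyze the single-variable function $T(y) := \int_0^y \mu(x)\,dx$ and prove the uniform lower bound $T(y) \geq y^2/(8k)$ for every $y \in [0, M]$, where $M := t_{2k}$. Since $f(v) \leq M$ and $h(v) = T(f(v))$, setting $y = f(v)$ then delivers the claim.

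First I would unpack $\mu$. Because $\psi_{t_0,\ldots,t_{2k}}(x)$ returns the nearest threshold, on each interval $[t_{i-1}, t_i]$ the function $\mu$ is a triangular bump $\mu(x) = \min(x - t_{i-1},\, t_i - x)$ of height $L_i/2$ at the midpoint, where $L_i := t_i - t_{i-1}$. In particular, each complete bump contributes $\int_{t_{i-1}}^{t_i} \mu(x)\,dx = L_i^2/4$.

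Next, I would locate $y$ in the unique interval $[t_{j-1}, t_j]$ with $1 \leq j \leq 2k$ and set $s := y - t_{j-1} \in [0, L_j]$, so that
\[ T(y) = \sum_{i=1}^{j-1} \frac{L_i^2}{4} + P(s), \qquad P(s) := \int_{t_{j-1}}^{t_{j-1}+s} \mu(x)\,dx. \]
The key elementary lemma I would establish is the uniform partial-bump bound $P(s) \geq s^2/4$ on all of $[0, L_j]$. For $s \leq L_j/2$ this is immediate, since $P(s) = s^2/2$. For $s > L_j/2$ one computes $P(s) = L_j^2/4 - (L_j-s)^2/2$, and the inequality $P(s) \geq s^2/4$ rearranges to $(3s - L_j)(s - L_j) \leq 0$, which is visibly true on this range.

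With the bound $P(s) \geq s^2/4$ in hand, the finish is a single application of Cauchy--Schwarz in the form~(\ref{eq:cauchy}) to the $j$ nonnegative numbers $L_1, \ldots, L_{j-1}, s$, whose sum is $t_{j-1} + s = y$:
\[ T(y) \;\geq\; \frac{1}{4}\Bigl(\sum_{i=1}^{j-1} L_i^2 + s^2\Bigr) \;\geq\; \frac{1}{4}\cdot\frac{y^2}{j} \;\geq\; \frac{y^2}{8k}, \]
using $j \leq 2k$ in the last step. The only delicate point is the partial-bump estimate, i.e., controlling $P(s)$ by a clean quadratic in $s$ that is insensitive to where $s$ falls inside $[0, L_j]$; the uniform bound $P(s) \geq s^2/4$ is what lets the whole argument collapse into one Cauchy--Schwarz step rather than a messy case split over the positions of $y$ relative to interval midpoints.
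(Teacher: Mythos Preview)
Your proof is correct and follows essentially the same approach as the paper: decompose $h(v)$ into the full-interval contributions $\sum L_i^2/4$ plus a partial-interval term, lower-bound the partial term by $s^2/4$, and finish with Cauchy--Schwarz over at most $2k$ summands. The only difference is that you carefully verify the partial-bump bound $P(s) \geq s^2/4$ on all of $[0, L_j]$, whereas the paper simply asserts the corresponding inequality $\int_{t_i}^{f(v)} \mu(x)\,dx \geq \tfrac14(f(v)-t_i)^2$ in one line.
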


\begin{proof} If $f(v)=0$, then $h(v)=0$ and there is nothing to prove. Suppose $f(v)$ is in
the interval $f(v)\in [t_i,t_{i+1}]$. 
Using the Cauchy-Schwarz inequality,
\[ f^2(v) = (\sum_{j=0}^{i-1} (t_{j+1}-t_j ) + (f(v) - t_i) )^2 \leq 2k \cdot ( \sum_{j=0}^{i-1} (t_{j+1}-t_j )^2
+ (f(v) - t_i)^2 ).\]
On the other hand, by the definition of $h$,

\begin{eqnarray*} h(v) &=& \sum_{j=0}^{i-1} \int_{t_j}^{t_{j+1}} \mu(x)dx + \int_{t_i}^{f(v)} \mu(x)dx \\
&=& \sum_{j=0}^{i-1} \frac14 (t_{j+1}-t_j)^2 + \int_{t_i}^{f(v)} \mu(x)dx 
\geq \sum_{j=0}^{i-1} \frac14 (t_{j+1}-t_j)^2 + \frac14(f(v)-t_i)^2,
\end{eqnarray*}
where the inequality follows by the fact that $f(v)\in [t_i,t_{i+1}]$.
\end{proof}

And we will bound the numerator with the following claim.

\begin{claim} 
\label{cl:numerator}
For any pair of vertices $u,v\in V$,
\[ | h(v)- h(u) | \leq \frac12 |f(v) - f(u) | \cdot ( |f(u) - g(u) | + |f(v)-g(v)| + |f(v) - f(u) | ). \]
\end{claim}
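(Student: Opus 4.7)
The plan is to reduce the claim to a one-variable estimate on the integrand defining $h$. Without loss of generality assume $f(u) \leq f(v)$; then by definition of $h$ we have
\[
|h(v) - h(u)| \;=\; \int_{f(u)}^{f(v)} \mu(x)\, dx.
\]
So the entire statement reduces to bounding the integral of $\mu$ over $[f(u), f(v)]$ by $\tfrac12 (f(v)-f(u))$ times the parenthesized quantity on the right-hand side of the claim.

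The key structural observation I would exploit is that $\mu$ is $1$-Lipschitz. Indeed, $\mu(x) = \min_{0\leq i \leq 2k} |x - t_i|$ is the distance from $x$ to the finite set $\{t_0,\ldots,t_{2k}\}$, hence is a minimum of $1$-Lipschitz functions and is itself $1$-Lipschitz. From this, for every $x \in [f(u), f(v)]$ one obtains the two one-sided bounds
\[
\mu(x) \;\leq\; \mu(f(u)) + (x - f(u)), \qquad \mu(x) \;\leq\; \mu(f(v)) + (f(v) - x).
\]
Averaging these two inequalities eliminates the dependence on $x$ and yields the uniform bound
\[
\mu(x) \;\leq\; \tfrac{1}{2}\bigl(\mu(f(u)) + \mu(f(v)) + (f(v) - f(u))\bigr).
\]

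Integrating this uniform bound over $[f(u), f(v)]$ gives
\[
|h(v) - h(u)| \;\leq\; \tfrac{1}{2}\,(f(v)-f(u))\,\bigl(\mu(f(u)) + \mu(f(v)) + (f(v)-f(u))\bigr),
\]
and the claim follows on noting that $\mu(f(u)) = |f(u) - g(u)|$ and $\mu(f(v)) = |f(v) - g(v)|$, directly from the definition $g(w) = \psi_{t_0,\ldots,t_{2k}}(f(w))$ of the step approximation.

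There is no real obstacle here; the only substantive step is recognizing the $1$-Lipschitz property of the distance-to-thresholds function $\mu$, after which the rest is a short averaging-and-integration argument. The symmetric form of the final bound (with both $|f(u)-g(u)|$ and $|f(v)-g(v)|$ appearing) is exactly what one gets from averaging the two Lipschitz estimates at the two endpoints, which is why this averaging step—rather than using a single one-sided Lipschitz bound—is the right move.
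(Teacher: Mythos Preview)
Your proof is correct and essentially identical to the paper's. The paper bounds $\mu(x) \leq \min\{|x-g(u)|,|x-g(v)|\} \leq \tfrac12(|x-g(u)|+|x-g(v)|)$ and then applies the triangle inequality through $f(u)$ and $f(v)$; this is exactly your $1$-Lipschitz estimate $\mu(x) \leq \mu(f(u)) + |x-f(u)|$ (and its twin from the $v$ side), just unpacked, since $g(u)$ is the threshold nearest $f(u)$. Your phrasing via the Lipschitz property of the distance function is arguably cleaner, but the mathematical content and the final averaging-then-integrating step are the same.
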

\begin{proof}
By the definition of $\mu(.)$, for any $x\in [f(u),f(v)]$,
\begin{eqnarray*} \mu(x) \leq \min\{ |x-g(u)|, |x-g(v)|\} &\leq& \frac{|x-g(u)| + |x-g(v)|}2 \\
&\leq&  \frac12 \Big( (|x-f(u)| + |f(u) - g(u)|) + (|x-f(v)|+|f(v)-g(v)|) \Big)\\
&=& \frac12( |f(u) - g(u) | + |f(v)-g(v)| + |f(v) - f(u) | ),
\end{eqnarray*}
where the third inequality follows by the triangle inequality, and the last equality uses $x\in [f(u),f(v)]$. 
Therefore, 
\begin{eqnarray*} h(v)-h(u) = \int_{f(u)}^{f(v)} \mu(x) dx& \leq &|f(v)-f(u)| \cdot \max_{x\in [f(u),f(v)]} \mu(x) \\
&\leq& \frac12 |f(v)-f(u)| \cdot (|f(u) - g(u) | + |f(v)-g(v)| + |f(v) - f(u) |). 
\end{eqnarray*}
\end{proof} 

Now we are ready to prove \autoref{l:jump}.

\begin{proofof}{\autoref{l:jump}}
First, by \autoref{cl:numerator},
\begin{align*} 
\sum_{u\sim v} w(u,v)|h(u)-h(v)| &\leq  \sum_{u\sim v} \frac12 w(u,v)|f(v) - f(u) | \cdot ( |f(u) - g(u) | + |f(v)-g(v)| + |f(v) - f(u) | ) \\
& \leq  \frac12 \cR(f) + \frac12 \sqrt{\sum_{u\sim v} w(u,v)|f(v) - f(u) |^2}  \sqrt { \sum_{u\sim v} w(u,v) (|f(u) - g(u) | + |f(v)-g(v)|)^2} \\
& \leq \frac12 \cR(f) + \frac12 \sqrt{\cR(f)} \cdot \sqrt{2  \sum_{u\sim v} w(u,v) (|f(u) -g(u) |^2 + |f(v)-g(v)|^2)} \\
 &=  \frac12 \cR(f) + \frac12 \sqrt{\cR(f)} \cdot \sqrt{2\norm{f-g}^2_w},
\end{align*}
where the second inequality follows by the Cauchy-Schwarz inequality. 
On the other hand, by \autoref{cl:denom},

\[ \sum_v w(v) h(v) \geq \frac 1{8k} \sum_v w(v) f^2(v) = \frac 1 {8k} \norm{f}^2_w = \frac 1 {8k}. \]

Putting above equations together proves \eqref{eq:hrayleigh}. Since the threshold
sets of $h$ are the same as the threshold sets of $f$, 
we have $\phi(f)=\phi(h)$ and the
proposition follows by \autoref{lem:dirichletcheeger}.
\end{proofof}

Now we are ready to prove \autoref{t:main}.
\begin{proofof}{\autoref{t:main}}
Let $g$ be as defined in \autoref{l:lambda_k}. 
By \autoref{l:jump}, we get
\[
\phi(f) \leq 4k\cR(f) + 4\sqrt2k \norm{f-g}_w \sqrt{ \cR(f) }
\leq 4k\cR(f) + 8\sqrt2 k \cR(f) / \sqrt{\l_k} \leq 12\sqrt{2} k \cR(f) / \sqrt{\l_k}.
\]
\end{proofof}

We provide a different proof of \autoref{t:main} in \autoref{sec:jumpsecond} by
lower-bounding $\cE_f$ using a $2k+1$ approximation of $f$. 
This proof uses \autoref{l:drop} instead of \autoref{lem:dirichletcheeger} to prove the theorem. 

{\bf Remark:}
\autoref{cl:numerator} can be improved to 
\[ | h(v)- h(u) | \leq \frac12 |f(v) - f(u) | \cdot ( |f(u) - g(u) | + |f(v)-g(v)| + \frac12 |f(v) - f(u) | ), \]
and thus \autoref{t:main} can be improved to $\phi(f) \leq 10\sqrt{2}k \cR(f)/\sqrt{\l_k}$.

\subsection{Second Proof} \label{subsec:stanford}

Instead of directly proving \autoref{t:main}, we  use \autoref{cor:smallsuppfunction} and \autoref{c:orthogonal} and prove a stronger version,
 as it will be used later to prove \autoref{c:multiway}\footnote{We note that the first proof can also be modified to obtain this stronger version, without the additional property that each $f_i$
 is defined on an interval $[a_i,b_i]$ of the form $|a_i-b_i|=\Theta(1/k)a_i$.
See \autoref{lem:ksteps} for an adaptation of \autoref{l:lambda_k} to prove such a statement for maximum cut.}.
In particular, instead of directly upper-bounding $\lambda_k$, we construct $k$ disjointly supported functions with small Rayleigh quotients. 
\begin{theorem}
\label{t:mainfunc}
For any non-negative function $f\in \ell^2(V,w)$ such that $\supp(f)\leq \vol(V)/2$, and  $\delta:=\phi^2(f)/\cR(f)$, at least one of the following holds
\begin{enumerate}[i)]
\item $\phi(f)\leq O(k) \cR(f);$
\item There exist $k$ disjointly supported functions $f_1,f_2,\ldots,f_k$ such that for all $1\leq i\leq k$, $\supp(f_i)\subseteq \supp(f)$ and
$$  \cR(f_i) \leq O(k^2) \cR(f)/\delta.$$
Furthermore, the support of each $f_i$ is an interval $[a_i,b_i]$ such that
$|a_i-b_i| = \Theta(1/k) a_i$. 
\end{enumerate}

\end{theorem}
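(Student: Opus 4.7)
The strategy is to attempt the construction of the $k$ disjointly supported functions of (ii); an obstruction forces (i). Normalize so that $\|f\|_w = 1$, let $M := \max_v f(v)$, and partition the range $(0, M]$ into dyadic intervals $J_j := (2^{-j-1}M, 2^{-j}M]$. Within each $J_j$, subdivide into $k$ equal subintervals of length $2^{-j-1}M/k$; these are the candidate supports. On any selected candidate $[a_i, b_i]$ define the hat function
\[
f_i(v) := \mathbb{1}[f(v) \in [a_i, b_i]] \cdot \min\bigl(f(v) - a_i,\, b_i - f(v)\bigr).
\]
The $f_i$ are disjointly supported with $\supp(f_i) \subseteq \supp(f)$, and the ``furthermore'' condition $|a_i - b_i| = \Theta(a_i/k)$ is automatic since $a_i$ lies in a dyadic interval of length $\Theta(a_i)$ and the subinterval length is $2^{-j_i-1}M/k$.

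Since the underlying $\R \to \R$ hat is $1$-Lipschitz, $\cE_{f_i} \leq \cE_f([a_i, b_i])$, and provided the mass of $V_f([a_i, b_i])$ is not concentrated at its endpoints, $\|f_i\|_w^2 = \Omega(\vol_f([a_i, b_i])(b_i - a_i)^2)$; non-concentration will be verified by a subsidiary application of \autoref{l:drop} to the middle third versus the edges of $[a_i, b_i]$. The core task then is to select $k$ disjoint subintervals on which the ratio $\cE_f(I)/(\vol_f(I)(b_i - a_i)^2)$ is uniformly $O(k^2\cR(f)/\delta)$. The total energy budget is $\sum_I \cE_f(I) \leq \cE_f = \cR(f)$ by \autoref{f:additivity}, and the norm budget $1 = \|f\|_w^2 = \Theta(\sum_j 4^{-j}M^2\vol_f(J_j))$ distributes across dyadic intervals. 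For a candidate $I$ with upper endpoint $a$, \autoref{l:drop} gives $\cE_f(I) \gtrsim \phi(f)^2\vol_f(a)^2\len(I)^2/(\phi(f)\vol_f(a) + \vol_f(I))$. If too many candidates have $\vol_f(I)$ small relative to $\phi(f)\vol_f(a)$, the forced energy aggregate exceeds $\cR(f)$ unless $\phi(f) \leq O(k)\cR(f)$, which is case (i). In the complementary regime, a weighted pigeonhole (weighted by the dyadic norm contributions) delivers $k$ pairwise disjoint subintervals whose $\vol_f(I)$ is large enough for the norm bound and whose $\cE_f(I)$ is small enough to meet the Rayleigh quotient target.

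The main obstacle is executing this pigeonhole cleanly. The energy cost of a ``sparse'' subinterval via \autoref{l:drop} scales with $\phi(f)\vol_f(a)\len(I)^2$, the norm contribution of $J_j$ scales with $4^{-j}M^2\vol_f(J_j)$, and $\len(I)\sim 2^{-j}M/k$; one must arrange the accounting so that in the final ratio $\cE_f(I)/(\vol_f(I)(b_i - a_i)^2)$ exactly one factor of $\delta = \phi(f)^2/\cR(f)$ survives, producing $O(k^2)\cR(f)/\delta$ and not $O(k^c)\cR(f)/\delta^c$ for some spurious $c > 1$. A secondary subtlety is the endpoint-concentration step, since the hat's norm collapses if the mass of $V_f([a_i, b_i])$ clusters near $a_i$ or $b_i$; this is precisely where the nested application of \autoref{l:drop} to sub-subintervals of $[a_i, b_i]$ is invoked.
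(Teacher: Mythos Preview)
Your plan is essentially the paper's own second proof: dyadic decomposition of the range of $f$, subdivision of each dyadic block into $\Theta(k)$ equal pieces, use of \autoref{l:drop} to show that ``sparse'' pieces force case (i), and selection of ``dense'' pieces to build the disjointly supported functions for case (ii). So the high-level strategy is correct and matches the paper.

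Two points where the paper's execution is cleaner than your outline and which you should adopt. First, instead of the pure hat $\min(f(v)-a_i,\,b_i-f(v))$, the paper takes $f_i(v)=f(v)\cdot\max\{0,1-\dist(f(v),R_i)/\eps\}$, i.e.\ multiplies the bump by $f(v)$ itself. This gives $\|f_i\|_w^2\ge\mass(R_i)=\sum_{v:f(v)\in R_i}w(v)f(v)^2$ directly, with no endpoint-concentration issue at all; your ``nested \autoref{l:drop} on the middle third'' becomes unnecessary. The price is that $f_i$ is only $1/\eps$-Lipschitz rather than $1$-Lipschitz, but since $\eps=\Theta(1/k)$ this costs a factor $k^2$ in the numerator, exactly compensating the factor $k^2$ your hat loses in the denominator (because $(b_i-a_i)^2\sim f(v)^2/k^2$ on a dyadic block). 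So the final Rayleigh bound is the same, but the argument is shorter.

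Second, your ``weighted pigeonhole'' is where the $\delta$ must enter, and you have not said how. The paper does this explicitly: call a subinterval $I_{i,j}$ \emph{heavy} if $\mass_{i,j}\ge c\,\delta\,\mass_{i-1}/k$ for a suitable constant $c$, and call a dyadic block \emph{balanced} if it has at least $6k$ heavy subintervals (out of $12k$). The key computation, via \autoref{l:drop}, is that each light subinterval contributes energy at least $\Omega\bigl(\phi(f)^2\mass_{i-1}/(k^2(k\phi(f)+\delta))\bigr)$, so an unbalanced block contributes $\Omega\bigl(\phi(f)^2\mass_{i-1}/(k(k\phi(f)+\delta))\bigr)$. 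Summing and comparing with the budget $\cE_f=\cR(f)$ forces either $\phi(f)\le O(k)\cR(f)$ (case (i)) or $\sum_{\text{balanced }i}\mass_{i-1}\ge 1/2$. In the latter event, pick one heavy subinterval per balanced block for each region; each region is then $\Omega(\delta/k)$-dense in mass, and \autoref{lem:densewellsep} (the paper's packaging of your numerator/denominator estimate) yields $\cR(f_i)\le O(k^2)\cR(f)/\delta$. The $\delta$ in the heaviness threshold is exactly what produces the single power of $\delta$ in the conclusion; without naming this threshold your pigeonhole remains vague.
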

We will show that if $\cR(f) = \Theta(\phi(G)^2)$ (when $\delta = \Theta(1)$), 
then $f$ is a smooth function of the vertices, 
in the sense that in any interval of the form $[t, 2t]$ we expect
the vertices to be embedded in equidistance positions. It is instructive to verify this for the second eigenvector of the cycle.

\subsubsection*{Construction of Disjointly Supported Functions Using Dense Well Separated Regions}

First, we show that \autoref{t:mainfunc} follows from a construction of $2k$ dense well separated regions, and in the subsequent parts we construct these regions based on $f$.
A region $R$ is a closed subset of $\mathbb{R}_+$. 
Let $\mass(R):=\sum_{v: f(v)\in R} w(v) f^2(v)$.
We say $R$ is {\em $W$-dense} if $\mass(R)\geq W$.
For any $x\in \mathbb{R}_+$, we define
$$\dist(x,R):=\inf_{y\in R} \frac{|x-y|}{y}.$$
The {\em $\eps$-neighborhood} of a region $R$ is the set of points at distance at most $\eps$ from $R$, 
$$N_{\eps}(R) :=\{ x\in\mathbb{R}_+: \dist(x,R) < \eps \}.$$
 We say two regions $R_1,R_2$ are {\em $\eps$-well-separated}, if $N_{\eps}(R_1)\cap N_{\eps}(R_2)=\emptyset$. 
In the next lemma, we show that our main theorem can be proved by finding $2k$, $\Omega(\delta/k)$-dense, $\Omega(1/k)$ well-separated regions. 

\begin{lemma}
\label{lem:densewellsep}
Let $R_1,R_2,\ldots,R_{2k}$ be a set of $W$-dense and $\eps$-well separated regions. 
Then, there are $k$ disjointly supported functions $f_1,\ldots,f_k$, each supported on the $\eps$-neighborhood of one of the regions such that
$$\forall~ 1\leq i\leq k,~\cR(f_i) \leq \frac{2\cR(f)}{k\eps^2 W}.$$
\end{lemma}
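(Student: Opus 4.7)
The plan is to define each $f_i$ as a smooth truncation of $f$ to the $\eps$-neighborhood of $R_i$. Concretely, I would set $f_i(v) := f(v) \cdot \eta_i(f(v))$, where $\eta_i : \R_+ \to [0,1]$ is the piecewise linear cutoff $\eta_i(x) := \max\{0,\, 1 - \dist(x, R_i)/\eps\}$, so that $\eta_i \equiv 1$ on $R_i$, $\eta_i \equiv 0$ outside $N_\eps(R_i)$, and $\eta_i$ is Lipschitz in the multiplicative distance $\dist$. Because the $2k$ neighborhoods $N_\eps(R_i)$ are pairwise disjoint by $\eps$-well-separation, the supports of $f_1,\ldots,f_{2k}$ are automatically disjoint, and the promised $k$ functions will be obtained by selecting the $k$ indices with the smallest Rayleigh quotient via an averaging argument.

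The denominator bound is immediate: since $f_i(v) = f(v)$ whenever $f(v) \in R_i$, we have $\norm{f_i}_w^2 \geq \sum_{v:\,f(v)\in R_i} w(v) f(v)^2 = \mass(R_i) \geq W$. The main step is the per-edge numerator bound: I aim to show that for every $\{u,v\}\in E$,
\[
\sum_{i=1}^{2k} |f_i(u)-f_i(v)|^2 \;\leq\; \frac{C}{\eps^2}\,|f(u)-f(v)|^2
\]
for an absolute constant $C$, by a case analysis per region $i$. Case (a): both $f(u),f(v) \in N_\eps(R_i)$; here a product-rule expansion of $f_i = f \cdot (\eta_i\circ f)$ combined with $|\dist(f(u),R_i) - \dist(f(v),R_i)| \leq (1+\eps)|f(u)-f(v)|/\min\{f(u),f(v)\}$ yields $|f_i(u) - f_i(v)| \leq O(1/\eps)\,|f(u)-f(v)|$. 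Case (b): exactly one endpoint, say $f(u)$, lies in $N_\eps(R_i)$, with $\dist(f(u),R_i) = \alpha\eps$ for its nearest point $y \in R_i$; then $f_i(u) \leq (1+\eps)y(1-\alpha)$, while the hypothesis $\dist(f(v),R_i)\geq \eps$ together with a triangle-inequality split on which side of $y$ contains $f(v)$ forces $|f(u)-f(v)| \geq (1-\alpha)\eps y$, yielding $|f_i(u)|^2 \leq O(1/\eps^2)\,|f(u)-f(v)|^2$. Case (c): neither endpoint in $N_\eps(R_i)$, contributing $0$. Because the $N_\eps(R_i)$ are pairwise disjoint, each edge falls into case (a) or (b) for at most two indices $i$, so $C$ is indeed absolute.

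Summing over $E$ yields $\sum_{i=1}^{2k} \cE_{f_i} \leq (C/\eps^2)\,\cE_f$. A Markov-style averaging then selects $k$ indices with $\cE_{f_i} \leq (2C/(k\eps^2))\,\cE_f$; combined with $\norm{f_i}_w^2 \geq W$ and $\cE_f \leq \cR(f)$ (by the section-wide normalization $\norm{f}_w = 1$) this gives $\cR(f_i) = O(\cR(f)/(k\eps^2 W))$, from which the stated constant $2$ follows by tightening the case-(a) and case-(b) estimates.

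The main obstacle will be case (b): because $\dist(x,R)=\inf_{y\in R}|x-y|/y$ is asymmetric in $x$ and $y$, the cutoff $\eta_i$ does not have a single global Lipschitz constant, and its behavior across the boundary of $N_\eps(R_i)$ must be controlled geometrically. The $(1-\alpha)$ factor appearing in the size of $f_i(u)$ must be matched by a corresponding $(1-\alpha)$ factor in the lower bound for $|f(u)-f(v)|$, which is the step that ultimately keeps the per-edge loss only $O(1/\eps^2)$ rather than something worse.
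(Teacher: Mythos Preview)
Your proposal is correct and follows essentially the same approach as the paper: you define the same cutoff functions $f_i(v)=f(v)\max\{0,1-\dist(f(v),R_i)/\eps\}$, use the same denominator bound $\norm{f_i}_w^2\geq W$, and the same averaging over the $2k$ disjointly supported functions to extract $k$ good ones. The only difference is that where the paper disposes of the numerator bound in a single line (``we used the fact that $f_j$'s are $1/\eps$-Lipschitz''), you carry out the case analysis explicitly; your case (b) in particular fills in a detail the paper leaves to the reader.
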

\begin{proof}
For any $1\leq i\leq 2k$, we define a function $f_i$, where for all $v\in V$,
$$ f_i(v):=f(v)\max\{0, 1-\dist(f(v),R_i)/\eps\}.$$
Then, $\norm{f_i}_w^2 \geq \mass(R_i)$. Since the regions are $\eps$-well separated, the functions are disjointly supported. Therefore,
the endpoints of each edge $\{u,v\}\in E$ are in the support of at most two functions. 
Thus, by an averaging argument, there exist $k$ functions $f_1,f_2,\ldots,f_k$ (maybe after renaming) satisfy the following. For all $1\leq i\leq k$,
$$ \sum_{u\sim v} w(u,v)|f_i(u) - f_i(v)|^2 \leq \frac{1}{k} \sum_{j=1}^{2k} \sum_{u\sim v} w(u,v)|f_j(u) - f_j(v)|^2.$$
Therefore, for $1\leq i\leq k$,
\begin{eqnarray*} \cR(f_i) = \frac{\sum_{u\sim v} w(u,v)|f_i(u) - f_i(v)|^2}{\norm{f_i}_w^2} &\leq& \frac{\sum_{j=1}^{2k}\sum_{u\sim v} w(u,v)|f_j(u) - f_j(v)|^2}{k \cdot \min_{1\leq i\leq 2k} \norm{f_i}_w^2} \\
&\leq& \frac{2\sum_{u\sim v} w(u,v)|f(u)-f(v)|^2}{k\eps^2 W} = \frac{2\cR(f)}{k\eps^2W},
\end{eqnarray*}
where we used the fact that  $f_j$'s are $1/\eps$-Lipschitz. Therefore, $f_1,\ldots,f_k$ satisfy  lemma's statement. 
\end{proof}

\subsubsection*{Construction of Dense Well Separated Regions}

Let $0<\alpha<1$ be a constant that will be fixed later in the proof.
For $i\in\mathbb{Z}$, we define the interval $I_i:=[\alpha^i,\alpha^{i+1}]$. Observe that these intervals partition the vertices with positive value in $f$.
We  let $\mass_i := \mass(I_i)$.
We partition each interval $I_i$ into $12k$ subintervals of equal length,
$$I_{i,j} := \left[\alpha^i \left(1-\frac{j(1-\alpha)}{12k}\right), \alpha^i\left(1-\frac{(j+1)(1-\alpha)}{12k}\right)\right],$$
for $0\leq j<12k$. Observe that for all $i,j$,
\begin{equation} \len(I_{i,j}) = \frac{\alpha^{i}(1-\alpha)}{12k}.  
\label{eq:subintervallength}
\end{equation}
Similarly we define $\mass_{i,j} := \mass(I_{i,j})$. 
We say a subinterval $I_{i,j}$ is {\em heavy}, if $\mass_{i,j} \geq c\delta \mass_{i-1} /   k$, where $c>0$ is a constant that will be fixed later in the proof; we say it is {\em light} otherwise. We use $H_i$ to denote the set of heavy subintervals of $I_i$ and $L_i$ for the set of light subintervals. 
We use $h_i$ to denote the number of heavy subintervals. 
We also say an interval $I_i$ is {\em balanced} if $h_i \geq 6k$, 
denoted by $I_i \in B$ where $B$ is the set of balanced intervals. 
Intuitively, an interval $I_i$ is {\em balanced} if the vertices are distributed uniformly inside that interval.

Next we describe our proof strategy.
Using \autoref{lem:densewellsep} to prove the theorem it is sufficient to find $2k$, $\Omega(\delta/k)$-dense, $\Omega(1/k)$ well-separated regions $R_1,\ldots,R_{2k}$. 
Each of our $2k$ regions will be a union of {\em heavy} subintervals. 
Our construction is simple: from each balanced interval we choose $2k$ {\em separated} heavy subintervals and include each of them in one of the regions. In order to promise that the regions are well separated, once we include $I_{i,j}\in H_i$ into a region $R$ we leave the two neighboring subintervals $I_{i,j-1}$ and $I_{i,j+1}$ unassigned, 
so as to separate $R$ from the rest of the regions. 
In particular, for all $1\leq a\leq 2k$ and all $I_i\in B$, 
we include the $(3a-1)$-th heavy subinterval of $I_i$ in $R_a$. 
Note that if an interval $I_i$ is balanced, then it has $6k$ heavy subintervals and we can include one heavy subinterval in each of the $2k$ regions.
Furthermore, by \eqref{eq:subintervallength}, 
the regions are $(1-\alpha)/12k$-well separated. 
 It remains to prove that these $2k$ regions are dense. 
Let $$ \Delta:=\sum_{I_{i} \in B} \mass_{i-1}$$
be the summation of the mass of the preceding interval of balanced intervals. 
Then, since each heavy subinterval $I_{i,j}$ has a mass of $c\delta \mass_{i-1}/ k$, by the above construction all regions are $c\Delta \delta / k$-dense.
Hence, the following proposition follows from \autoref{lem:densewellsep}.

\begin{proposition}
\label{prop:lambdakbound}
There are $k$ disjoint supported functions $f_1,\ldots,f_k$ such that for all $1\leq i\leq k$, $\supp(f_i)\subseteq \supp(f)$ and
$$ \forall~ 1\leq i\leq k,~ \cR(f_i) \leq \frac{300 k^2 \cR(f)}{(1-\alpha)^2c\delta \Delta}.$$
\end{proposition}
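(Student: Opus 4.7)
The plan is to formalize the informal construction sketched just before the statement and then invoke Lemma~\ref{lem:densewellsep}. I would proceed in three steps.

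First, I would pin down the $2k$ regions precisely. For each $a \in \{1,\ldots,2k\}$ and each balanced interval $I_i \in B$, let $j_a(i)$ denote the $j$-index of the $(3a-1)$-th heavy subinterval of $I_i$; this is well-defined because $I_i \in B$ guarantees $h_i \geq 6k$. Set $R_a := \bigcup_{I_i \in B} I_{i, j_a(i)}$. The key combinatorial observation is that between the $(3a-1)$-th and the $(3(a+1)-1)$-th heavy subintervals of $I_i$ sit two further heavy ones (the $3a$-th and $(3a+1)$-th), so the $j$-indices satisfy $j_{a+1}(i) - j_a(i) \geq 3$. Hence the absolute gap inside $I_i$ between the supports of $R_a$ and $R_{a+1}$ is at least $2 \cdot \alpha^i(1-\alpha)/(12k) = \alpha^i(1-\alpha)/(6k)$.

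Next, I would verify the two hypotheses of Lemma~\ref{lem:densewellsep}. Density is immediate: each $I_{i,j_a(i)}$ is heavy and so contributes at least $c\delta \mass_{i-1}/k$ to $\mass(R_a)$, giving $\mass(R_a) \geq c\delta \Delta/k =: W$. For the separation claim with $\eps := (1-\alpha)/(12k)$, I would work with the explicit formula $\dist(x,R) = \inf_{y\in R}|x-y|/y$ and split into two cases. Within a single $I_i$, every $y \in R_a \cup R_b$ satisfies $y \leq \alpha^i$, while the absolute gap computed above is $\alpha^i(1-\alpha)/(6k) = 2\eps\alpha^i$, so the two $\eps$-neighborhoods inside $I_i$ are disjoint. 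For cross-interval separation, the worst configuration is the lowest chosen subinterval of $I_i$ sitting just above $\alpha^{i+1}$ and the highest chosen subinterval of $I_{i+1}$ sitting just below it; a short calculation shows the resulting normalized gap is of order $(1-\alpha)/(12k\alpha) > \eps$.

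Finally, I would plug $W = c\delta\Delta/k$ and $\eps = (1-\alpha)/(12k)$ into Lemma~\ref{lem:densewellsep} to produce $k$ disjointly supported functions $f_1,\ldots,f_k$, each supported on the $\eps$-neighborhood of some $R_a \subseteq \supp(f)$, with
\[
\cR(f_i) \;\leq\; \frac{2\,\cR(f)}{k\,\eps^2\,W} \;=\; \frac{288\,k^2\,\cR(f)}{(1-\alpha)^2\, c\,\delta\, \Delta} \;\leq\; \frac{300\,k^2\,\cR(f)}{(1-\alpha)^2\, c\,\delta\, \Delta},
\]
which is exactly the stated bound.

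The only genuinely fiddly step is the separation check. Because $\dist(x,R)$ normalizes by $y$ rather than by $|x|$ or $\min(x,y)$, the relative distance is asymmetric, and a careless estimate across the geometric boundary $\alpha^{i+1}$ between $I_i$ and $I_{i+1}$ would cost a factor of $1/\alpha$. The slack between the computed constant $288$ and the stated $300$ comfortably absorbs this, so I expect the obstacle to be one of careful bookkeeping rather than a substantive difficulty.
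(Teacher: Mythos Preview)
Your proposal is correct and follows precisely the construction the paper sketches in the paragraph preceding the proposition; the paper itself simply asserts that the proposition ``follows from \autoref{lem:densewellsep}'' after noting the regions are $(1-\alpha)/12k$-well separated and $c\Delta\delta/k$-dense, whereas you spell out the separation check (including the cross-interval case) and the arithmetic $2/(k\eps^2 W)=288k^2/((1-\alpha)^2c\delta\Delta)\le 300k^2/((1-\alpha)^2c\delta\Delta)$ that the paper leaves implicit. One small slip: ``$R_a\subseteq\supp(f)$'' is a type mismatch since $R_a\subseteq\mathbb{R}_+$ while $\supp(f)\subseteq V$; the correct statement (which does hold, by the formula $f_i(v)=f(v)\max\{0,1-\dist(f(v),R_i)/\eps\}$ in the proof of \autoref{lem:densewellsep}) is that $\supp(f_i)\subseteq\supp(f)$ because $f_i(v)\neq 0$ forces $f(v)\neq 0$.
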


\subsubsection*{Lower Bounding the Density}

So in the rest of the proof we just need to lower-bound $\Delta$ by an absolute constant.

\begin{proposition}
\label{lem:unbalancedinterval}
For any interval $I_i\notin B$,
$$ \cE(I_i) \geq \frac{\alpha^6\phi(f)^2\mass_{i-1}(1-\alpha)^2}{24 ( k\alpha^4 \phi(f)+c\delta)}.
$$
\end{proposition}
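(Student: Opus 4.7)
The plan is to apply \autoref{l:drop} separately to each light subinterval of $I_i$ and then sum the resulting bounds via \autoref{f:additivity}, exploiting the fact that an unbalanced $I_i$ contains strictly more than $6k$ light subintervals.

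First I would translate between the mass $\mass$ and the volume $\vol_f$ on the relevant dyadic pieces. Every $v\in V_f(I_{i-1})$ satisfies $f(v)\leq \alpha^{i-1}$, hence $\mass_{i-1}\leq \alpha^{2(i-1)}\vol_f(I_{i-1})$, which gives $\vol_f(I_{i-1})\geq \alpha^{-2(i-1)}\mass_{i-1}$. Conversely, every $v\in V_f(I_{i,j})$ satisfies $f(v)> \alpha^{i+1}$, so for a light subinterval we have $\vol_f(I_{i,j})\leq \alpha^{-2(i+1)}\mass_{i,j}\leq \alpha^{-2(i+1)}c\delta\,\mass_{i-1}/k$. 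If $a_j$ denotes the upper endpoint of $I_{i,j}$, then $a_j\leq \alpha^i$, so $V_f(I_{i-1})\subseteq V_f(a_j)$ and therefore $\vol_f(a_j)\geq \vol_f(I_{i-1})\geq \alpha^{-2(i-1)}\mass_{i-1}$.

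Next I would invoke \autoref{l:drop} on the single interval $J=I_{i,j}$. Because the map $x\mapsto \phi(f)^2 x^2/(\phi(f)\,x+y)$ is monotonically increasing in $x\geq 0$ for any fixed $y\geq 0$, it is legitimate to substitute the lower bound $\vol_f(a_j)\geq \alpha^{-2(i-1)}\mass_{i-1}$ in both the numerator and the denominator of the bound. Inserting $\len(I_{i,j})=\alpha^i(1-\alpha)/(12k)$ from \eqref{eq:subintervallength} together with the upper bound on $\vol_f(I_{i,j})$, a short algebraic simplification in which the powers of $\alpha$ collapse yields
$$\cE(I_{i,j}) \;\geq\; \frac{\alpha^6\,\phi(f)^2\,\mass_{i-1}\,(1-\alpha)^2}{144\,k\,\bigl(k\alpha^4\phi(f)+c\delta\bigr)}.$$

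Finally, since $I_i\notin B$ one has $h_i<6k$, so there are strictly more than $6k$ light subintervals inside $I_i$. The subintervals $\{I_{i,j}\}_j$ are disjoint, so \autoref{f:additivity} gives $\cE(I_i)\geq \sum_{I_{i,j}\in L_i}\cE(I_{i,j})$, and multiplying the per-subinterval bound by $6k$ produces exactly the claimed inequality. The only subtlety here is purely a bookkeeping one: translating the definition of lightness (a statement about $\mass$) into the $\vol_f$ quantities demanded by \autoref{l:drop}, while remembering that $\alpha<1$ reverses the direction of several of the natural inequalities between $\alpha^i$, $\alpha^{i-1}$, and $\alpha^{i+1}$.
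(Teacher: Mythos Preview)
Your proposal is correct and follows essentially the same route as the paper: bound each light subinterval via \autoref{l:drop} (converting between $\mass$ and $\vol_f$ exactly as you do), then sum over the at least $6k$ light subintervals via \autoref{f:additivity}. The only cosmetic difference is that the paper lower-bounds $\vol_f(a_j)$ by the threshold volume $\vol_f(\alpha^i)$ rather than by $\vol_f(I_{i-1})$, but both choices lead to the same per-subinterval estimate and hence to the same final bound.
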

\begin{proof}
In the next claim, 
we lower-bound the energy of a light subinterval in terms of $\mass_{i-1}$. 
Then, we prove the statement simply using $h_i < 6k$.
\begin{claim}
\label{lem:lightsubinterval}
For any light subinterval $I_{i,j}$,
$$ \cE(I_{i,j}) \geq \frac{\alpha^6\phi(f)^2\mass_{i-1}(1-\alpha)^2}{144 k ( k \alpha^4 \phi(f) +c\delta)}.
$$
\end{claim}
\begin{proof}
First, observe that
\begin{equation}
\label{eq:dpreinterval}
 \mass_{i-1} = \sum_{v\in I_{i-1}} w(v) f^2(v) \leq \alpha^{2i-2} \vol(\alpha^i).
 \end{equation}
Therefore,
\begin{equation}
\label{eq:energyJ} 
\vol(I_{i,j}) = \sum_{v \in I_{i,j}} w(v)  \leq  \sum_{v\in I_{i,j}}  w(v) \frac{f^2(v)}{\alpha^{2i+2}} = \frac{\mass_{i,j}}{\alpha^{2i+2}} \leq  \frac{c\delta \mass_{i-1}}{ k\alpha^{2i+2}} \leq \frac{c\delta \vol(\alpha^{i})}{ k\alpha^4}, 
\end{equation}
where we use the assumption that $I_{i,j}\in L_i$ in the second last inequality, and \eqref{eq:dpreinterval} in the last inequality. 
By \autoref{l:drop},
$$\cE(I_{i,j}) \geq \frac{\phi(f)^2 \cdot \vol(\alpha^i)^2 \cdot \len(I_{i,j})^2}{\phi(f) \cdot \vol(\alpha^i)+\vol(I_{i,j})} 
\geq \frac{k \alpha^4 \phi(f)^2 \cdot \vol(\alpha^i) \cdot \len(I_{i,j})^2}{k \alpha^4 \phi(f) + c\delta}
\geq \frac{\alpha^6\phi(f)^2\mass_{i-1}(1-\alpha)^2}{144 k (k \alpha^4 \phi(f)+c\delta)},$$
where the first inequality holds by \eqref{eq:energyJ}, and the last inequality holds by \eqref{eq:subintervallength} and \eqref{eq:dpreinterval}.
\end{proof}

Now, since the subintervals are disjoint, by \autoref{f:additivity},
\begin{eqnarray*}
\cE(I_i) \geq \sum_{I_{i,j}\in L_i}\cE(I_{i,j})& 
\geq & (12k-h_i) \frac{\alpha^6\phi(f)^2\mass_{i-1}(1-\alpha)^2}{144 k (k\phi(f) \alpha^4+c\delta)} 
\geq \frac{\alpha^6\phi(f)^2\mass_{i-1}(1-\alpha)^2}{24 ( k\phi(f) \alpha^4+c\delta)},
 \end{eqnarray*}
where we used the assumption that $I_i$ is not balanced and thus $h_i < 6k$.
\end{proof}

Now we are ready to lower-bound $\Delta$.

\begin{proofof}{\autoref{t:mainfunc}}
First we show that $\Delta \geq 1/2$, unless (i) holds, and then we use \autoref{prop:lambdakbound} to prove the theorem.  
If $\phi(f)\leq 10^4 k\cR(f)$, then (i) holds and we are done. 
So, assume that
\begin{equation}
\label{eq:lambdakbound}
 \frac{10^8 k^2\cR^2(f)}{\phi^2(f)} \leq 1,
 \end{equation}
and we prove (ii).
Since $\norm{f}_w^2=1$, by \autoref{lem:unbalancedinterval},
$$\cR(f) = \cE_f \geq \sum_{I_{i}\notin B} \cE(I_i)\geq \sum_{I_i\notin B}
\frac{\alpha^6\phi(f)^2\mass_{i-1}(1-\alpha)^2}{24 (k\phi(f) \alpha^4+c\delta)}
.$$
Set $\alpha=1/2$ and $c:=\alpha^6(1-\alpha)^2/96$.
If $k\phi(f) \alpha^4\geq c\delta$, then we get
$$ \sum_{I_i\notin B} \mass_{i-1} \leq \frac{48k\cR(f)}{\alpha^2(1-\alpha)^2\phi(f)} \leq \frac{1}{2},$$
where the last inequality follows from \eqref{eq:lambdakbound}. 
Otherwise, 
\[\sum_{I_i \notin B} \mass_{i-1} \leq \frac{48c\delta \cR(f)}{\alpha^6 (1-\alpha)^2 \phi^2(f)} \leq \frac12,\]
where the last inequality follows from the definition of $c$ and $\delta$.
Since $\mass(V)=\norm{f}_w^2 = 1$, it follows from the above equations that 
$ \Delta \geq \frac{1}{2}.$
Therefore, by \autoref{prop:lambdakbound}, we get $k$ disjointly supported functions
$f_1,\ldots,f_k$ such that
$$ \cR(f_i) \leq  \frac{300 k^2 \cR(f)}{(1-\alpha)^2c\delta \Delta}
\leq \frac{10^8 k^2 \cR(f)^2}{\phi(f)^2}.
$$
Although each function $f_i$ is defined on a region which is a union of many heavy subintervals,
we can simply restrict it to only one of those subintervals guaranteeing that $\cR(f_i)$
only decreases. Therefore  each $f_i$ is defined on an interval $[a_i,b_i]$
where by \eqref{eq:subintervallength}, $|a_i-b_i| = \Theta(1/k) a_i$.
This proves (ii).
\end{proofof}


%
%
\section{Extensions and Connections} \label{s:extensions}

In this section, we extend our approach to other graph partitioning problems,
including multiway partitioning (\autoref{s:multiway}), balanced separator (\autoref{s:bisection}), maximum cut (\autoref{s:maxcut}), and to the manifold setting (\autoref{s:manifold}).
Also, we discuss some relations between our setting and the settings for planted and semirandom instances (\autoref{s:semirandom}) and in stable instances (\autoref{s:stable}).

\subsection{Spectral Multiway Partitioning} \label{s:multiway}

In this subsection,
we use \autoref{t:mainfunc} and the results in~\cite{lee-gharan-trevisan} 
to prove \autoref{c:multiway}.

\begin{theorem}[{\cite[Theorem 1.3]{lee-gharan-trevisan}}]
For any graph $G=(V,E,w)$ and any integer $k$, there exist $k$ non-negative disjointly supported functions $f_1, \ldots, f_k \in \ell^2(V,w)$ such that for each $1 \leq i \leq k$ we have $\cR(f_i) \leq O(k^6) \l_k$.
\end{theorem}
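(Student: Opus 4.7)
The plan is to mirror the strategy underlying \autoref{t:mainfunc}, but working in a $k$-dimensional spectral embedding rather than on a single coordinate. Let $g_1, \ldots, g_k \in \ell^2(V)$ be orthonormal eigenfunctions of $\cL_G$ corresponding to $\lambda_1, \ldots, \lambda_k$, and define the embedding $F : V \to \R^k$ by $F(v) := (g_1(v), \ldots, g_k(v))/\sqrt{w(v)}$. Orthonormality of the $g_i$'s gives $\sum_v w(v) \norm{F(v)}^2 = k$, and the per-coordinate Rayleigh quotients sum to
\[
\sum_{u \sim v} w(u,v) \norm{F(u) - F(v)}^2 \;=\; \sum_{i=1}^k \langle g_i, \cL_G\, g_i\rangle \;\leq\; k \lambda_k,
\]
so $F$ is simultaneously spread out (total mass $k$) and low-energy on average.

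Next, I would produce $k$ disjoint subsets $S_1, \ldots, S_k \subseteq V$ such that each $S_i$ carries mass $\sum_{v \in S_i} w(v) \norm{F(v)}^2 \geq 1/\poly(k)$, and such that they are $\eps$-well separated in the embedded metric for some $\eps \geq 1/\poly(k)$. The natural tool here is a random padded decomposition in the metric $\dist(u,v) := \norm{F(u) - F(v)}$: iteratively pick a center, grow a ball whose outer shell carries only a small fraction of the ball's interior mass, assign the interior to some $S_i$, and delete the padded shell. The crucial quantitative ingredient is a spreading property --- because $g_1, \ldots, g_k$ are mutually orthogonal, no single small region of $V$ can monopolize the embedded mass without forcing some of the $g_i$'s to be nearly linearly dependent on the others --- which, combined with the global energy bound $k\lambda_k$, allows both the padding and the mass floor to be set to $\Omega(1/\poly(k))$ at the same time.

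For each region $S_i$, I would then extract a scalar function by applying a Lipschitz cutoff in the embedded metric: set $f_i(v) := \norm{F(v)} \cdot \max\{0,\, 1 - \dist(v, S_i)/\eps\}$, where $\dist(v, S_i) := \inf_{u \in S_i} \norm{F(v) - F(u)}$. The $f_i$'s are disjointly supported by the $\eps$-separation, each is $O(1/\eps)$-Lipschitz with respect to the coordinates of $F$, and therefore
\[
\cR(f_i) \;\leq\; \frac{O(1/\eps^2) \cdot k\lambda_k}{\sum_{v \in S_i} w(v)\norm{F(v)}^2} \;\leq\; O(k^6)\, \lambda_k
\]
after tracking the compounding $\poly(k)$ losses from the padding radius, the mass floor, and the vector-to-scalar conversion.

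The main obstacle --- and where essentially all of the $k^6$ loss accrues --- is the padded decomposition step: simultaneously guaranteeing that each of the $k$ regions has $\Omega(1/\poly(k))$ mass \emph{and} is $\Omega(1/\poly(k))$-separated from the others, using only orthogonality of the top-$k$ eigenspace together with the energy bound, is the heart of the argument. A direct analysis loses multiple factors of $k$ in the number of regions extracted, in the padding radius required for separation, and in the mass floor used to lower-bound $\norm{f_i}_w^2$, and these compound to the stated $O(k^6)$ slack.
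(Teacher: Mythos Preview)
This theorem is not proved in the present paper; it is quoted verbatim from \cite{lee-gharan-trevisan} (their Theorem~1.3) and used here only as a black box in the proof of \autoref{c:multiway}. There is therefore no ``paper's own proof'' to compare your attempt against.

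That said, your sketch is broadly faithful to the argument in \cite{lee-gharan-trevisan}: spectral embedding into $\R^k$, a spreading property coming from orthonormality of the eigenfunctions, random partitioning into well-separated regions of substantial mass, and Lipschitz localization via bump functions of the form $\norm{F(v)}\cdot\max\{0,1-\dist(v,S_i)/\eps\}$. The one place where your outline is genuinely incomplete is the partitioning step. In \cite{lee-gharan-trevisan} the crucial move is to first radially project $F$ onto the unit sphere $S^{k-1}$; orthonormality of the $g_i$ then forces the pushed-forward mass measure $\mu(A)=\sum_{v:\,F(v)/\norm{F(v)}\in A} w(v)\norm{F(v)}^2$ to be \emph{isotropic} on the sphere, and it is this isotropy that guarantees no small spherical cap can capture too much mass, so that a random (Gaussian) partition of $S^{k-1}$ produces $k$ well-separated regions each carrying $\Omega(1/\poly(k))$ mass. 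Your phrasing (``no single small region of $V$ can monopolize the embedded mass'') gestures at this, but working directly in the metric $\norm{F(u)-F(v)}$ without the radial projection does not give the needed spreading --- heavy vertices could, for instance, all cluster near the origin. The radial projection and the isotropy lemma are precisely where the $\poly(k)$ losses are incurred and tracked, so to turn your plan into an actual proof you would need to make that step explicit.
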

Let $f_1,\ldots,f_k$ be as defined above. 
We consider two cases. First assume that $\vol(\supp(f_i))\leq \vol(V)/2$ for all $1\leq i\leq k$. Recall that $V_{f_i}(t_\opt)$ is the best threshold set of $f_i$. Let $S_i:=V_{f_i}(t_\opt)$. Then, for each function $f_i$, by \autoref{t:mainfunc},
\[\phi(S_i) = \phi(f_i) \leq O(l) \frac{\cR(f_i)}{\sqrt{\lambda_l}} \leq O(l k^6) \frac{\l_k}{\sqrt{\l_l}}.\]
Furthermore, since $S_i\subseteq \supp(f_i)$ and $f_1,\ldots,f_k$ are disjointly supported, $S_1,\ldots,S_k$ are disjoint. 
Hence, 
$$\phi_k(G) = \max_{1\leq i\leq k} \phi(S_i)\leq O(lk^6) \frac{\lambda_k}{\sqrt{\lambda_l}},$$
and we are done.
Now suppose there exists a function, say $f_k$, with $\vol(\supp(f_k))>\vol(V)/2$. Let $S_i=V_{f_i}(t_\opt)$ for $1\leq i\leq k-1$, and $S_k:=V\setminus S_1\setminus \ldots \setminus S_{k-1}$. Similar to the above, the sets $S_1,\ldots,S_{k-1}$ are disjoint, and $\phi(S_i)\leq O(lk^6\lambda_k/\sqrt{\lambda_l})$ for all $1\leq i\leq k-1$. 
Observe that
$$ \phi(S_k)  = \frac{w(E(S_1,S_k)) + \ldots+w(E(S_{k-1},S_k))}{\vol(V)-\vol(S_k)}\leq \frac{\sum_{i=1}^{k-1} w(E(S_i,\overline{S_i}))}{\sum_{i=1}^{k-1} \vol(S_i)}\leq O(lk^6) \frac{\lambda_k}{\sqrt{\lambda_l}},$$
where the first equality uses $\vol(S_k)\geq \vol(V)/2$. Hence, $\phi_k(G)\leq O(lk^6)\lambda_k/\sqrt{\lambda_l}$. This completes the proof of (i) of \autoref{c:multiway}.

To prove (ii) we use the following theorem of \cite{lee-gharan-trevisan}.
\begin{theorem}[{\cite[Theorem 4.6]{lee-gharan-trevisan}}]
For any graph $G=(V,E,w)$ and $\delta>0$, the following holds: 
For any  $k\geq 2$, there exist $r\geq (1-\delta)k$
non-negative disjointly supported functions $f_1,\ldots,f_r\in \ell^2(V,w)$ such that
for all $1\leq i\leq r$,
$$ \cR(f_i)\leq O(\delta^{-7}\log^2 k) \lambda_k.$$
\end{theorem}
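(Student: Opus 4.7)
The plan is to prove the $(1-\delta)k$-disjoint-supports theorem by combining the standard spectral embedding for $\lambda_k$ with a random partitioning argument that tolerates discarding a $\delta$-fraction of the parts in exchange for a much better Rayleigh bound than the tight $O(k^6)$ in part (i).

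First I would set up the $k$-dimensional spectral embedding. Let $g_1,\ldots,g_k$ be orthonormal eigenfunctions of $\mathcal L$ with eigenvalues $\lambda_1,\ldots,\lambda_k$, and put $f_i=D^{-1/2}g_i$. Form the map $F:V\to\R^k$ with $F(v)=(f_1(v),\ldots,f_k(v))$. Then $\sum_v w(v)F(v)F(v)^{T}=I_k$ (so $\|F\|_w^2=k$) and the total Dirichlet energy satisfies
\[
\sum_{u\sim v}w(u,v)\|F(u)-F(v)\|^2=\sum_{i=1}^{k}\cE_{f_i}\le k\lambda_k.
\]
Thus the "average" per-unit-mass energy of the embedding is at most $\lambda_k$, and the problem reduces to carving the embedded vertices into disjoint blobs whose radial functions inherit this energy bound.

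Second I would perform a random padded decomposition of the embedding. The key geometric ingredient is to reduce dimension by a Johnson--Lindenstrauss projection $\Pi:\R^k\to\R^{O(\log k)}$, so that pairwise distances (and hence edge lengths) are preserved up to a constant with high probability, and then apply a randomly shifted grid or a randomly shifted ball partition in $\R^{O(\log k)}$ at a scale tuned to the typical norm $\|F(v)\|$. By choosing the padding radius of order $\delta/\sqrt{\log k}$ times the local scale, a standard application of Lipschitz extension / Lovász Local Lemma style argument produces $k$ well-separated regions $R_1,\ldots,R_k\subset\R^{O(\log k)}$ such that most vertices land at distance at least this padding from the boundary of their region. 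For each region define $f_j(v):=\|F(v)\|\cdot\max\{0,1-d(F(v),R_j^{\text{core}})/(\text{pad})\}$, exactly in the spirit of \autoref{lem:densewellsep}, so that the $f_j$'s are disjointly supported and each is $O(\sqrt{\log k}/\delta)$-Lipschitz with respect to $F$.

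Third I would control the Rayleigh quotients via averaging and discarding. Because the $f_j$'s are disjointly supported and Lipschitz in $F$,
\[
\sum_{j=1}^{k}\cE_{f_j}\le O\!\left(\tfrac{\log k}{\delta^2}\right)\sum_{u\sim v}w(u,v)\|F(u)-F(v)\|^2 \le O\!\left(\tfrac{k\log k}{\delta^2}\right)\lambda_k,
\]
and a weighted averaging shows that at least $(1-\delta)k$ of the indices $j$ satisfy $\cE_{f_j}/\|f_j\|_w^2 \le O(\delta^{-7}\log^2 k)\lambda_k$ once one absorbs the loss from (a) vertices too close to boundaries, (b) the JL distortion, and (c) mass lost to the discarded $\delta k$ blobs; the powers of $\delta$ compound as the padding radius, the boundary loss probability, the minimum surviving mass of a block, and Markov-style discards are each traded off, giving the claimed $\delta^{-7}$.

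The main obstacle is step two: producing a partition whose parts are simultaneously (a) pairwise well-separated in the Lipschitz-w.r.t.-$F$ sense, (b) numerous enough that at least $(1-\delta)k$ survive the discard, and (c) each carrying enough embedded mass $\|F\|_w^2$ to give a small Rayleigh quotient. Getting all three against adversarial embeddings is precisely the padded-decomposition analysis of \cite{lee-gharan-trevisan}; the $\log^2 k$ factor is the unavoidable cost of the JL step combined with the padded partition radius, and the delicate bookkeeping of $\delta$-losses across the random partition, the mass-averaging, and the Lipschitz extension is where the $\delta^{-7}$ arises.
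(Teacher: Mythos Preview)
This theorem is not proved in the paper at all: it is quoted verbatim as \cite[Theorem~4.6]{lee-gharan-trevisan} and used as a black box in the proof of \autoref{c:multiway}(ii). There is therefore no ``paper's own proof'' to compare your proposal against; the paper simply imports the statement and moves on.

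For what it is worth, your sketch is broadly in the spirit of the argument in \cite{lee-gharan-trevisan}: spectral embedding into $\R^k$, Johnson--Lindenstrauss projection to $O(\log k)$ dimensions, a random padded/spreading decomposition, Lipschitz bump functions supported on the pieces, and a Markov-style discard of the worst $\delta k$ pieces. But as written it is not a proof: the claim that ``the powers of $\delta$ compound \ldots\ giving the claimed $\delta^{-7}$'' is an assertion, not an argument, and the appeal to ``Lov\'asz Local Lemma style argument'' for producing well-separated regions is not how the actual proof proceeds (the real argument uses a smooth localization and an isotropy/radial-projection step, not LLL). If your goal is to reproduce the result, you would need to carry out the padded-decomposition and mass-accounting steps explicitly rather than gesturing at them; if your goal is to use the result in the present paper, a citation suffices, which is exactly what the authors do.
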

It follows from (i) that without loss of generality we can assume that $\delta>10/k$. Let $\delta':=\delta/2$. 
Then, by the above theorem, there exist $r\geq (1-\delta')k$ non-negative disjointly supported functions $f_1,\ldots,f_r$ such that $\cR(f_i)\leq O(\delta^{-7}\log^2k)\lambda_k$ and $\vol(\supp(f_i))\leq \vol(V)/2$. 
For each $1\leq i\leq r$, let $S_i:=V_{f_i}(t_\opt)$. 
Similar to the argument in part (i), since $S_i\subseteq \supp(f_i)$,  
the sets $S_1,\ldots,S_r$ are disjoint. 
Without loss of generality assume that $\phi(S_1)\leq \phi(S_2)\leq\ldots\phi(S_r)$. 
Since $S_1,\ldots,S_{(1-\delta)k}$ are disjoint, 
\begin{equation}
\label{eq:upperphideltak}
 \phi_{(1-\delta)k}(G) \leq \phi(S_{(1-\delta)k+1}) \leq \ldots\leq \phi(S_r).
 \end{equation}
Let $m:=l/(\delta' k)=2l/(\delta k)$. 
If $\phi(f_i)\leq O(m)\cR(f_i)$ for some $(1-\delta)k < i\leq r$, then we get
$$ \phi_{(1-\delta)k}(G) \leq \phi(S_i) = \phi(f_i) \leq O(m) \cR(f_i) \leq O\left(\frac{l\log^2 k }{\delta^8k}\right)\lambda_k,$$
and we are done.
Otherwise, by \autoref{t:mainfunc}, for each $(1-\delta)k < i\leq r$, there exist $m$ disjointly
supported functions $h_{i,1},\ldots h_{i,m}$ such that for all $1\leq j\leq m$,
$\supp(h_{i,j})\subseteq \supp(f_i)$ and
\begin{equation}
\label{eq:upperhij}
 \cR(h_{i,j}) \leq O(m^2) \frac{\cR(f_i)^2}{\phi(f_i)^2}\leq O\left(\frac{l^2}{\delta^2 k^2}\right) \frac{O(\delta^{-14}\log^4k)\lambda_k^2}{\phi^2_{(1-\delta)k}(G)} = O\left(\frac{l^2 \log^4 k}{\delta^{16}k^2}\right) \frac{\lambda_k^2}{\phi^2_{(1-\delta)k}(G)}
 \end{equation}
where the second inequality follows from \eqref{eq:upperphideltak}. Since $f_{(1-\delta)k+1},\ldots,f_r$ are disjointly supported, all functions $h_{i,j}$ are disjointly supported as well. 
Therefore, since $l=m(\delta' k) \leq m(r-(1-\delta)k)$, by \autoref{c:orthogonal},
$$ \lambda_l \leq 2\max_{\substack{(1-\delta)k<i\leq r\\1\leq j\leq m}} \cR(h_{i,j}) \leq O\left(\frac{l^2 \log^4 k}{\delta^{16}k^2}\right) \frac{\lambda_k^2}{\phi^2_{(1-\delta)k}(G)}, $$
where the second inequality follows from \eqref{eq:upperhij}. 
This completes the proof of (ii) of \autoref{c:multiway}.

Part (iii) can be proved in a very similar way to part (ii). We just exploit the following theorem
of \cite{lee-gharan-trevisan}.
\begin{theorem}[{\cite[Theorem 3.7]{lee-gharan-trevisan}}]
For any graph $G=(V,E,w)$ that excludes $K_h$ as a minor, and any $\delta\in(0,1)$,
there exists $(1-\delta)k$ non-negative disjointly supported functions
$f_1,\ldots,f_{(1-\delta)k}$ such that
$$ \cR(f_i)\leq O(h^4\delta^{-4}) \lambda_k.$$
\end{theorem}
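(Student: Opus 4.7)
The plan is to build the $(1-\delta)k$ disjointly supported functions by combining the spectral embedding from the bottom $k$ eigenfunctions of $\mathcal{L}_G$ with a padded stochastic decomposition whose parameters are especially favorable for $K_h$-minor-free graphs.

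First, I would take $u_1,\ldots,u_k$ to be orthonormal eigenfunctions of $\mathcal{L}_G$ with eigenvalues $\lambda_1\leq\cdots\leq\lambda_k$, and form the spectral embedding $\Psi:V\to\mathbb{R}^k$ by $\Psi(v)_i=u_i(v)/\sqrt{w(v)}$. By orthonormality and the Rayleigh quotient bound on each eigenfunction,
$$\sum_{v}w(v)\|\Psi(v)\|^2=k,\qquad \sum_{u\sim v}w(u,v)\|\Psi(u)-\Psi(v)\|^2\leq k\lambda_k.$$
After a uniformly random rotation of $\mathbb{R}^k$ and a Gaussian-tail truncation, a standard calculation lets me replace $\Psi$ by an embedding into $\mathbb{R}^k$ whose per-coordinate mass is roughly uniformly distributed across vertices, while preserving the ratio of average squared edge-length to average squared vertex length up to absolute constants.

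Second, I would extract $(1-\delta)k$ well-separated heavy regions in the embedding. Using a spreading / quantile argument analogous to the one underlying \autoref{t:mainfunc} (applied coordinatewise after the random rotation), one finds disjoint subsets $S_1,\ldots,S_{(1-\delta)k}\subseteq V$ such that (i) each $S_i$ carries $\Omega(\delta/k)$ of the total squared mass $\sum_v w(v)\|\Psi(v)\|^2$, and (ii) the $S_i$ are pairwise separated at scale $\tau=\Theta(\delta^2/k)$ in the $\ell^2$-norm on $\Psi$. The $(1-\delta)$ instead of $1$ is exactly the price paid for insisting on disjointness of supports.

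Third—and this is where minor-freeness plays the crucial role—I would invoke the Klein–Plotkin–Rao padded decomposition for $K_h$-minor-free graphs: for every scale $\Delta>0$, there is a random partition of $V$ into clusters of hop-diameter $\le\Delta$ such that every vertex is $\Omega(\Delta/h^2)$-padded (i.e.\ its ball of that radius lies in its cluster) with constant probability, and every edge is cut with probability $O(h^2/\Delta)$. Pulling this back to the embedding (with $\Delta$ chosen so that a $\Delta$-ball in the graph metric has $\Psi$-diameter at most $\tau/2$), I can carve out, for each $S_i$, a non-negative bump function $f_i$ supported on the padded cluster that covers the bulk of $S_i$'s mass and that is $O(1/\tau)$-Lipschitz with respect to $\Psi$. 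Disjointness of supports then follows from the $\tau$-separation of the $S_i$'s.

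The main obstacle is getting the Rayleigh quotient bound to come out as $O(h^4\delta^{-4})\lambda_k$ and not worse. For each $f_i$ one has
$$\cR(f_i)\le \frac{O(\tau^{-2})\sum_{u\sim v}w(u,v)\|\Psi(u)-\Psi(v)\|^2}{\Omega(\delta/k)}\le\frac{O(k/\tau^2)}{\delta}\cdot\lambda_k,$$
and the KPR guarantee forces $\tau^{-2}\le O(h^4/\Delta^2)$ after the scale matching; balancing $\Delta$ against the per-cluster mass retention (which costs another $1/\delta$) yields the final $O(h^4\delta^{-4})\lambda_k$ factor. The delicate bookkeeping lies in making sure the randomization in step one, the quantile spreading in step two, and the padding in step three all succeed simultaneously on the same $(1-\delta)k$ indices, which I would handle by a union-bound-plus-averaging argument in the style of the first three steps of the companion proofs of parts (ii) and (iii) of \autoref{c:multiway}.
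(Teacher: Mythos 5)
This statement is quoted verbatim as Theorem~3.7 of \cite{lee-gharan-trevisan}; the present paper treats it as a black box and does not reprove it, so there is no in-paper proof to compare against. Your sketch does capture the broad outline of the argument in \cite{lee-gharan-trevisan}: spectral embedding from the bottom $k$ eigenfunctions, random smoothing of that embedding, a padded decomposition that exploits $K_h$-minor-freeness, and a final localization into disjointly supported Lipschitz bump functions.

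That said, there is a genuine gap in Step~3. The Klein--Plotkin--Rao decomposition you invoke partitions the graph with respect to its shortest-path (hop) metric, with padding and cut probabilities controlled by the hop scale $\Delta$. You then ``pull this back'' by choosing $\Delta$ so that a $\Delta$-hop ball has $\Psi$-diameter at most $\tau/2$. No such $\Delta$ exists in general: the embedding $\Psi$ has average squared edge length on the order of $\lambda_k$, but individual edges can have $\Psi$-length anywhere between $0$ and $\Theta(1)$, so a single hop can already span $\Psi$-distance much larger than $\tau$. The argument in \cite{lee-gharan-trevisan} avoids this by running the padded decomposition directly in the (pseudo)metric $d_\Psi(u,v)=\|\Psi(u)-\Psi(v)\|$ supported on the $K_h$-minor-free graph --- the decomposition scheme is a ball-carving procedure on the graph equipped with edge lengths $d_\Psi$, so its padding guarantee is stated in $\Psi$-distance from the outset rather than being inferred from a hop-to-$\Psi$ scale conversion. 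Without this correction, the separation of the supports $\supp(f_i)$ and the $O(1/\tau)$-Lipschitzness of the $f_i$ with respect to $\Psi$ do not follow, and the Rayleigh-quotient bound in your final display is unsupported. There are also smaller gaps you should expect to have to fill: the ``standard calculation'' after random rotation and truncation needs to be quantified (it contributes to the $\delta$-dependence), and the simultaneous success of the three randomized steps on the same $(1-\delta)k$ indices needs a careful averaging argument rather than a union bound, since each step individually fails with constant probability.
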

We follow the same proof steps as in part (ii) except that we upper bound $\cR(f_i)$ by $O(h^4 \delta^{-4})\lambda_k$. 
This completes the proof of \autoref{c:multiway}.

In the remaining part of this section we describe some examples. 
First we show that there exists a graph where $\phi_k(G) \geq \Omega(l-k+1) \lambda_k/\sqrt{\lambda_l}$. Let $G$ be a union of $k-2$ isolated vertices and a cycle of length $n$.
Then, $\phi_k(G) = \Theta(1/n)$, $\lambda_k = \Theta(1/n^2)$ and for $l>k$, $\lambda_{l} = \Theta((l-k+1)^2/n^2)$.
Therefore,
$$ \phi_k(G) \geq \Omega(l-k+1) \frac{\lambda_k}{\sqrt{\lambda_l}}$$
The above example shows that for $l\gg k$, the dependency on $l$ in the right hand
side of part (i) of \autoref{c:multiway} is necessary. 

Next we show that there exists a graph where $\phi_{k/2}(G) \geq \Omega(l/k) \lambda_k/\sqrt{\lambda_l}$. 
Let $G$ be a cycle of length $n$. Then, $\phi_{k/2}(G) = \Theta(k/n)$, $\lambda_k = \Theta(k^2/n^2)$ and $\lambda_l = \Theta(l^2/n^2)$. Therefore,
$$\phi_{k/2}(G) \geq \Omega(l/k) \frac{\lambda_k}{\sqrt{\lambda_l}}.$$
This shows that part (iii) of \autoref{c:multiway} is tight (up to constant factors) when $\delta$ is a constant.

\subsection{Balanced Separator} \label{s:bisection}

In this section we give a simple polynomial time algorithm with approximation factor $O(k/\lambda_k)$ for the balanced separator problem.
We restate \autoref{thm:minbisection-intro} as follows.

\begin{theorem}
\label{thm:minbisection}
Let 
$$\eps:=\min_{\vol(S)=\vol(V)/2} \phi(S).$$
There is a polynomial time algorithm
that finds a set $S$ such that $\frac15 \vol(V)\leq \vol(S)\leq \frac45 \vol(V)$, and $\phi(S)\leq O(k \eps/\lambda_k)$.
\end{theorem}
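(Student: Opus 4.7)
The plan is to use a recursive spectral partitioning scheme, applying the improved Cheeger inequality \autoref{t:maincheeger} at each level in place of the classical one. Initialize $T_0=\emptyset$, and at iteration $i\geq 1$ let $G_i:=G[V\setminus T_{i-1}]$. Apply \autoref{t:maincheeger} on $G_i$ to obtain a cut $S_i\subseteq V(G_i)$ with $\phi_{G_i}(S_i) = O(k)\lambda_2(G_i)/\sqrt{\lambda_k(G_i)}$. If $S_i$ is already balanced in $G_i$ (both sides having at least one-fifth of the $G_i$-volume), output $T_{i-1}\cup S_i'$ for the appropriate side; otherwise let $S'_i$ be the smaller side and set $T_i:=T_{i-1}\cup S'_i$. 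The process terminates as soon as $\tfrac15 \vol(V) \leq \vol(T_i) \leq \tfrac45 \vol(V)$, which must happen after $O(\log \vol(V))$ iterations by a standard doubling argument on $\vol(T_i)$.

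The analysis carries two spectral estimates through the recursion. First, $\lambda_2(G_i)=O(\eps)$: since $\vol(T_{i-1})\leq \vol(V)/5$, the optimal bisection $(A,B)$ of $G$ restricts to a cut in $G_i$ whose cost is at most $\eps\,\vol(V)/2$ and each of whose sides has $G$-volume at least $\tfrac{3}{10}\vol(V)$, giving a two-valued test function in $\ell^2(V(G_i),w_{G_i})$ of Rayleigh quotient $O(\eps)$. Second, $\lambda_k(G_i)=\Omega(\lambda_k(G))$: if $\lambda_k(G_i)$ were much smaller than $\lambda_k(G)$, the variational principle would furnish $k$ disjointly supported functions on $V(G_i)$ of small Rayleigh quotient w.r.t.\ $G_i$; extending each by zero to all of $V$ produces $k$ disjointly supported functions on $V$ whose Rayleigh quotients w.r.t.\ $G$ exceed those w.r.t.\ $G_i$ only through the cross-boundary contribution from $E(V(G_i),T_{i-1})$, which, since $T_{i-1}$ was built by peeling low-conductance cuts, is controlled. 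This would contradict \autoref{c:orthogonal} applied with the extensions, so the drop cannot be too large.

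Combining these estimates, each iteration's cut satisfies $\phi_{G_i}(S_i)=O(k\eps/\sqrt{\lambda_k(G)})$. Any edge of $E_G(T,V\setminus T)$ with endpoint $u\in S'_j$ (added at step $j$) and $v\notin T$ must lie inside $G_j$ and cross $S'_j$, so $w(E_G(T,V\setminus T))\leq \sum_j w(E_{G_j}(S'_j, V(G_j)\setminus S'_j))\leq (\max_j \phi_{G_j}(S_j))\cdot\sum_j \vol(S'_j)$. Since $\vol(T)=\sum_j \vol(S'_j)$, this gives $\phi(T)\leq \max_j \phi_{G_j}(S_j)=O(k\eps/\sqrt{\lambda_k(G)})$, which implies the claimed $O(k\eps/\lambda_k)$ bound (the two agree up to constants when $\lambda_k=\Omega(1)$, and in the regime $\lambda_k\ll 1$ the $\sqrt{\lambda_k}$ form is at least as strong).

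The main obstacle will be establishing the lower bound $\lambda_k(G_i)=\Omega(\lambda_k(G))$. Higher normalized Laplacian eigenvalues are not monotone under vertex deletion, so the extension-by-zero pullback has to be supplemented with a quantitative control on the inflation coming from the cross-boundary energy. The plan is to set up an inductive bound tying the accumulated boundary weight $w(E_G(T_{i-1},V\setminus T_{i-1}))$ to the accumulated peeled volume $\vol(T_{i-1})$, showing that as long as $\vol(T_{i-1})\leq \vol(V)/5$ the extended functions' Rayleigh quotients inflate by only a constant factor; this bookkeeping, together with verifying that the appropriate normalization conditions for applying \autoref{t:maincheeger} are preserved across iterations, is where most of the technical work of the proof lies.
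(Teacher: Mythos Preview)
Your recursion and the bound $\lambda_2(G_i)=O(\eps)$ are fine and match the paper's setup. The gap is the step $\lambda_k(G_i)=\Omega(\lambda_k(G))$: your sketch does not actually establish it, and the paper does \emph{not} prove (or use) any such statement.

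Your argument is that $k$ disjointly supported functions on $G_i$ with small $\cR_{G_i}$ extend by zero to functions on $G$ whose Rayleigh quotients inflate only through the boundary $E(V(G_i),T_{i-1})$, and that this boundary is small because $T_{i-1}$ was peeled by low-conductance cuts. But control on the \emph{total} weight $w(E(T_{i-1},V\setminus T_{i-1}))$ does not bound the inflation of an \emph{individual} $f_j$: the extra energy is $\sum_{u\in\supp(f_j)} f_j(u)^2\, w(E(\{u\},T_{i-1}))$, and nothing prevents a single vertex $u$ (or the whole support of one $f_j$) from having almost all its $G$-degree going into $T_{i-1}$, even when $\phi_G(T_{i-1})$ is tiny. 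So $\cR_G(f_j)$ can be $\Theta(1)$ while $\cR_{G_i}(f_j)$ is arbitrarily small, and the extension-by-zero argument collapses. (This is also why your conclusion $O(k\eps/\sqrt{\lambda_k})$ would be strictly stronger than the theorem's $O(k\eps/\lambda_k)$; that strengthening is not what the paper proves.)

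The paper circumvents this by never trying to compare $\lambda_k(G_i)$ with $\lambda_k(G)$. Instead it invokes the structural version \autoref{t:mainfunc} (not just \autoref{t:maincheeger}) to obtain the $k$ disjointly supported functions $f_1,\dots,f_k$ explicitly, and proves a dichotomy (\autoref{lem:rayleighenlargement}): for each $f_i$, either some threshold set $S$ satisfies $w(E(S,U\setminus S))\le w(E(S,\overline U))$, in which case $S$ can be peeled off with $\phi_G(\overline U\cup S)\le \phi_G(\overline U)$ and no spectral comparison is needed; or else $\cR_H(f_i)\ge \Omega(\cR_G(f_i)^2)$. In the latter case, plugging $\sqrt{\max_i \cR_H(f_i)}\ge \Omega(\max_i \cR_G(f_i))\ge \Omega(\lambda_k)$ into the bound from \autoref{t:mainfunc} gives $\phi_H(f)\le O(k\lambda'_2/\lambda_k)$ directly. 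The quadratic loss in the $H$-to-$G$ comparison is exactly what turns the $1/\sqrt{\lambda_k}$ into $1/\lambda_k$. Your proposal is missing this second ``safe removal'' branch; adding it (and using \autoref{t:mainfunc} rather than the black-box \autoref{t:maincheeger}) is what makes the recursion go through.
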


We will prove the above theorem by repeated applications of \autoref{t:main}.
Our algorithm is similar to the standard algorithm for finding a balanced separator by applying Cheeger's inequality repeatedly.
We inductively remove a subset of vertices of the remaining graph such that the union of the removed vertices is a non-expanding set in $G$, until the set of removed vertices has at least a quarter of the total volume. 
The main difference is that besides removing a sparse cut by applying \autoref{t:main}, there is an additional step that removes a subset of vertices such that the conductance of the union of the removed vertices does not increase.
The details are described in \autoref{alg:minbisection}.

\begin{algorithm}
\begin{algorithmic}
\STATE $U\leftarrow V$.
\WHILE {$\vol(U) > \frac45 \vol(V)$}
\STATE Let $H=(U,E(U))$ be the induced subgraph of $G$ on $U$, and $\lambda'_2$ be the second smallest eigenvalue of $\cL_H$.
\STATE Let $f\in \ell^2(U,w)$ be a non-negative function such that $\vol(\supp(f))\leq \vol(H)/2$, and $\cR_H(f)\leq \lambda'_2$.
\IF{$\phi_H(f) \leq O(k)\cR_H(f)/\sqrt{\lambda_k}$}
\STATE $U \leftarrow U\setminus U_f(t_\opt)$.
\ELSE
\STATE  Let $f_1,\ldots,f_k$ be $k$ disjointly supported functions such that $\supp(f_i)\subseteq \supp(f)$ and $$\phi_H(f)\leq O(k)\frac{\cR_H(f)}{\sqrt{\max_{1\leq i\leq k}\cR_H(f_i)}},$$ 
as defined in \autoref{t:mainfunc}.
\STATE Find  a threshold set $S=U_{f_i}(t)$ for $1\leq i\leq k$, and $t>0$ such that 
$$w(E(S,U\setminus S)) \leq w(E(S,V\setminus U)).$$
\vspace{-.5cm}
\STATE $U\leftarrow U\setminus S$.
\ENDIF
\ENDWHILE
\RETURN $\overline{U}$.
\end{algorithmic}
\caption{A Spectral Algorithm for Balanced Separator}
\label{alg:minbisection}
\end{algorithm}

Let $U$ be the set of  vertices remained after a number of steps of the induction, where initially $U=V$.
We will maintain the invariant that $\phi_G(\overline{U}) \leq O(k \eps/\lambda_k)$. 
Suppose $\vol(U)> \frac45 \vol(V)$. 
Let $H=(U,E(U))$ be the induced subgraph of $G$ on $U$, 
and $0=\lambda'_1\leq \lambda'_2\leq \ldots$ be the eigenvalues of $\cL_H$.
First, observe that $\lambda'_2 = O(\eps)$ as the following lemma shows.

\begin{lemma}
For any set $U\subseteq V$ with $\vol(U)\geq \frac45 \vol(V)$, let $H(U,E(U))$ be the induced subgraph of $G$ on $U$. Then the second smallest eigenvalue $\lambda'_2$ of $\cL_H$ is at most $10\eps$.
\end{lemma}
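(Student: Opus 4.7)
The plan is to construct a test function $g \in \ell^2(U, w_H)$ that is orthogonal to the constant function and has Rayleigh quotient $\mathcal{R}_H(g) \leq 10\eps$; by the variational characterization \eqref{eq:eigenvar} this immediately bounds $\lambda'_2$.

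Let $S^* \subseteq V$ be a bisection achieving $\phi_G(S^*) = \eps$, so $\vol_G(S^*) = \vol(V)/2$ and $w(E_G(S^*, \overline{S^*})) = \eps \cdot \vol(V)/2$. Set $A := S^* \cap U$ and $B := \overline{S^*} \cap U$; these partition $U$. The first substantive step is a volume bookkeeping argument showing that $\vol_H(A), \vol_H(B) \geq \vol(V)/10$. Since $\vol(V \setminus U) \leq \vol(V)/5$, we have $\vol_G(A) \geq \vol_G(S^*) - \vol(V\setminus U) \geq 3\vol(V)/10$, and similarly $\vol_G(B) \geq 3\vol(V)/10$. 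Passing from $G$ to $H$ removes exactly the edges incident to $V \setminus U$, whose total weight is at most $\vol(V\setminus U) \leq \vol(V)/5$, so $\vol_H(A) \geq \vol_G(A) - \vol(V)/5 \geq \vol(V)/10$, and analogously for $B$.

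Next I would define
\[
g(v) := \begin{cases} +1/\vol_H(A) & v \in A, \\ -1/\vol_H(B) & v \in B. \end{cases}
\]
Then $\langle g, \mathbf{1}\rangle_{w_H} = \vol_H(A)\cdot(1/\vol_H(A)) - \vol_H(B)\cdot(1/\vol_H(B)) = 0$, so $g$ is a valid test function. Its squared weighted norm is $\|g\|_{w_H}^2 = 1/\vol_H(A) + 1/\vol_H(B)$. Since $g$ is constant on each of $A, B$, only edges in $E_H(A,B)$ contribute to the Dirichlet form, and $E_H(A,B) \subseteq E_G(S^*, \overline{S^*})$ (edges of $H$ are edges of $G$ with both endpoints in $U$), so
\[
\sum_{\{u,v\} \in E(U)} w(u,v)|g(u)-g(v)|^2 \;\leq\; w\bigl(E_G(S^*,\overline{S^*})\bigr)\cdot\Bigl(\tfrac{1}{\vol_H(A)} + \tfrac{1}{\vol_H(B)}\Bigr)^2.
\]

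Finally I would combine these bounds:
\[
\mathcal{R}_H(g) \;\leq\; \frac{\eps \vol(V)/2 \cdot \bigl(1/\vol_H(A) + 1/\vol_H(B)\bigr)^2}{1/\vol_H(A) + 1/\vol_H(B)} \;=\; \tfrac{\eps \vol(V)}{2}\cdot\Bigl(\tfrac{1}{\vol_H(A)} + \tfrac{1}{\vol_H(B)}\Bigr) \;\leq\; \tfrac{\eps \vol(V)}{2}\cdot\tfrac{20}{\vol(V)} = 10\eps,
\]
using the lower bounds $\vol_H(A),\vol_H(B) \geq \vol(V)/10$. There is no real obstacle here; the only slightly delicate point is the volume bookkeeping, where one must note that removing the edges incident to $V\setminus U$ only shrinks $\vol_G(A)$ by at most $\vol(V\setminus U)$, not by more.
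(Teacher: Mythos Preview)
Your proof is correct and follows essentially the same approach as the paper: both restrict the optimal bisection $S^*$ to $U$ and use the resulting cut $(A,B)$ as a witness, with the same volume bookkeeping showing $\vol_H(A),\vol_H(B)\geq \vol(V)/10$. The only cosmetic difference is that the paper bounds $\phi_H(A)\leq 5\eps$ and then invokes the easy direction of Cheeger's inequality (\ref{eq:Cheegerineq}) as a black box, whereas you unfold that step explicitly via the standard test function $g = \mathbf{1}_A/\vol_H(A) - \mathbf{1}_B/\vol_H(B)$.
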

\begin{proof}
Let $(T,\overline{T})$ be the optimum bisection, and let $T':=U\cap T$. 
Since $\vol(U)\geq \frac45 \vol(V)$, and $\vol(T)= \vol(V)/2$, we have 
$$\vol_H(T')\geq \vol_G(T)-2\vol_G(\overline{U}) \geq \vol(V)/2 - 2\vol(V)/5 = \vol(V)/10 = \vol(T)/5.$$
Furthermore, since $E(T',U\setminus T') \subseteq E(T,\overline{T})$, 
we have 
$$\phi_H(T') = \frac{w(E(T',U \setminus T'))}{\vol_H(T')} \leq \frac{w(E(T,\overline{T}))}{\vol_G(T)/5} \leq 5\phi(T)= 5\eps.$$
Therefore, by the easy direction of Cheeger's inequality \eqref{eq:Cheegerineq}, we have $\lambda'_2\leq 10\eps$.
\end{proof}

To prove \autoref{thm:minbisection}, it is sufficient to find a set $S\subseteq U$ with $\vol_H(S) \leq \frac12 \vol_H(U)$ and conductance $\phi_H(S)\leq O(k \lambda'_2/\lambda_k) = O(k \eps/\l_k)$, because
$$\phi_G(\overline{U} \cup S) \leq 
\frac{w(E_G(\overline{U}, U)) + w(E_H(S, \overline{S}))}{\vol_G(\overline{U}) + \vol_H(S)} \leq \max(\phi_G(\overline{U}),\phi_H(S))) \leq O(k \eps/\l_k),$$ 
and so we can recurse until $\frac{1}{5} \vol(V) \leq \vol(\overline{U} \cup S) \leq \frac{4}{5} \vol(V)$.
Let $f\in \ell^2(U,w)$ be a non-negative
function such that $\vol_H(\supp(f))\leq \frac12 \vol_H(U)$ and $\cR_H(f)\leq \lambda'_2$, 
as defined in \autoref{c:two}.  
If $\phi_H(f)\leq O(k\lambda'_2/\lambda_k)$, then we are done. 
Otherwise, we will
find a set $S$ such that $\vol_H(S)\leq \frac{1}{2} \vol_H(U)$ and $w(E(S,U\setminus S))\leq w(E(S,\overline{U}))$. This implies
that we can simply remove $S$ from $U$ without increasing the expansion of the union of the removed vertices, because $\phi_G(S\cup \overline{U}) \leq \phi_G(\overline{U})$ as the numerator (total weight of the cut edges) does not increase while the denominator (volume of the set) can only increase. 

It remains to find a set $S$ with either of the above properties.  
We can assume that $\phi_H(f)\nleq O(k)\cR_H(f)$ as otherwise we are done.
Then, by \autoref{t:mainfunc}, there are $k$ disjointly supported functions $f_1,\ldots,f_k\in \ell^2(U,w)$  such that $\supp(f_i)\subseteq \supp(f)$ and
$$\phi_H(f)\leq O(k)\frac{\lambda'_2 }{\sqrt{\max \cR_H(f_i)}}.$$
We extend $f_i \in \ell^2(U,w)$ to $f_i \in \ell^2(V,w)$ by defining $f_i(v)=0$ for $v \in V-U$.
We will prove that either $\phi_H(f)\leq O(k\lambda'_2/\lambda_k)$, or there is a threshold set $S=V_{f_i}(t)$ for some $1\leq i\leq k$ and $t>0$ such that $ w(E(S,U\setminus S)) \leq w(E(S,\overline{U}))$.
As $f_1, \ldots, f_k$ can be computed in polynomial time, this will complete the proof of \autoref{thm:minbisection}.

Suppose that for every $f_i$ and any threshold set $S=V_{f_i}(t)$ we have $w(E(S,\overline{U})) \leq w(E(S,U\setminus S))$. Then, by \autoref{lem:rayleighenlargement} that we will prove below, $\cR_H(f_i) \geq \Omega(\cR_G^2(f_i))$ for every $1\leq i\leq k$. 
This implies that
$$ \phi_H(f) \leq O(k)\frac{\lambda'_2}{\sqrt{\max_{1\leq i\leq k} \cR_H(f_i)}} \leq 
O(k) \frac{\lambda'_2}{\sqrt{\max_{1\leq i\leq k} \cR_G^2(f_i)}} \leq O(k)\frac{\lambda'_2}{\lambda_k},$$
where the last inequality follows by \autoref{c:orthogonal} and the fact that $f_1,\ldots,f_k$ are
disjointly supported. 

\begin{lemma}
\label{lem:rayleighenlargement}
For any set $U\subseteq V$, let $H(U,E(U))$ be the induced subgraph of $G$ on $U$, and  $f\in \ell^2(V,w)$ be a non-negative function such that $f(v)=0$ for any $v\notin V-U$. 
Suppose that for any threshold set $V_f(t)$, we have
$$w(E(V_f(t),\overline{U})) \leq w(E(V_f(t), U\setminus V_f(t))),$$
then
$$ \sqrt{8\cR_H(f)} \geq \cR_G(f).$$
\end{lemma}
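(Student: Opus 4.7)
My plan is to decompose the $G$-energy of $f$ into an $H$-part plus a boundary term, use the hypothesis to bound the boundary term in terms of the $H$-energy, and then close with a Cheeger-style Cauchy--Schwarz computation.

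First, since $f$ vanishes on $\overline{U}$, edges inside $\overline U$ contribute nothing to $\cE_G(f)$, and edges $\{u,v\}$ with $u\in U,v\in\overline U$ contribute $w(u,v)f(u)^2$. Hence $\cE_G(f)=\cE_H(f)+B$ with $B:=\sum_{u\in U,v\in\overline U}w(u,v)f(u)^2$. I would then rewrite $B$ via the coarea identity $f(u)^2=\int_0^\infty 2t\,\mathbf{1}[u\in V_f(t)]\,dt$ as
\[ B=\int_0^\infty 2t\cdot w(E(V_f(t),\overline U))\,dt. \]
By the hypothesis, $w(E(V_f(t),\overline U))\leq w(E(V_f(t),U\setminus V_f(t)))$ for every $t>0$, so $B\leq \int_0^\infty 2t\cdot w(E(V_f(t),U\setminus V_f(t)))\,dt$. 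Reversing coarea (summing over edges $\{u,v\}\in E(H)$ with, say, $f(u)\geq f(v)$, and using $\int_{f(v)}^{f(u)}2t\,dt=f(u)^2-f(v)^2$) gives
\[ B\leq \sum_{u\sim_H v} w(u,v)\bigl|f(u)^2-f(v)^2\bigr|. \]

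For the last sum I would use the standard Cheeger trick: factor $|f(u)^2-f(v)^2|=|f(u)-f(v)|(f(u)+f(v))$ and apply Cauchy--Schwarz to obtain
\[ \sum_{u\sim_H v}w(u,v)|f(u)^2-f(v)^2| \leq \sqrt{\cE_H(f)}\cdot\sqrt{\sum_{u\sim_H v}w(u,v)(f(u)+f(v))^2}. \]
Using $(a+b)^2\leq 2(a^2+b^2)$, the second factor is at most $\sqrt{2\sum_{v\in U}w_H(v)f(v)^2}\leq \sqrt{2\|f\|_w^2}$, where in the last step I use $w_H(v)\leq w(v)$ on $U$ together with $f\equiv 0$ on $\overline U$. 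Putting the pieces together yields $\cE_G(f)\leq \cE_H(f)+\sqrt{2\cE_H(f)\,\|f\|_w^2}$, and dividing by $\|f\|_w^2$ (again using $\|f\|_{w_H}\leq\|f\|_w$ to compare denominators) gives $\cR_G(f)\leq \cR_H(f)+\sqrt{2\cR_H(f)}$.

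To finish, I would invoke the universal bound $\cR_H(f)\leq 2$ (the largest eigenvalue of a normalized Laplacian is at most $2$), which implies $\cR_H(f)\leq \sqrt{2\cR_H(f)}$, so
\[ \cR_G(f)\leq 2\sqrt{2\cR_H(f)}=\sqrt{8\cR_H(f)}, \]
as desired. I don't foresee a real obstacle; the only mildly delicate point is the mismatch between the normalizations of $\cR_G$ and $\cR_H$, which is resolved cleanly by $w_H(v)\leq w_G(v)$ on the support of $f$. The conceptual content is entirely in the coarea-plus-hypothesis step that converts the boundary edges of $U$ into an integral that is dominated by the energy of $f$ inside $H$.
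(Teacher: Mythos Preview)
Your proof is correct and follows essentially the same route as the paper: both arguments express the relevant quantities via the coarea identity, invoke the threshold hypothesis to dominate the $U$--$\overline U$ boundary by the internal $H$-cut, and then apply Cauchy--Schwarz to $\sum_{u\sim_H v}w(u,v)\,|f(u)^2-f(v)^2|$ to produce the $\sqrt{2\cR_H(f)}$ factor. The only cosmetic difference is in the final step: the paper lower-bounds that same sum directly by $\tfrac12\cR_G(f)$ using $|a^2-b^2|\ge(a-b)^2$ for nonnegative $a,b$, whereas you first split $\cE_G(f)=\cE_H(f)+B$ and then absorb the extra $\cR_H(f)$ term via the universal bound $\cR_H(f)\le 2$ --- your intermediate inequality $\cR_G(f)\le \cR_H(f)+\sqrt{2\cR_H(f)}$ is in fact slightly sharper before that last crude step.
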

\begin{proof}
Since both sides of the inequality are homogeneous in $f$, we may assume that $\max_v f(v)\leq 1$. Furthermore, we can assume that $\sum_v w(v)f^2(v)=1$ (this is achievable since we assumed that $w(v)\geq 1$ for all $v\in V$). Observe that, since $w_H(v)\leq w_G(v)$ for all $v\in U$,
\begin{equation}
\label{eq:Hrayleighdenom2}
\sum_{v\in U} w_H(v) f^2(v) \leq  \sum_{v\in U} w_G(v) f^2(v) = \sum_v w_G(v) f^2(v) =1.
\end{equation}
Let $0< t\leq 1$ be chosen uniformly at random. Then, by linearity of expectation,
\begin{eqnarray}
\E{w(E(V_f(\sqrt t), U\setminus V_f(\sqrt t)))} &=& \sum_{(u,v)\in E(U)}w(u,v) |f^2(u)-f^2(v)| \nonumber \\
&=& \sum_{(u,v)\in E(U)} w(u,v) |f(u)-f(v)| |f(u)+f(v)|\nonumber\\
&\leq& \sqrt{\sum_{(u,v)\in E(U)} w(u,v) |f(u)-f(v)|^2} \sqrt{\sum_{(u,v)\in E(U)} w(u,v) (f(u)+f(v))^2}\nonumber\\
&\leq & \sqrt{2\cR_H(f)}. \label{eq:insideHedges}
\end{eqnarray}
where the first equality uses the fact that $f(v)\leq 1$ for all $v\in V$, and the last inequality follows by \eqref{eq:Hrayleighdenom2}. 
On the other hand, since $w(E(V_f(t),\overline{U})) \leq w(E(V_f(t), U\setminus V_f(t)))$ for any $t$,
\begin{eqnarray}
\E{w(E(V_f(\sqrt t), U\setminus V_f(\sqrt t)))} &\geq& \frac12\E{w(E(V_f(\sqrt t), \overline{V_f(\sqrt t)}))} \nonumber\\
&=& \frac12 \sum_{u\sim v} w(u,v) |f^2(u)-f^2(v)|\nonumber\\
&\geq & \frac12 \sum_{u\sim v} w(u,v) |f(u)-f(v)|^2 = \frac12 \cR_G(f).\label{eq:HGedges}
\end{eqnarray}
where the last inequality follows by the fact that $f(v)\geq 0$ for all $v\in V$, and the last equality  follows by the normalization $\sum_v w(v)f^2(v)=1.$ 
Putting together \eqref{eq:insideHedges} and \eqref{eq:HGedges} proves the lemma.
\end{proof}

\def\mlambda{\alpha}
\def\ml{\alpha}
\def\X{L}
\def\Y{R}
\subsection{Maximum Cut} \label{s:maxcut}

%
In this subsection we show that our techniques can be extended to the maximum cut 
problem providing a new spectral algorithm with its approximation ratio in terms of higher eigenvalues of the graph.

Let $\cM_G:= I + D^{-1/2} A D^{-1/2}$.
Observe that $\cM$ is a positive semi-definite matrix,
and an eigenvector with eigenvalue $\ml$ of $\cM$ is an eigenvector with eigenvalue $2-\ml$ of $\cL$.
We use $0\leq \mlambda_1 \leq \cdots \leq \mlambda_n\leq 2$ to denote its eigenvalues. 
In this section we analyze a polynomial time approximation algorithm for the Maximum Cut problem using the higher eigenvalues of $\cM$.
We restate \autoref{thm:maxcut-intro} as follows.

\begin{theorem}
\label{thm:maxcut}
There is a polynomial time algorithm that on input graph $G$ finds a cut $(S,\overline{S})$ such that if the optimal solution cuts at least $1-\eps$ fraction of the edges, 
then $(S,\overline{S})$ cuts at least 
$$1-O(k)\log(\frac{\mlambda_k}{k\eps})\frac{\eps}{\mlambda_k}$$ 
fraction of edges. 
\end{theorem}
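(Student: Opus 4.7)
The plan is to mirror the proof of \autoref{t:main}, but replacing the normalized Laplacian $\cL$ by $\cM = 2I - \cL$ and the conductance by the max-cut analog---the bipartiteness ratio of a bi-partial cut $(L, R, V \setminus (L\cup R))$. Start with an eigenfunction $g$ of $\cM$ with eigenvalue $\mlambda_k$ and let $f = D^{-1/2} g \in \ell^2(V,w)$, so that the $\cM$-Rayleigh quotient
\[
\cR_\cM(f) := \frac{\sum_{u \sim v} w(u,v)(f(u)+f(v))^2}{\norm{f}_w^2}
\]
equals $\mlambda_k$. The underlying rounding will be Trevisan's signed level-set rounding: for a threshold $t \geq 0$, assign $v$ to $L$ if $f(v) < -t$, to $R$ if $f(v) > t$, and leave $v$ unassigned otherwise.

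The first technical step is an analog of \autoref{l:lambda_k} for $\cM$: any function $f$ admits a $(2k+1)$-step approximation $\tilde g$ of $|f|$ (with signs inherited from $f$) such that $\norm{f-\tilde g}_w^2 = O(\cR_\cM(f)/\mlambda_k)$. The proof is essentially identical to the original, with $\cR$ replaced by $\cR_\cM$ and \autoref{c:orthogonal} replaced by its $\cM$-analog (the variational principle implies that $\mlambda_k$ is bounded below by the maximum $\cR_\cM$-value over $k$ disjointly supported functions). If no such approximation existed, the successive ``jumps'' could be turned into $k$ disjointly supported functions of low $\cR_\cM$-quotient, a contradiction.

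The second technical step is an analog of \autoref{l:jump}: given a $(2k+1)$-step approximation $\tilde g$ of $f$, a signed level-set rounding of $f$ yields a bi-partial cut whose bipartiteness ratio is at most $O(k) \bigl(\cR_\cM(f) + \norm{f-\tilde g}_w \sqrt{\cR_\cM(f)}\bigr)$. The argument follows the template of \autoref{cl:denom} and \autoref{cl:numerator}, with the density $\mu(x) = |x - \psi(x)|$ now defined over both positive and negative thresholds, and with \autoref{lem:dirichletcheeger} replaced by its bipartite counterpart of the form $\beta(h) \leq \sum_{u \sim v} w(u,v) |h(u) + h(v)| / \sum_v w(v) |h(v)|$ (standard in Trevisan's framework). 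Combining the two steps, with $\cR_\cM(f) \leq \mlambda_k$, yields a bi-partial cut of bipartiteness $O(k \eps / \mlambda_k)$ when the optimum has $\eps$ fraction uncut.

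The logarithmic factor will emerge from iterating this bi-partial cut procedure to produce a full cut of $V$. Starting from $V$, apply the rounding, commit the two sides $L, R$ to the output, and recurse on the unassigned vertices. A transfer argument analogous to \autoref{lem:rayleighenlargement} ensures that the relevant spectral quantities of $\cM$ on the residual subgraph stay comparable to those of $G$, so that each round proceeds with an analogous guarantee. After $O(\log(\mlambda_k/(k\eps)))$ rounds the residual is negligible, and summing the uncut edges across rounds gives the claimed bound. The main obstacle will be this iterative bookkeeping---showing that the spectral structure of $\cM$ is preserved across recursion (so the $(2k+1)$-step approximation continues to apply) and that the telescoping of uncut edges across rounds stays within a logarithmic factor of the one-shot bound.
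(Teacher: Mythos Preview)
Your outline has the right high-level shape but contains a real quantitative gap. Combining your two technical steps (the $\cM$-analogs of \autoref{l:lambda_k} and \autoref{l:jump}) only yields
\[
\beta(f)\;\le\;O(k)\Bigl(\cR_\cM(f)+\norm{f-\tilde g}_w\sqrt{\cR_\cM(f)}\Bigr)
\;=\;O(k)\,\frac{\cR_\cM(f)}{\sqrt{\mlambda_k}},
\]
i.e.\ a $\sqrt{\mlambda_k}$ in the denominator, not the $\mlambda_k$ the theorem requires. (Also, you want the \emph{first} eigenfunction of $\cM$, so that $\cR_\cM(f)=\mlambda_1\le 2\eps$; starting from the $k$-th eigenfunction does nothing useful.) Simply iterating the rounding on the unassigned vertices does not remove this square root: on the residual graph $H$ you still only get bipartiteness $O(k)\,\mlambda_1'/\sqrt{\mlambda_k'}$ in terms of the eigenvalues of $H$, and there is no a~priori control of $\mlambda_k'(H)$ by $\mlambda_k(G)$.

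The paper removes the square root by a different mechanism. It uses the dichotomy form of the step-approximation lemma (\autoref{lem:ksteps}): either $\beta_H(f)\le O(k)\cR_H(f)$ already, or there are $k$ disjointly supported functions $f_1,\dots,f_k$ with $\beta_H(f)\le O(k)\,\cR_H(f)/\sqrt{\max_i\cR_H(f_i)}$. The enlargement lemma (\autoref{lem:maxcutenlargement}) is then applied \emph{to these $f_i$}, not to the eigenfunction: if no threshold cut of any $f_i$ can be merged into the current $(L,R)$ without increasing uncutness, then $\cR_H(f_i)\ge\Omega(\cR_G(f_i)^2)$, and since $\max_i\cR_G(f_i)\ge\mlambda_k/2$ by \autoref{c:orthogonal}, one gets $\sqrt{\max_i\cR_H(f_i)}\ge\Omega(\mlambda_k)$ --- that is where the square root disappears. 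Thus each round either finds a cut of bipartiteness $O(k)\eps/(\rho\,\mlambda_k)$ (with $\rho$ the current edge fraction), or finds a ``free'' cut that does not increase uncutness at all; the logarithm then comes from integrating $\int (1-O(k)\eps/(r\mlambda_k))\,dr$. Your sketch treats the enlargement lemma as a generic ``spectral quantities stay comparable'' device; that is not its role here, and without the dichotomy above the argument is stuck at $O(k)\eps/\sqrt{\mlambda_k}$.
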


The structure of this algorithm is similar to the structure of the algorithm for the balanced separator problem, with the following modifications.
First, we use the bipartiteness ratio of an induced cut defined in~\cite{trevisan09} in place of the conductance of a cut.
Then, similar to the first proof of \autoref{t:main}, we show that the spectral algorithm in~\cite{trevisan09} returns an induced cut with bipartiteness ratio $O(k \ml_1 / \sqrt{\ml_k})$.
Finally, we iteratively apply this improved analysis along with an additional step to obtain a cut with the performance guaranteed in \autoref{thm:maxcut}.

For an induced cut $(\X,\Y)$ such that $\X \cup \Y\neq \emptyset$, the {\em bipartiteness ratio} of $(\X,\Y)$ is defined as

\[ \beta(\X,\Y) := \frac{2w(E(\X)) + 2w(E(\Y)) + w(E(\X\cup \Y, \overline{\X\cup \Y}))}{\vol(\X\cup \Y)}. \]


The bipartiteness ratio $\beta(G)$ of $G$ is the minimum of $\beta(\X,\Y)$ over all induced cuts $(\X,\Y)$. 
For a function $f\in \ell^2(V,w)$ and a threshold $t\geq 0$, let $\X_f(t):=\{v: f(v)\leq -t\}$ and $\Y_f(t):=\{v: f(v)\geq t\}$ be a threshold cut of $f$. 
We let
$$ \beta(f):=\min_{t\geq 0} \beta(\X_f(t),\Y_f(t))$$
be the bipartiteness ratio of the best threshold cut of $f$, 
and let $(L_f(t_\opt),R_f(t_\opt))$ be the best threshold cut of $f$.
The following lemma is proved in \cite{trevisan09} and the proof is a simple extension of \autoref{lem:dirichletcheeger}.
\begin{lemma}[\cite{trevisan09}]
\label{lem:trevisanbipartiteness}
For every non-zero function $h\in \ell^2(V,w)$, 
$$\beta(h) \leq \frac {\sum_{u\sim v}w(u,v) |h(v) + h(u)| }{\sum_v w(v) |h(v)| }.$$
\end{lemma}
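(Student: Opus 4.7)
The plan is to mimic the random threshold argument used to establish \autoref{lem:dirichletcheeger}, adapted to the two-sided threshold cuts that define the bipartiteness ratio. First I would homogeneously rescale $h$ so that $\max_v |h(v)| \leq 1$; both sides of the claimed inequality are invariant under positive scalar multiples of $h$, so this costs nothing. Then I would pick a threshold $t\in(0,1]$ uniformly at random and consider the random induced cut $(\X_h(t), \Y_h(t))$ where $\X_h(t)=\{v: h(v)\leq -t\}$ and $\Y_h(t)=\{v: h(v)\geq t\}$.

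Next I would compute the expectation of the numerator and denominator of $\beta(\X_h(t),\Y_h(t))$ separately. For the denominator, a vertex $v$ lies in $\X_h(t)\cup\Y_h(t)$ iff $t\leq |h(v)|$, so by linearity,
\[
\E{\vol(\X_h(t)\cup \Y_h(t))} = \sum_v w(v)\,|h(v)|.
\]
For the numerator, I would do the accounting edge by edge. Fix $\{u,v\}\in E$ with $a:=h(u)$ and $b:=h(v)$. Split into cases according to the signs of $a,b$: when $a,b$ have the same sign, the edge contributes $2w(u,v)$ (as an $E(\X_h(t))$ or $E(\Y_h(t))$ edge) while $t\leq \min(|a|,|b|)$ and contributes $w(u,v)$ (as a boundary edge leaving $\X_h(t)\cup\Y_h(t)$) while $\min(|a|,|b|)< t\leq \max(|a|,|b|)$, giving expected contribution $w(u,v)(|a|+|b|)=w(u,v)|a+b|$; when $a,b$ have opposite signs, the edge sits across $\X_h(t)$ and $\Y_h(t)$ while $t\leq \min(|a|,|b|)$ and contributes nothing, and contributes $w(u,v)$ while $\min(|a|,|b|)< t\leq \max(|a|,|b|)$, for an expected contribution of $w(u,v)\bigl||a|-|b|\bigr|=w(u,v)|a+b|$. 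So in every case, the expected contribution of edge $\{u,v\}$ is $w(u,v)|h(u)+h(v)|$, and summing,
\[
\E{2w(E(\X_h(t))) + 2w(E(\Y_h(t))) + w(E(\X_h(t)\cup \Y_h(t),\overline{\X_h(t)\cup\Y_h(t)}))} = \sum_{u\sim v} w(u,v)\,|h(u)+h(v)|.
\]

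Finally, I would conclude by the standard ratio-of-expectations trick: if the ratio of expectations equals $R$, then because the denominator random variable is non-negative (and positive with positive probability, as $h$ is non-zero), there must exist some $t\in(0,1]$ at which the numerator is at most $R$ times the denominator, giving an induced cut with bipartiteness ratio at most $R$. This yields the claimed bound on $\beta(h)=\min_{t\geq 0}\beta(\X_h(t),\Y_h(t))$. The only step that requires more than routine care is the edge-by-edge case analysis for the expected numerator: the subtle point is that opposite-sign edges contribute $0$ to the bipartiteness-violating count whenever they are cut by the threshold pair, so one must check that the $w(u,v)\,||a|-|b||$ contribution from the regime where only one endpoint lies in $\X_h(t)\cup\Y_h(t)$ indeed collapses to $w(u,v)|h(u)+h(v)|$ via the identity $||a|-|b|| = |a+b|$ when $\mathrm{sgn}(a)\neq\mathrm{sgn}(b)$.
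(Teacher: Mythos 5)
Your proof is correct and is precisely the argument the paper has in mind: the paper cites Trevisan and remarks that the proof is "a simple extension of Lemma~\ref{lem:dirichletcheeger}," and your random two-sided threshold $t\in(0,1]$ (after normalizing $\max_v|h(v)|\leq 1$), the edge-by-edge case analysis confirming that each edge's expected contribution to the numerator is $w(u,v)|h(u)+h(v)|$ regardless of the signs of $h(u),h(v)$, and the ratio-of-expectations step exactly carry out that extension.
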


In this section we abuse the notation and write $\cR(f)$, the Rayleigh quotient of $f$, as
$$ \cR(f):=\frac{\sum_{u\sim v} w(u,v) |f(u)+f(v)|^2}{\sum_v w(v) f(v)^2}.$$
This is motivated by the fact that the eigenfunctions of $\cM$ are the optimizers of the above ratio. In particular, using the standard variational principles and \autoref{c:orthogonal}, 
\begin{eqnarray*}
\mlambda_k &=& 
 \min_{f_1, \ldots, f_k \in \ell^2(V,w)} \max_{f \neq 0}  \left\{ \vphantom{\bigoplus} \cR(f) : f \in \mathrm{span}\{f_1, \ldots, f_k\}\right\} \\
 &\leq & 2\min_{\substack{f_1,\ldots,f_k\in\ell^2(V,w) \\ \text{disjointly supported}}} \max_{1\leq i\leq k} \cR(f_i),
\end{eqnarray*}
where the first minimum is over sets of $k$ non-zero orthogonal functions in the Hilbert space $\ell^2(V,w)$, and the second minimum is over sets of $k$ disjointly supported functions in $\ell^2(V,w)$.

Trevisan~\cite{trevisan09} proved the following characterization of the bipartiteness ratio in terms of $\mlambda_1$.
\begin{theorem}[\cite{trevisan09}]
\label{thm:trevisanbipartiteness}
For any undirected graph $G$, 
\[ \frac {\mlambda_1}2 \leq \beta(G) \leq \sqrt{2 \mlambda_1} \]
\end{theorem}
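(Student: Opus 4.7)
The plan is to prove the two inequalities separately: the lower bound $\mlambda_1/2 \leq \beta(G)$ comes from exhibiting a test function realizing the minimum in the Rayleigh characterization of $\mlambda_1$, and the upper bound $\beta(G) \leq \sqrt{2\mlambda_1}$ is a Cheeger-type rounding argument using Lemma~\ref{lem:trevisanbipartiteness} applied to a suitable transform of the first eigenfunction.

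For the easy direction, let $(L,R)$ be an induced cut attaining $\beta(G)$ and define $f \in \ell^2(V,w)$ by $f(v)=-1$ for $v\in L$, $f(v)=+1$ for $v\in R$, and $f(v)=0$ otherwise. A direct case analysis of $|f(u)+f(v)|^2$ shows that edges inside $L$ or inside $R$ contribute $4w(u,v)$, edges between $L$ and $R$ contribute $0$, and edges between $L\cup R$ and $\overline{L\cup R}$ contribute $w(u,v)$; hence the numerator of $\cR(f)$ equals $4w(E(L))+4w(E(R))+w(E(L\cup R,\overline{L\cup R}))$. Since the denominator is $\vol(L\cup R)$, we get $\cR(f)=2\beta(L,R)=2\beta(G)$, and the variational characterization $\mlambda_1 \leq \cR(f)$ yields the lower bound.

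For the hard direction, let $f$ be a first eigenfunction of $\cM$, so $\cR(f)=\mlambda_1$, and apply Lemma~\ref{lem:trevisanbipartiteness} to the ``signed-square'' function $h(v):=f(v)|f(v)|$. Then $|h(v)|=f(v)^2$, so the denominator in the lemma equals $\|f\|_w^2$. The key algebraic inequality, to be checked by cases on the signs of $f(u)$ and $f(v)$, is
\[ |h(u)+h(v)|\;=\;\bigl|f(u)|f(u)|+f(v)|f(v)|\bigr|\;\leq\;|f(u)+f(v)|\cdot\bigl(|f(u)|+|f(v)|\bigr). \]
Indeed, when $f(u),f(v)$ have the same sign both sides equal $f(u)^2+f(v)^2$, and when they have opposite signs the left side equals $|f(u)+f(v)|\cdot|f(u)-f(v)|$, which is dominated by the right. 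Applying Cauchy--Schwarz,
\[ \sum_{u\sim v}w(u,v)|h(u)+h(v)| \;\leq\; \sqrt{\sum_{u\sim v}w(u,v)|f(u)+f(v)|^2}\cdot \sqrt{\sum_{u\sim v}w(u,v)(|f(u)|+|f(v)|)^2}. \]
The first factor is $\sqrt{\mlambda_1}\,\|f\|_w$ by definition of $\cR$, and the second factor is at most $\sqrt{2}\,\|f\|_w$ since $(|f(u)|+|f(v)|)^2\leq 2(f(u)^2+f(v)^2)$ and the per-edge sum of $f(u)^2+f(v)^2$ telescopes to $\|f\|_w^2$. Dividing by the denominator $\|f\|_w^2$ and invoking Lemma~\ref{lem:trevisanbipartiteness} gives $\beta(G)\leq \beta(h)\leq \sqrt{2\mlambda_1}$.

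The main obstacle is really just finding and verifying the right test function $h$ for the rounding step; the two-sided inequality above is what makes the proof go through and is the analogue, in the bipartiteness setting, of using $h(v)=f(v)^2$ in the standard proof of Cheeger's inequality. Once that algebraic identity is in hand, the rest is a routine Cauchy--Schwarz and bookkeeping argument paralleling the derivation of Cheeger's inequality from Lemma~\ref{lem:dirichletcheeger}.
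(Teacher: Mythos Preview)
The paper does not give its own proof of this theorem; it is quoted from \cite{trevisan09} as background, so there is nothing in the paper to compare against. Your argument is essentially Trevisan's original proof, and it is correct.

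Two small inaccuracies in your write-up, neither of which affects the conclusion. In the lower-bound direction you write $\cR(f)=2\beta(L,R)$, but in fact the numerator of $\cR(f)$ is $4w(E(L))+4w(E(R))+w(E(L\cup R,\overline{L\cup R}))$ while $2\beta(L,R)\cdot\vol(L\cup R)=4w(E(L))+4w(E(R))+2w(E(L\cup R,\overline{L\cup R}))$; so you only have $\cR(f)\leq 2\beta(L,R)$, which is what you need. In the same-sign case of the key inequality you say ``both sides equal $f(u)^2+f(v)^2$'', but the right-hand side is $(|f(u)|+|f(v)|)^2$, which equals $f(u)^2+2|f(u)f(v)|+f(v)^2$; since the cross term is nonnegative the inequality still holds, just not with equality.
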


We improve the right hand side of the above theorem and prove the following. 
\begin{theorem}
\label{thm:bipartiteness} 
For any function $f\in\ell^2(V,w)$ and any $1\leq k\leq n$,
\[ \beta(f) \leq 16\sqrt{2} k \cdot \frac{\cR(f)}{\sqrt{\mlambda_k}}. \]
Therefore, letting $\cR(f)= \mlambda_1$ implies $\beta(G)\leq O(k\mlambda_1/\sqrt{\mlambda_k}).$
\end{theorem}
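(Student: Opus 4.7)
The plan is to mirror the first proof of \autoref{t:main} given in \autoref{subsec:cuhk}, with sign-aware modifications to handle the bipartiteness Rayleigh quotient $\cR(f) = \sum_{u\sim v}w(u,v)|f(u)+f(v)|^2/\|f\|_w^2$. By homogeneity, assume $\|f\|_w = 1$. I will build a signed $(2k{+}1)$-step approximation of $f$: choose thresholds $0 = t_0 \le t_1 \le \cdots \le t_{2k}$ and set
\[ g(v) \;:=\; \mathrm{sign}(f(v))\,\psi_{t_0,\dots,t_{2k}}(|f(v)|), \]
so $g$ preserves the sign of $f$ and rounds $|f(v)|$ to the closest threshold.

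The first step is the analog of \autoref{l:lambda_k}: thresholds can be chosen so that $\|f-g\|_w^2 \le 4\cR(f)/\ml_k$. As in the proof of \autoref{l:lambda_k}, one picks each $t_i$ greedily so that the piece $\{v : |f(v)| \in [t_{i-1},t_i]\}$ contributes exactly $C := 2\cR(f)/(k\ml_k)$ to $\|f-g\|_w^2$. If the procedure fails to terminate with $t_{2k} = \max_v |f(v)|$, then one defines $2k$ disjointly supported signed functions
\[ f_i(v) \;=\; \mathrm{sign}(f(v))\cdot\bigl||f(v)| - \psi_{t_{i-1},t_i}(|f(v)|)\bigr| \text{ for } |f(v)|\in[t_{i-1},t_i], \]
and contradicts the variational upper bound $\ml_k \le 2\max_i \cR(f_i)$ for disjointly supported $f_i$ (the $\cM$-analog of \autoref{c:orthogonal} recorded just before \autoref{thm:trevisanbipartiteness}). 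The key algebraic step is the signed analog of (4.6): $\sum_i |f_i(u)+f_i(v)|^2 \le |f(u)+f(v)|^2$ for every edge $\{u,v\}$. This splits into cases by whether $u,v$ lie in the same or different intervals and whether $f(u),f(v)$ share a sign. The most delicate case is ``same interval, opposite signs''; writing $|f(u)| = t_{i-1} + \alpha$ and $|f(v)| = t_i - \beta$ with $\alpha,\beta \in [0,(t_i-t_{i-1})/2]$ reduces it to the elementary bound $|\alpha-\beta| + \alpha + \beta \le t_i - t_{i-1}$, and the remaining cases are analogous to or easier than those in \autoref{l:lambda_k}.

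Granted this approximation, the second step is the signed analog of \autoref{l:jump}. Define $\mu(x) := \bigl||x| - \psi_{t_0,\dots,t_{2k}}(|x|)\bigr|$ and
\[ h(v) \;:=\; \mathrm{sign}(f(v)) \int_0^{|f(v)|} \mu(x)\, dx. \]
Since $|h|$ is strictly monotone in $|f|$ and $h$ inherits the sign of $f$, the threshold cuts $(L_h(t),R_h(t))$ range over the same family of induced cuts as $(L_f(s),R_f(s))$ as $s,t$ vary, hence $\beta(h)=\beta(f)$. Plugging $h$ into \autoref{lem:trevisanbipartiteness} gives
\[ \beta(f) \;\le\; \frac{\sum_{u\sim v} w(u,v)|h(u)+h(v)|}{\sum_v w(v)|h(v)|}, \]
and the denominator bound $|h(v)| \ge \tfrac{1}{8k} f(v)^2$ is exactly \autoref{cl:denom} applied to $|f|$, so $\sum_v w(v)|h(v)| \ge \tfrac{1}{8k}$.

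The main obstacle is the sign-aware numerator bound, which is where the proof genuinely departs from the conductance case. When $f(u),f(v)$ have opposite signs, $|h(u)+h(v)| = \bigl|\int_{|f(v)|}^{|f(u)|}\mu(x)\,dx\bigr| \le |f(u)+f(v)| \cdot \max\mu$ on the interval of integration, and the triangle inequality on $\mu$ gives $\max\mu \le |f(u)+f(v)| + \tfrac{1}{2}(|f(u)-g(u)|+|f(v)-g(v)|)$, exactly as in \autoref{cl:numerator}. When $f(u),f(v)$ share a sign, $h(u)+h(v)$ is a sum rather than a difference of integrals; here I would use $\mu(x) \le \mu(|f(v)|) + \bigl||f(v)|-x\bigr|$ to get $\int_0^{|f(v)|}\mu \le |f(v)|\,|f(v)-g(v)| + |f(v)|^2/2$, and exploit the same-sign identity $|f(u)+f(v)| = |f(u)|+|f(v)|$ to produce a bound of the same shape $|h(u)+h(v)| \le O(1) \cdot |f(u)+f(v)|\bigl(|f(u)-g(u)|+|f(v)-g(v)|+|f(u)+f(v)|\bigr)$. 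Summing over edges, applying Cauchy--Schwarz to the cross term exactly as in the proof of \autoref{l:jump}, and substituting $\|f-g\|_w^2 \le 4\cR(f)/\ml_k$ then yields $\beta(f) \le O(k)\cR(f)/\sqrt{\ml_k}$; chasing constants through both steps produces the stated $16\sqrt{2}\,k\cR(f)/\sqrt{\ml_k}$.
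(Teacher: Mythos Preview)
Your proposal is correct and follows essentially the same route as the paper: a signed $(2k{+}1)$-step approximation lemma (the paper's \autoref{lem:ksteps}, your analog of \autoref{l:lambda_k}) combined with a signed version of \autoref{l:jump} (the paper's \autoref{lm:ellone}) built from the same $h(v)=\mathrm{sign}(f(v))\int_0^{|f(v)|}\mu$. One simplification worth noting: in the same-sign numerator case the paper just uses $\mu(x)\le |x|$ to get $|h(u)+h(v)|\le \tfrac12|f(u)+f(v)|^2$ directly, which is cleaner than your $\mu(x)\le \mu(|f(v)|)+\bigl||f(v)|-x\bigr|$ route and already fits the desired form.
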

The proof of the above theorem is an adaptation of the proof of \autoref{t:main}.
Let $f$ be the eigenfunction corresponding to $\mlambda_1$ with $\norm{f}_w=1$. 
The main difference is that here we can not assume $f$ is non-negative. In fact most of the edges of the graph
will have endpoints of different signs. 

The rest of this section is organized as follows. First we prove \autoref{thm:bipartiteness}
in \autoref{subsec:bipartiteness}. Then we prove \autoref{thm:maxcut} in \autoref{subsec:maxcut}.

\subsubsection{Improved Bounds on Bipartiteness Ratio}

\label{subsec:bipartiteness}
We say a function $g\in \ell^2(V,w)$ is a $2k+1$ step approximation of $f$,
if there exists thresholds $0=t_0\leq t_1\leq \ldots\leq t_{2k}$ such that for
any $v\in V$,
$$ g(v) = \psi_{-t_{2k}, -t_{2k-1},\ldots,-t_1,0,t_1,\ldots,t_{2k}}(f(v)).$$
In words, $g(v)$ is the value in the set $\{-t_{2k}, -t_{2k-1},\ldots,-t_1,0,t_1,\ldots,t_{2k}\}$
that is closest to $f(v)$. Note that here for every threshold $t$ we  include a symmetric threshold $-t$ in the step function. 
The proof of the next lemma is an adaptation of \autoref{l:jump}.

\begin{lemma} \label{lm:ellone}
For any non-zero function $f\in \ell^2(V,w)$ with $\norm{f}_w=1$, and any $2k+1$-step approximation of $f$, called $g$,
\[\beta(f) \leq  4k\cR(f) + 4\sqrt2k \norm{f-g}_w \sqrt{ \cR(f) }. 
\]
\end{lemma}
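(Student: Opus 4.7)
My plan is to adapt the proof of \autoref{l:jump} to the bipartite/signed setting. The first step is to define the right auxiliary function. Taking $\mu(x) := |x - \psi_{-t_{2k},\ldots,t_{2k}}(x)|$ (so $\mu$ is even and $\mu(f(v))=|f(v)-g(v)|$), I would set
\[h(v) := \mathrm{sgn}(f(v)) \int_0^{|f(v)|} \mu(y)\,dy.\]
Since $|h(v)|$ is monotone in $|f(v)|$ and $\mathrm{sgn}(h(v))=\mathrm{sgn}(f(v))$, the threshold cuts $(L_h(s),R_h(s))$ coincide with threshold cuts of $f$, so $\beta(f)=\beta(h)$ and \autoref{lem:trevisanbipartiteness} gives $\beta(f) \leq \sum_{u \sim v}w(u,v)|h(u)+h(v)| / \sum_v w(v)|h(v)|$. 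From here, the goal is to lower bound the denominator and upper bound the numerator by the analogs of the corresponding claims in \autoref{l:jump}.

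The denominator bound will be essentially identical to \autoref{cl:denom}: for $|f(v)|\in[t_i,t_{i+1}]$, I split $|h(v)| = \int_0^{|f(v)|}\mu$ across the sub-intervals between consecutive thresholds $0,t_1,\ldots,t_{2k}$ and apply Cauchy-Schwarz to obtain $|h(v)| \geq \tfrac{1}{8k}f(v)^2$, which combined with $\|f\|_w=1$ yields $\sum_v w(v)|h(v)| \geq \tfrac{1}{8k}$.

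The main obstacle, and the only place where the bipartite setting really differs, is the numerator. I will aim for the following analog of \autoref{cl:numerator}:
\[|h(u)+h(v)| \leq \tfrac12 |f(u)+f(v)|\cdot\bigl(|f(u)-g(u)|+|f(v)-g(v)|+|f(u)+f(v)|\bigr).\]
The key observation is that because the threshold set is symmetric about $0$ (hence $\mu$ is even), the identity $\int_0^{f(v)}\mu = \int_{-f(v)}^0\mu$ lets me rewrite the signed sum $h(u)+h(v)$ as a single integral over an interval of length exactly $|f(u)+f(v)|$. Checking the sign configurations of $(f(u),f(v))$ case by case, each yields (after a possible swap of endpoints) $h(u)+h(v) = \pm\int_{-f(v)}^{f(u)}\mu$. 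For $y$ in this interval, both $g(u)$ and $-g(v)$ lie in the threshold set, so $\mu(y)\leq \min(|y-g(u)|,|y+g(v)|)\leq \tfrac12(|y-g(u)|+|y+g(v)|)$; a triangle-inequality calculation bounds this by $\tfrac12(|f(u)+f(v)|+|f(u)-g(u)|+|f(v)-g(v)|)$, and integrating over the length-$|f(u)+f(v)|$ interval gives the claim.

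With both bounds in hand, I can finish exactly as in \autoref{l:jump}: summing the numerator bound over edges and applying Cauchy-Schwarz to the cross term (using that $\cR(f) = \sum_{u\sim v}w(u,v)|f(u)+f(v)|^2$ here because $\|f\|_w=1$) gives $\sum_{u\sim v}w(u,v)|h(u)+h(v)| \leq \tfrac12\cR(f) + \tfrac{1}{\sqrt 2}\sqrt{\cR(f)}\,\|f-g\|_w$, and dividing by the $\tfrac{1}{8k}$ lower bound on the denominator produces the advertised bound $\beta(f)\leq 4k\cR(f)+4\sqrt2\, k\,\|f-g\|_w\sqrt{\cR(f)}$.
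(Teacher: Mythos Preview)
Your proposal is correct and follows essentially the same approach as the paper: the same auxiliary function $h$ (your signed formula coincides with the paper's $h(v)=\int_0^{f(v)}\mu(x)\,dx$ since $\mu$ is even), the same denominator bound via \autoref{cl:denom}, and the same final Cauchy--Schwarz computation.

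The one noteworthy difference is in the numerator bound. The paper splits into two cases: when $f(u),f(v)$ have opposite signs it uses the identity $|h(u)+h(v)|=\bigl|\int_{-f(v)}^{f(u)}\mu\bigr|$ and bounds $\max\mu$ as in \autoref{cl:numerator}; when they have the same sign it abandons this and uses the cruder $\mu(x)\leq|x|$ to get $|h(u)+h(v)|\leq\tfrac12|f(u)+f(v)|^2$ directly. You instead observe that the identity $h(u)+h(v)=\pm\int_{-f(v)}^{f(u)}\mu$ holds \emph{uniformly} (because $\mu$ even gives $\int_0^{f(v)}\mu=\int_{-f(v)}^0\mu$ regardless of sign), and that $-g(v)$ is always a threshold by symmetry, so the \autoref{cl:numerator}-style bound applies in every case. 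This is a minor but genuine streamlining: it avoids the case split and yields the full three-term bound even in the same-sign case, where the paper settles for a weaker inequality that happens to suffice.
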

\begin{proof}
Similar to \autoref{l:jump}, we will construct a function $h\in \ell^2(V,w)$ such that
$$ \frac {\sum_{u \sim v} w(u,v)|h(u)+h(v)|}{\sum_{v\in V} w(v) |h(v)|} \leq  4k\cR(f) + 4\sqrt2k \norm{f-g}_w \sqrt{ \cR(f)}, $$
then the lemma follows from \autoref{lem:trevisanbipartiteness}.
Let $g$ be a $2k+1$ step approximation of $f$ with thresholds $0=t_0\leq t_1\leq \ldots\leq t_{2k}$. 
Let $\mu(x):= |x - \psi_{-t_{2k},\ldots,-t_{1},0,t_{1},\ldots,t_{2k}}(x)|$. 
We define $h$ as follows:
\[ h(v):= \int_{0}^{f(v)} \mu(x) dx. \]
Note that if $f(v) \leq 0$ then $h(v) := -\int_{f(v)}^{0} \mu(x)dx$.
First, by \autoref{cl:denom},

\begin{equation}
\label{eq:hlower}
 |h(v)| \geq   \frac{|f(v)|^2}{8k}.
 \end{equation}

It remains to prove that for every edge $(u,v)$,

\begin{equation}
\label{eq:upvupper}
|h(v) + h(u)| \leq \frac12 |f(v) + f(u)| \cdot ( |f(v) + f(u)| + |g(v)-f(v)| + |g(u)-f(u)|).
\end{equation}

If $f(u)$ and $f(v)$ have different signs, then using the fact that $\mu(x) = \mu(-x)$,   
\[|h(u) + h(v)| = \Big|\int_{0}^{f(u)} \mu(x) dx + \int_{0}^{f(v)} \mu(x) dx \Big| = |\int_{-f(v)}^{f(u)} \mu(x) dx| \leq |f(u)+f(v)| \cdot \max_{x \in [f(u),-f(v)]} \mu(x),\]
and thus \eqref{eq:upvupper} follows from the proof of \autoref{cl:numerator}
which shows that $\max_{x \in [f(u),-f(v)]} \mu(x) \leq \frac{1}{2} (|f(u)+f(v)|+|g(v)-f(v)|+|g(u)-f(u)|)$.
On the other hand, if $f(u)$ and $f(v)$ have the same sign, say that they are both positive, then since $|\mu(x)| \leq |x|$ for all $x$, we get

\begin{eqnarray*} |h(v) + h(u)| \leq   \int_{0}^{f(v)} x dx + \int_{0}^{f(u)} x dx \leq \frac12 |f(v) + f(u)|^2.
\end{eqnarray*}
Putting together \eqref{eq:hlower} and \eqref{eq:upvupper}, the lemma follows from a similar proof as in \autoref{l:jump}.
\end{proof}

\autoref{thm:bipartiteness} follows simply from the following lemma,
which is an adaptation of \autoref{l:lambda_k}.

\begin{lemma} 
\label{lem:ksteps}
For any non-zero function $f\in \ell^2(V,w)$ with $\norm{f}_w=1$, 
at least one of the following holds:
\begin{enumerate}[i)]
\item $\beta(f) \leq 8k\cR(f).$
\item There exist $k$ disjointly supported functions $f_1,\ldots,f_k$ such that for all $1\leq i\leq k$,
$$ \cR(f_i) \leq 256 k^2 \frac{\cR^2(f)}{\beta^2(f)}. $$
\end{enumerate}
\end{lemma}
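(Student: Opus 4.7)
The plan is to adapt the proof of \autoref{l:lambda_k} to the symmetric setting used in \autoref{subsec:bipartiteness}. Set $\Lambda := 256 k^2 \cR(f)^2/\beta(f)^2$ and $C := 2\cR(f)/(k\Lambda)$, and let $M := \max_v |f(v)|$. I will run the greedy threshold procedure of \autoref{l:lambda_k} on the non-negative function $|f|$: choose $0 = t_0 \leq t_1 \leq \cdots \leq t_{2k} \leq M$ by taking $t_i$ to be the smallest value in $[t_{i-1}, M]$ for which the shell mass $\sum_{v:|f(v)|\in[t_{i-1},t_i]} w(v)\bigl||f(v)| - \psi_{t_{i-1},t_i}(|f(v)|)\bigr|^2$ equals $C$, or $t_i := M$ if no such value exists. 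Let $g(v) := \mathrm{sign}(f(v))\cdot \psi_{t_0,\ldots,t_{2k}}(|f(v)|)$ be the resulting symmetric $(2k{+}1)$-step approximation of $f$.

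Case 1: the procedure succeeds ($t_{2k} = M$). Then $\|f-g\|_w^2 \leq 2kC = 4\cR(f)/\Lambda$, and substituting into \autoref{lm:ellone} gives $\beta(f) \leq 4k\cR(f) + 8\sqrt{2}\,k\cR(f)/\sqrt{\Lambda}$. Since $\sqrt{\Lambda} = 16 k\cR(f)/\beta(f)$, rearranging yields $\beta(f) = O(k)\cR(f)$; the constant $8$ in option (i) follows by using the sharper form of \autoref{cl:numerator} mentioned in the remark after \autoref{l:jump}.

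Case 2: the procedure fails. Define $2k$ disjointly supported functions $f_i := \mathrm{sign}(f)\cdot \tilde f_i$, where $\tilde f_i$ is the non-negative function from \autoref{l:lambda_k} applied to $|f|$, so $\tilde f_i(v) := \bigl||f(v)| - \psi_{t_{i-1},t_i}(|f(v)|)\bigr|$ when $|f(v)|\in[t_{i-1},t_i]$ and $0$ otherwise. By construction $\|f_i\|_w^2 = C$ for each $i$. The crucial step, and the main obstacle, is the edgewise analog of \eqref{eq:lipschitzfi} for the bipartite Rayleigh quotient:
\[
\sum_{i=1}^{2k}\bigl(f_i(u)+f_i(v)\bigr)^2 \;\leq\; \bigl(f(u)+f(v)\bigr)^2 \qquad \text{for every edge } \{u,v\}.
\]
I plan to prove this by splitting on the signs of $f(u), f(v)$. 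When the signs agree (say both positive), the $\mathrm{sign}$ factors cancel on the left and the inequality becomes $\sum_i (\tilde f_i(u)+\tilde f_i(v))^2 \leq (f(u)+f(v))^2$, which follows from $\tilde f_i \leq |f|$ on its support together with the disjointness of the supports. When the signs disagree, the $\mathrm{sign}$ factors convert the left side into $\sum_i (\tilde f_i(u) - \tilde f_i(v))^2$, which the original Lipschitz inequality of \autoref{l:lambda_k} applied to $|f|$ bounds by $(|f(u)|-|f(v)|)^2 = (f(u)+f(v))^2$.

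Summing the edgewise bound over all edges gives $\sum_i \cR(f_i)\cdot C \leq \cR(f)$, hence $\sum_{i=1}^{2k}\cR(f_i) \leq k\Lambda/2$; by the usual averaging argument at least $k$ of the $f_i$ satisfy $\cR(f_i)\leq \Lambda = 256 k^2\cR(f)^2/\beta(f)^2$, which is option (ii). The bulk of the work is the sign-and-interval case analysis for the edgewise bound; once that is in hand, the remaining calculations closely parallel those in \autoref{subsec:cuhk}.
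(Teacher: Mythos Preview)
Your approach is the same as the paper's: greedy threshold selection on $|f|$, case split on success/failure, and the same signed functions $f_i$ (your $\mathrm{sign}(f)\cdot\tilde f_i$ coincides with the paper's definition). Your Case~2 sign-split proof of the edgewise inequality is correct and is just a repackaging of the paper's three-case analysis.

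The only issue is a constant in Case~1. With $C = 2\cR(f)/(k\Lambda)$ you get $\|f-g\|_w \le \beta(f)/(8k\sqrt{\cR(f)})$, and \autoref{lm:ellone} then gives
\[
\beta(f)\;\le\;4k\cR(f)+\tfrac{1}{\sqrt 2}\,\beta(f),
\]
i.e.\ $\beta(f)\le \tfrac{4}{1-1/\sqrt 2}\,k\cR(f)\approx 13.7\,k\cR(f)$, not $8k\cR(f)$. Your proposed rescue via the sharpened \autoref{cl:numerator} is not justified here: that remark concerns the conductance version of the edge bound, whereas in the max-cut version the same-sign case of \eqref{eq:upvupper} is obtained from $|h(u)+h(v)|\le\tfrac12(f(u)^2+f(v)^2)$, and you would need a separate argument to sharpen that branch. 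The fix is simpler than you suggest: take $C = \cR(f)/(k\Lambda) = \beta(f)^2/(256k^3\cR(f))$ (half your value, which is exactly the paper's choice). Then $\|f-g\|_w^2\le 2kC$ makes the second term of \autoref{lm:ellone} equal to $\beta(f)/2$, giving $\beta(f)\le 8k\cR(f)$; Case~2 is unaffected since the averaging over $2k$ functions still produces $k$ of them with $\cR(f_i)\le \Lambda$.
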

\begin{proof}
Let $M:=\max_v |f(v)|$. We find $2k+1$ thresholds $0=t_0\leq t_1\leq \ldots\leq t_{2k}=M$,
and define $g$ to be a $2k+1$ step approximation of $f$ with respect to these thresholds.
Let $$C:=\frac{\beta^2(f)}{256k^3\cR(f)}.$$
We choose the thresholds inductively. Given
$t_0, t_1,\ldots,t_{i-1}$, we let $t_{i-1}\leq t_i \leq M$ be the smallest number such that
\begin{equation}
\label{eq:equalizenormcut} \sum_{v:-t_i \leq f(v) \leq -t_{i-1}} w(v) |f(v)-\psi_{-t_i,-t_{i-1}}(f(v))|^2 
+ \sum_{v:t_{i-1}\leq f(v)\leq t_i} w(v) |f(v) - \psi_{t_{i-1},t_i}(f(v))|^2 = C.
\end{equation}
Similar to the proof of \autoref{l:lambda_k}, the left hand side varies continuously with $t_i$, and it is non-decreasing. 
If we can satisfy \eqref{eq:equalizenormcut}  for some $t_{i-1}\leq t_i<M$,
then we let $t_i$ to be the smallest such number; otherwise we set $t_i=M$.

If $t_{2k}=M$ then we say the procedure succeeds. 
We show that if the procedure succeeds then (i) holds, 
and if it fails then (ii) holds.
First, if the procedure succeeds, then we can define $g$ to be the $2k+1$ step approximation of $f$ with respect to $t_0,\ldots,t_{2k}$, and by \eqref{eq:equalizenormcut} we get 
$$\norm{f-g}_w^2\leq 2kC = \frac{\beta^2(f)}{128k^2\cR(f)}.$$ 
By \autoref{lm:ellone}, this implies that
$$\beta(f) \leq 4k\cR(f) + \frac{\beta(f)}{2}, $$
and thus part (i) holds.

If the procedure does not succeed, then we will construct $k$ disjointly supported functions of Rayleigh quotients less than $1/kC$ and that would imply (ii). 
For each $1\leq i\leq 2k$, we let $f_i$ be the following function,
$$ 
f_i(v):=\begin{cases}
-|f(v)-\psi_{-t_i,-t_{i-1}}(f(v))| & \text{if } -t_i\leq f(v)\leq-t_{i-1} \\
|f(v) - \psi_{t_{i-1},t_i}(f(v))| & \text{if }  t_{i-1}\leq f(v)\leq t_i\\
0 & \text{otherwise}.
\end{cases}
$$
We will argue that at least $k$ of these functions satisfy $\cR(f_i) < 1/kC$.
By \eqref{eq:equalizenormcut}, we already know that the denominators of $\cR(f_i)$ are equal to $C$, so it remains to find an upper bound for the numerators. 
For each pair of vertices $u,v\in V$,
we will show that
\begin{equation}\label{eq:lipschitzsum} \sum_{i=1}^{2k} |f_i (u)+f_i(v)|^2 \leq |f(u) + f(v)|^2. \end{equation}
Note that $u,v$ are contained in the support of at most two of the functions. We distinguish three cases:

\begin{itemize}
\item $u$ and $v$ are in the support of the same function $f_i$. Then \eqref{eq:lipschitzsum} holds since each $f_i$ is 1-Lipschitz.
\item $u\in\supp(f_i)$ and $v\in\supp(f_j)$ for $i\neq j$, and $f(u),f(v)$  have the same sign. Then \eqref{eq:lipschitzsum} holds since
$$ |f_i(u)+f_i(v)|^2 + |f_j(u)+f_j(v)|^2 = |f_i(u)|^2 + |f_j(v)|^2 \leq |f(u)|^2 + |f(v)|^2\leq |f(u) + f(v)|^2. $$
\item $u\in\supp(f_i)$ and $v\in\supp(f_j)$ for $i\neq j$, and $f(u),f(v)$ have different signs. Then \eqref{eq:lipschitzsum} holds by \eqref{eq:lipschitzfi}.
\end{itemize}

Summing inequality \eqref{eq:lipschitzsum}, we have

\[ \sum_{i=1}^{2k} \cR(f_i) = \frac1C \sum_{i=1}^{2k} \  \sum_{(u,v) \in E } 
w(u,v)|f_i (u)+f_i(v)|^2 \leq \frac1C \sum_{(u,v) \in E} w(u,v)|f(u)+f(v)|^2 = 256 k^3\frac{\cR^2(f)}{\beta^2(f)}.\]
By an averaging argument, there are $k$ functions of Rayleigh quotients less than $256 k^2\cR^2(f)/\beta^2(f)$, and thus (ii) holds.
\end{proof}

\subsubsection{Improved Spectral Algorithm for Maximum Cut}
\label{subsec:maxcut}

In this section we prove \autoref{thm:maxcut}.
Our algorithm for max-cut is very similar to \autoref{alg:minbisection}.
We inductively remove an induced cut such that the union of removed vertices cuts a large fraction of the edges. 
The detailed algorithm is described in \autoref{alg:maxcut}.

\begin{algorithm}
\begin{algorithmic}
\STATE $U\leftarrow V$, $\X\leftarrow \emptyset$, $\Y\leftarrow \emptyset$.
\WHILE {$E(U)\neq \emptyset$}
\STATE Let $H=(U,E(U))$ be the induced subgraph of $G$ on $U$.
\STATE Let $f$ be the first eigenvector of $\cM$.
\IF {$\beta_H(f) \leq 192\sqrt{2} k \cR(f) / \mlambda_k$}
\STATE $(\X,\Y) \leftarrow (\X\cup \X_f(t_\opt), \Y\cup \Y_f(t_\opt))$.
\ELSE 
\STATE Let $f_1,\ldots,f_k$ be $k$ disjointly supported functions such that
$$ \beta_H(f) \leq 16 k \frac{\cR_H(f)}{\sqrt{\max_{1\leq i\leq k} \cR_H(f_i)}},$$
as defined in \autoref{lem:ksteps}.
\STATE Find a threshold cut $(\X',\Y')=(\X_{f_i}(t),\Y_{f_i}(t))$ for $1\leq i\leq k$ such that
$$\min(\gamma(\X\cup\X',\Y\cup\Y'), \gamma(\X\cup\Y', \Y\cup\X')) \leq \gamma(\X,\Y).$$
\vspace{-0.5cm}
\STATE Remove $\X',\Y'$ from $U$, and let $(\X,\Y)$ be one of $(\X\cup\X',\Y\cup\Y')$ or $(\X\cup\Y',\Y\cup\X')$ with minimum uncutness.
\ENDIF
\ENDWHILE
\RETURN $(\X,\Y)$.
\end{algorithmic}
\caption{A Spectral Algorithm for Maximum Cut}
\label{alg:maxcut}
\end{algorithm}

For technical reasons we define a parameter called uncutness to measure the total weight of cut edges throughout the algorithm. 
For an induced cut $(\X,\Y)$, the {\em uncutness} of $(\X,\Y)$ is defined as
$$ \gamma(\X,\Y):=w(E(\X)) + w(E(\Y)) + w(E(\X\cup\Y,\overline{\X\cup\Y})).$$
In words, it is the total weight of the edges adjacent to $\X$ and $\Y$ that are not  $E(\X,\Y)$.
Note that the coefficient of edges inside $\X$ and $\Y$ is one (instead of two as in the definition of bipartiteness ratio).

Throughout the algorithm we maintain an induced cut $(\X,\Y)$.
To extend this induced cut, we either find an induced cut $(\X', \Y')$ in the remaining graph with bipartiteness ratio $O(k\cR_H(f)/\ml_k)$, 
or an induced cut $(L',R')$ such that $\gamma(\X \cup \X',\Y \cup \Y') \leq \gamma(\X,\Y)$.
We will show later that this would imply \autoref{thm:maxcut}.

Let $(\X,\Y)$ be the cut extracted after a number of steps of the induction, and let $U=V\setminus (\X\cup \Y)$ be the set of the remaining vertices. Let
$H=(U,E(U))$ be the induced subgraph of $G$ on $U$ and $0=\mlambda'_1 \leq \mlambda'_2\leq \ldots$ be the eigenvalues of $\cM_H$. Furthermore, assume that
$w(E(U))= \rho \cdot w(E(V))$ where $0 < \rho \leq 1$. Since the optimal solution cuts at least $1-\eps$ (weighted) fraction of edges of $G$, it must cut at least $1-\eps/\rho$ (weighted) fraction of the edges of $H$. Therefore,
by \autoref{thm:trevisanbipartiteness}, 
$$\mlambda'_1 \leq 2\eps/\rho.$$

First, if $\beta_H(f)\leq 192\sqrt{2} k \cR_H(f)/\mlambda_k$, then we find the best threshold cut $(\X',\Y')=(\X_f(t_\opt),\Y_f(t_\opt))$ of $f$, 
and update $(\X,\Y)$ to 
$(\X\cup \X',\Y\cup \Y')$, and remove $\X' \cup \Y'$ from $H$,
  and recurse.
Otherwise, by \autoref{lem:ksteps}, there are $k$ disjointly supported functions $f_1,\ldots,f_k$ such that for all $1\leq i\leq k$, 
$$ \beta_H(f) \leq 16k \frac{\cR_H(f)}{\sqrt{\max_{1\leq i\leq k} \cR_H(f_i)}}.$$
Next, we show that we can find a threshold cut $(\X',\Y')$ of one of these functions such that 
$$\min(\gamma(\X\cup \X', \Y\cup\Y'),\gamma(\X\cup\Y',\Y\cup\X')) \leq \gamma(\X,\Y).$$
In words, we can merge $(\X',\Y')$ with the set of removed vertices such that 
the uncutness of the extended induced cut, say $(\X \cup \X', \Y \cup \Y')$, does not increase.
To prove this claim we use \autoref{lem:maxcutenlargement} which will be proved below. 
By \autoref{lem:maxcutenlargement}, if we can not find such a threshold cut for each of the functions $f_1,\ldots,f_k$, then we must have
$$\cR_H(f_i) \geq \frac{1}{72}\cR_G^2(f_i)$$
for all $1\leq i\leq k$.
Henceforth,
$$ \beta_H(f) \leq 16k \frac{\cR_H(f)}{\sqrt{\max_{1\leq i\leq k} \cR_H(f_i)}} \leq
96\sqrt{2} k \frac{\mlambda'_1}{\sqrt{\max_{1\leq i\leq k} \cR_G^2(f_i)}} \leq 192\sqrt{2} k \frac{\cR_H(f)}{\mlambda_k}.$$
where the last inequality follows by \autoref{c:orthogonal}, and the assumption that $f_1,\ldots,f_k$ are disjointly supported. This is a contradiction.
Therefore, we can always either find a threshold cut $(\X',\Y')$ of $f$ such that 
$$\beta(\X',\Y')\leq 192\sqrt{2} k \frac{\cR_H(f)}{\mlambda_k} \leq 600k\frac{\eps}{\rho \mlambda_k},$$ 
or we can remove an induced cut from $H$ while making sure that the uncutness of the induced cut does not increase. We keep doing this until $E(U)=\emptyset$.

It remains to calculate the ratio of the edges cut by the final solution of the algorithm. 
Let $\rho_j \cdot w(E)$ be the fraction of edges in $H$ before the $j$-th iteration of the for loop for all $j\geq 1$, in particular $\rho_1=1$.

Suppose the first case holds, i.e. we choose a threshold cut of $f$ with small bipartitness ratio.
Then we cut at least $(1-600 k \eps /\rho_j\mlambda_k)$ fraction of the edges  removed from $H$ in the $j$-th iteration.
Since the weight of the edges in the $j+1$ iteration is  $\rho_{j+1} w(E)$, we can lower-bound the weight of the cut edges by
$$ (\rho_j w(E) - \rho_{j+1}w(E))(1-600 k \frac{\eps}{\rho_j \mlambda_k}) \geq w(E)\int_{\rho_{j+1}}^{\rho_j} (1-600k\frac{\eps}{r \mlambda_k}) dr.$$

Suppose the second case holds, i.e. we choose a threshold cut of one of $f_1,\ldots,f_k$. 
Then, since the uncutness does not increase, the weight of the newly cut edges in the $j$-th iteration is at least as large as the total weight of the edges removed from $H$ in the $j$-th iteration. 
In other words, the total weight of the edges cut in the $j$-th iteration is at least $\rho_j w(E) - \rho_{j+1}w(E)$ in this case.

Putting these together, the fraction of edges cut by \autoref{alg:maxcut} is at least
$$ \int_{600 k \eps/\mlambda_k}^1 \Big(1-600k \frac{\eps}{r\mlambda_k}\Big) dr = 1-\frac{600k \eps}{\mlambda_k} \Big(1+\ln\Big(\frac{\mlambda_k}{600k\eps}\Big)\Big).$$
This completes the proof of \autoref{thm:maxcut}

\begin{lemma}
\label{lem:maxcutenlargement}
For any set $U\subseteq V$, let $H(U,E(U))$ be the induced subgraph of $G$ on $U$, and $f\in \ell^2(V,w)$ be a non-zero function such that $f(v)=0$ for any $v \notin U$. Also let $(\X,\Y)$ be a partitioning of $\overline{U}$. If for any threshold cut $(\X_f(t),\Y_f(t))$, 
\begin{equation}
\label{eq:bipartitenessinvariant} 
\min\big(\gamma_G(\X\cup \X_f(t),\Y\cup \Y_f(t)), \gamma_G(\X\cup \Y_f(t), \Y\cup\X_f(t)) \big)> \gamma_G(\X,\Y),
\end{equation}
then
$$ \sqrt{72\cR_H(f)} \geq \cR_G(f).$$
\end{lemma}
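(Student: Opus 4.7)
My plan is to adapt the random-threshold template of \autoref{lem:rayleighenlargement} to the bipartite boundary functional. By homogeneity of both sides in $f$, I first scale so that $\sum_v w(v) f^2(v) = 1$; since $w(v) \geq 1$, this forces $|f(v)| \leq 1$ for every $v$. For $t > 0$ let $S_t := \X_f(t) \cup \Y_f(t) \subseteq U$ and define the bipartite boundary
\[ B_G(t) := 2w(E(\X_f(t))) + 2w(E(\Y_f(t))) + w(E(S_t, V \setminus S_t)), \]
and analogously $B_H(t)$ with $V$ replaced by $U$. Since $S_t \subseteq U$ and edges inside $S_t$ are identical in $G$ and $H$, one has the clean identity $B_G(t) = B_H(t) + w(E(S_t, \overline{U}))$.

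The first main step is to turn the hypothesis into a pointwise bound $B_G(t) \leq 3\,B_H(t)$. Expanding $\gamma_G(\X \cup \X_f(t), \Y \cup \Y_f(t)) - \gamma_G(\X,\Y)$ (the $E(\X)$ and $E(\Y)$ terms cancel), the assumption gives
\[ w(E(\X_f(t))) + w(E(\Y_f(t))) + w(E(S_t, U \setminus S_t)) > w(E(\X, \Y_f(t))) + w(E(\Y, \X_f(t))), \]
and the swapped analogue with $\X_f(t), \Y_f(t)$ interchanged. Adding the two, the right-hand sides total $w(E(\overline{U}, S_t))$, and twice the left-hand side equals $B_H(t) + w(E(S_t, U \setminus S_t)) \leq 2 B_H(t)$. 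Hence $w(E(S_t, \overline{U})) \leq 2 B_H(t)$, and so $B_G(t) \leq 3 B_H(t)$ for all $t$.

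The second step integrates both sides against density $2t\,dt$ on $(0,1]$, equivalently $\int_0^1 B_{\bullet}(\sqrt{s})\,ds$. A per-edge case analysis on the signs of $f(u), f(v)$ shows that $\int_0^1 B_G(\sqrt s)\,ds = \sum_{(u,v) \in E} w(u,v)\,\phi(u,v)$, where $\phi(u,v) = f(u)^2 + f(v)^2$ for same-sign or one-zero edges and $\phi(u,v) = |f(u)^2 - f(v)^2| = (|f(u)|+|f(v)|)\cdot|f(u)+f(v)|$ for opposite-sign edges. One then verifies the uniform two-sided bound
\[ \tfrac12|f(u)+f(v)|^2 \,\leq\, \phi(u,v) \,\leq\, (|f(u)|+|f(v)|)\cdot|f(u)+f(v)|. \]
Summing the lower bound yields $\int_0^1 B_G(\sqrt s)\,ds \geq \tfrac12 \cR_G(f)$. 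Summing the upper bound over $E(U)$ and applying Cauchy-Schwarz with the estimate $\sum_{(u,v)\in E(U)} w(u,v)(|f(u)|+|f(v)|)^2 \leq 2\sum_v w_H(v) f^2(v) \leq 2$ gives $\int_0^1 B_H(\sqrt s)\,ds \leq \sqrt{2\cR_H(f)}$. Combined with the pointwise $B_G \leq 3 B_H$, these force $\tfrac12 \cR_G(f) \leq 3\sqrt{2\cR_H(f)}$, i.e., $\cR_G(f) \leq \sqrt{72\cR_H(f)}$.

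The main delicate point will be the per-edge sign analysis: the opposite-sign case is structurally different because such an edge never sits inside a single side of the bipartition and thus contributes only to the boundary term. It is precisely the algebraic identity $|f(u)^2 - f(v)^2| = (|f(u)|+|f(v)|)\cdot|f(u)+f(v)|$ for opposite signs that makes the uniform upper bound $\phi(u,v) \leq (|f(u)|+|f(v)|)|f(u)+f(v)|$ hold across all four sign cases, and this uniformity is what lets a single application of Cauchy-Schwarz produce the clean constant $\sqrt{72}$.
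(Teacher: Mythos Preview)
Your proof is correct and essentially the same as the paper's: the paper introduces random indicators $Z_v\in\{-1,0,1\}$ for the threshold cut and proves the per-edge bound $\tfrac12|f(u)+f(v)|^2\le \E[|Z_u+Z_v|]\le (|f(u)|+|f(v)|)|f(u)+f(v)|$, which is exactly your $\phi(u,v)$ analysis phrased probabilistically, and likewise derives $w(E(S_t,\overline{U}))\le 2B_H(t)$ (hence $B_G(t)\le 3B_H(t)$) from the hypothesis before combining with Cauchy--Schwarz to get the constant $\sqrt{72}$. Your expansion of $\gamma_G(\X\cup\X_f(t),\Y\cup\Y_f(t))-\gamma_G(\X,\Y)$ and addition of the two swapped inequalities is in fact a cleaner derivation of the pointwise bound than the paper's one-line assertion, but the strategy and all constants coincide.
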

\begin{proof} 
First, observe that if 
\[\frac12w(E(\X_f(t)\cup \Y_f(t), \overline{U})) 
 > w(E(\X_f(t))) + w(E(\Y_f(t))) + w(E(\X_f(t)\cup \Y_f(t), U\setminus (\X_f(t)\cup \Y_f(t)))),\]
then \eqref{eq:bipartitenessinvariant} does not hold for that $t$.
Therefore, if \eqref{eq:bipartitenessinvariant} holds for any threshold cut $(\X_f(t),\Y_f(t))$ of $f$, then we have (the weaker condition) that
\begin{equation}
\label{eq:maxcutweaker} 
\frac12w(E(\X_f(t)\cup \Y_f(t), \overline{U})) 
 \leq  2w(E(\X_f(t))) + 2w(E(\Y_f(t))) + w(E(\X_f(t)\cup \Y_f(t), U\setminus (\X_f(t)\cup \Y_f(t)))).
\end{equation}
Henceforth, we prove the lemma by showing that  $\sqrt{72 \cR_H(f)} \geq \cR_G(f)$ holds whenever \eqref{eq:maxcutweaker} holds for any threshold cut of $f$.

Since both sides of \eqref{eq:maxcutweaker} are homogeneous in $f$, we may assume that $\max_v f(v)\leq 1$. Furthermore, we can assume that $\sum_{v\in V} w(v)f^2(v)=1$. Observe that, since $w_H(v)\leq w_G(v)$ for all $v\in U$,
\begin{equation}
\label{eq:Hrayleighdenom}
\sum_{v\in U} w_H(v) f^2(v) \leq  \sum_{v\in U} w_G(v) f^2(v) = \sum_v w_G(v) f^2(v) =1.
\end{equation}
Let $0< t\leq 1$ be chosen uniformly at random. 
For any vertex $v$, let $Z_v$ be the random variable where
$$ Z_v = \begin{cases}
1 & \text{if } f(v)\geq \sqrt{t}\\
-1 & \text{if } f(v) \leq -\sqrt{t}\\
0&\text{otherwise}.
\end{cases}
$$
\begin{claim}
\label{cl:maxcutedgecutting}
For any edge $\{u,v\}\in E$, 
$$ \frac12 |f(u)+f(v)|^2 \leq \E{|Z_u+Z_v|} \leq |f(u)+f(v)| (|f(u)|+|f(v)|).$$
\end{claim}
\begin{proof}
Without loss of generality assume that $|f(u)| \leq |f(v)|$.
We consider two cases.
\begin{itemize}
\item If $f(u)$ and $f(v)$ have different signs, then $|Z_u + Z_v| = 1$ when $|f(u)|^2< t< |f(v)|^2$.
Therefore,  
$$\E{|Z_u+Z_v|} = |f(v)|^2 - |f(u)|^2 = |f(u)+f(v)| (|f(u)|+|f(v)|),$$
and the claim holds.
\item If $f(u)$ and $f(v)$ have the same sign, then 
$$|Z_u+Z_v|=\begin{cases}
2& \text{if } t<|f(u)|^2,\\
 1 & \text{if } |f(u)|^2 \leq t < |f(v)|^2,\\
 0 & \text{if } |f(v)|^2\leq t.
\end{cases}$$
Therefore, 
$$ \frac12 (f(u)+f(v))^2 \leq \E{|Z_u+Z_v|} = f(u)^2 + f(v)^2 \leq (f(u)+f(v))^2. \hfill\qedhere$$
\end{itemize}
\end{proof}

The rest of the proof is very similar to that in \autoref{lem:rayleighenlargement}.
\begin{eqnarray}
&  & \mathbb{E} [ 2w(E(\X(\sqt))) + 2w(E(\Y(\sqt))) + w(E(\X(\sqt)\cup \Y(\sqt), U\setminus (\X(\sqt)\cup \Y(\sqt)))) ] \nonumber\\
&= & \sum_{(u,v)\in E(U)}w(u,v) \E{|Z_u+Z_v|} \nonumber \\
&\leq & \sum_{(u,v)\in E(U)} w(u,v) |f(u)+f(v)| (|f(u)|+|f(v)|)\nonumber\\
&\leq& \sqrt{\sum_{(u,v)\in E(U)} w(u,v) |f(u)+f(v)|^2} \sqrt{\sum_{(u,v)\in E(U)} w(u,v) (|f(u)|+|f(v))^2}\nonumber\\
&\leq & \sqrt{2\cR_H(f)}, \label{eq:maxcutinsideHedges}
\end{eqnarray}
where the first inequality follows by \autoref{cl:maxcutedgecutting}, and the last inequality follows by \eqref{eq:Hrayleighdenom}. 
On the other hand, 
by \eqref{eq:maxcutweaker},
\begin{eqnarray}
& & \mathbb{E}[2w(E(\X(\sqt))) + 2w(E(\Y(\sqt))) + w(E(\X(\sqt)\cup \Y(\sqt), U\setminus (\X(\sqt)\cup \Y(\sqt))))]\nonumber\\
& \geq & \frac13 \mathbb{E}[2w(E(\X(\sqt))) + 2w(E(\Y(\sqt))) + w(E(\X(\sqt)\cup \Y(\sqt), V\setminus (\X(\sqt)\cup(\Y\sqt))))]\nonumber \\
&= & \frac13 \sum_{u\sim v} w(u,v) \E{|Z_u+Z_v|} \nonumber\\
&\geq & \frac16 \sum_{u\sim v} w(u,v) |f(u)+f(v)|^2 = \frac16 \cR_G(f),\label{eq:maxcutHGedges}
\end{eqnarray}
where the second inequality follows from \autoref{cl:maxcutedgecutting}, and the last equality  follows from the normalization $\sum_v w(v)f^2(v)=1.$ 
Putting together \eqref{eq:maxcutinsideHedges} and \eqref{eq:maxcutHGedges} proves the lemma.
\end{proof}

\subsection{Manifold Setting} \label{s:manifold}

The eigenvalues of a closed Riemannian manifold can be approximated
by the eigenvalues of the Laplacian of the graph of a $\eps$-net
in $M$~\cite{fujiwara}.
Hence, \autoref{t:main} implies a generalized Cheeger's inequality for closed Riemannian manifolds.

\begin{theorem} \label{t:manifold}
Let $M$ be a $d$-dimensional closed Riemannian manifold. Let $\lambda_{k}(M)$
be the $k^{th}$ eigenvalue of Laplacian of $M$ and $\phi(M)$ be
the Cheeger isoperimetric constant of $M$. Then
\[
\phi(M)\leq Ck\frac{\lambda_{2}(M)}{\sqrt{\lambda_{k}(M)}}
\]
where $C$ depends on $d$ only.
\end{theorem}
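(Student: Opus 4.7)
The plan is to pass from the manifold to a graph via a standard discretization, apply Theorem \ref{t:maincheeger}, and take a limit. Fix a maximal $\epsilon$-separated net $\{x_i\}_{i=1}^N$ in $M$. Build the weighted graph $G_\epsilon$ with vertex set $\{x_i\}$, vertex weight $w(x_i) := \vol(B(x_i,\epsilon/2))$, and edges $\{x_i,x_j\}$ whenever the Riemannian distance satisfies $d(x_i,x_j) \leq C_1\epsilon$ (some fixed constant depending only on $d$), with edge weight coming from the local volume form. Bishop--Gromov volume comparison gives a degree bound on $G_\epsilon$ depending only on $d$; the required weight normalization (so that $w(x_i) \geq 1$, or more generally so $\vol(V) = \vol(M)/\epsilon^d$ up to dimensional constants) is then a cosmetic rescaling.

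Next, invoke the Fujiwara-type discretization result \cite{fujiwara}: there is a constant $c_d$ and an error $o_\epsilon(1) \to 0$ as $\epsilon \to 0$ such that
\begin{equation*}
\lambda_k(G_\epsilon) = c_d \epsilon^2 \lambda_k(M)(1+o_\epsilon(1))
\end{equation*}
for each fixed $k$. This is proved by transferring eigenfunctions back and forth: a smooth eigenfunction on $M$ restricts to a near-eigenfunction on the net (the Rayleigh quotient is controlled by a first-order Taylor expansion), and conversely a graph eigenfunction can be extended by convolution with a smooth bump to a test function on $M$ with comparable Rayleigh quotient. The loss is only a multiplicative $(1+o_\epsilon(1))$ plus a dimensional constant.

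The parallel comparison for the isoperimetric constants runs in both directions. On the one hand, any finite-perimeter set $S \subseteq M$ induces the cut in $G_\epsilon$ given by the net points lying in $S$; volume comparison plus a tubular neighborhood argument near $\partial S$ shows
\begin{equation*}
\phi(G_\epsilon) \leq C_2 \epsilon\, \phi(M)(1+o_\epsilon(1)).
\end{equation*}
On the other hand, given a cut $(A,\overline{A})$ in $G_\epsilon$, the union of Voronoi cells of points in $A$ is a finite-perimeter subset of $M$ whose perimeter is bounded above by $C_3 \epsilon^{-1}$ times the boundary weight of the graph cut, yielding
\begin{equation*}
\phi(M) \leq C_3 \epsilon^{-1} \phi(G_\epsilon)(1+o_\epsilon(1)).
\end{equation*}

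Combining these three comparisons with Theorem \ref{t:maincheeger} applied to $G_\epsilon$,
\begin{equation*}
\phi(M) \leq C_3 \epsilon^{-1}\phi(G_\epsilon)(1+o_\epsilon(1)) \leq C_3 \epsilon^{-1} \cdot O(k)\frac{\lambda_2(G_\epsilon)}{\sqrt{\lambda_k(G_\epsilon)}}(1+o_\epsilon(1)) = O_d(k)\frac{\lambda_2(M)}{\sqrt{\lambda_k(M)}}(1+o_\epsilon(1)),
\end{equation*}
since the explicit $\epsilon^{-1}$ and the $\epsilon^2$ factors inside the eigenvalues cancel appropriately, leaving only a dimensional constant $C = C(d)$. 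Letting $\epsilon \to 0$ gives the theorem.

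The main obstacle is the quantitative two-sided comparison between $\phi(M)$ and $\phi(G_\epsilon)$: graph cuts are combinatorial while manifold cuts are smooth, so one must carefully approximate minimal isoperimetric sets in $M$ by unions of Voronoi cells (and vice versa) while keeping the perimeter-to-volume ratio controlled by dimensional constants alone. This is a classical but delicate geometric measure theory argument requiring volume doubling (Bishop--Gromov) and local regularity of minimizers; the rest of the proof is a mechanical application of Theorem \ref{t:maincheeger} together with Fujiwara's spectral approximation.
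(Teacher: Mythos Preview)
Your approach is exactly the one the paper takes: the paper's entire argument for this theorem is the two-sentence remark that Fujiwara's $\epsilon$-net discretization \cite{fujiwara} approximates the manifold eigenvalues by graph eigenvalues, so \autoref{t:main} transfers to $M$. You have correctly filled in the details the paper omits---in particular the two-sided comparison $\phi(G_\epsilon) \asymp \epsilon\,\phi(M)$, which the paper does not even mention but which is indeed needed for the $\epsilon$-factors to cancel against the $\epsilon^2$ scaling of the eigenvalues.
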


\subsection{Planted and Semi-Random Instances}
 \label{s:semirandom}

As discussed in the introduction, spectral techniques can be used to recover the hidden bisection when $p-q \geq \Omega(\sqrt{p \log |V|/|V|})$ in the planted random model~\cite{boppana,mcsherry}, and for other hidden partition problems~\cite{alon-krivelevich-sudakov,mcsherry}.
Some semi-random models have been proposed and the results in planted random models can be generalized using semidefinite programming relaxations~\cite{feige-kilian,makarychev+}:
Feige and Kilian~\cite{feige-kilian} considered the model where a planted instance is generated and an adversary is allowed to delete arbitrary edges between the parts and add arbitrary edges within the parts, and they proved that an SDP-based algorithm can recover the hidden partition when $p-q \geq \Omega(\sqrt{p \log |V|/|V|})$.
Makarychev, Makarychev and Vijayaraghaven~\cite{makarychev+} considered a more flexible model where the induced subgraph of each part is arbitrary, and proved that an SDP-based algorithm would find a balanced cut with good quality.
These results show that SDP-based algorithms are more powerful than spectral techniques for semi-random instances.

For graph bisection, we note that there will be a gap between $\l_2$ and $\l_3$ in the instances in the planted random model 
when $p-q$ is large enough.
\autoref{t:main} shows that the spectral partitioning algorithm performs better in instances just satisfying this ``pseudorandom'' property, 
although the bounds are much weaker when applied to random planted instances.
For example, our result implies that the spectral partitioning algorithm performs better in the following ``deterministic'' planted instances where there are two arbitrary bounded degree expanders of size $|V|/2$ with an arbitrary bounded degree sparse cut between them.  

\begin{corollary} \label{c:planted}
Let $G = (V, E)$ be an unweighted graph such that 
$V=A\cup B$, where $\vol(A) = \vol(B)$ and $\phi(A) = \phi(B) = \phi$.
Let $G_A$ and $G_B$ be the induced subgraphs of $G$ on $A$ and $B$, 
and $\varphi = \min(\phi(G_A), \phi(G_B))$.
Suppose that the minimum degree in $G_A$ and $G_B$ is at least $d_1$,
and the maximum degree of the bipartite subgraph $G' = (A \cup B, E(A, B))$ is at most $d_2$.
Then the spectral partitioning algorithm applied to $G$ returns a set of conductance 
\[O \left( \frac{\phi}{\varphi} \frac{d_1 + d_2}{d_1} \right).\]
\end{corollary}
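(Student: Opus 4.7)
The plan is to invoke Theorem~\ref{t:main} with $k=3$, which yields $\phi(f) = O(\lambda_2/\sqrt{\lambda_3})$ for the output $f$ of the spectral partitioning algorithm on $G$. It therefore suffices to establish the two bounds $\lambda_2 = O(\phi)$ and $\lambda_3 = \Omega(\varphi^2 d_1/(d_1+d_2))$, since together these give $\phi(f) = O((\phi/\varphi)\sqrt{(d_1+d_2)/d_1})$, which is at most the claimed $O((\phi/\varphi)(d_1+d_2)/d_1)$. The upper bound $\lambda_2 \leq 2\phi$ is immediate from the easy direction of Cheeger's inequality applied to the bisection $(A,B)$, which has conductance $\phi$.

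For the lower bound on $\lambda_3$, I would use the Courant--Fischer characterization $\lambda_3 = \max_{W:\dim W = n-2}\min_{f \in W\setminus\{0\}} \cR_G(f)$ applied to the specific $(n-2)$-dimensional subspace $W = \{f \in \ell^2(V,w_G) : \langle f,\mathbf{1}_A\rangle_{w_G} = \langle f,\mathbf{1}_B\rangle_{w_G} = 0\}$, and show that every nonzero $f \in W$ satisfies $\cR_G(f) = \Omega(\varphi^2 d_1/(d_1+d_2))$. Writing $N_A := \sum_{\{u,v\}\in E(A)}(f(u)-f(v))^2$ and $D_A := \sum_{v\in A} w_G(v)f(v)^2$ (and $N_B, D_B$ analogously), we have $\cR_G(f) \geq (N_A+N_B)/(D_A+D_B)$. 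Since $\lambda_2(G_A) \geq \varphi^2/2$ by Cheeger, the Poincar\'e inequality on $G_A$ gives $\sum_{v\in A} w_{G_A}(v)(f(v)-c_A)^2 \leq 2N_A/\varphi^2$, where $c_A$ is the $w_{G_A}$-weighted average of $f$ on $A$.

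The hard part is controlling $c_A$: the orthogonality constraint on $f$ uses the $w_G$-inner product, whereas the Poincar\'e inequality is centered at the $w_{G_A}$-average. Using $w_G(v) = w_{G_A}(v) + d_{G'}(v)$ together with $\sum_{v\in A} w_G(v)f(v)=0$, we get $c_A \vol_{G_A}(A) = -\sum_{v\in A} d_{G'}(v) f(v)$; Cauchy--Schwarz combined with $\sum_{v\in A} d_{G'}(v) = |E(A,B)| = \phi\vol_G(A)$ and the degree bounds $d_{G'}(v)\leq d_2$, $w_{G_A}(v)\geq d_1$ yields $c_A^2 \vol_{G_A}(A) = O(\phi d_2/d_1)D_A$. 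Combined with $w_G(v)/w_{G_A}(v) \leq (d_1+d_2)/d_1$, this produces the self-bounding inequality $D_A \leq ((d_1+d_2)/d_1)\bigl[2N_A/\varphi^2 + O(\phi d_2/d_1)D_A\bigr]$. When $\phi d_2(d_1+d_2)/d_1^2$ is less than a sufficiently small constant, the second term on the right is absorbed, giving $D_A = O((d_1+d_2)/(d_1\varphi^2)) N_A$, and summing with the analogous bound for $B$ gives $\cR_G(f) = \Omega(\varphi^2 d_1/(d_1+d_2))$ as desired.

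In the complementary regime $\phi d_2(d_1+d_2)/d_1^2 = \Omega(1)$, the ordinary Cheeger bound $\phi(f) \leq \sqrt{2\lambda_2} \leq 2\sqrt{\phi}$ already implies $\phi(f) = O((\phi/\varphi)(d_1+d_2)/d_1)$ after a short calculation, so the two cases together cover the full range. The principal difficulty is the $c_A$-bound in the preceding paragraph; the remainder of the proof consists of routine algebra and standard Cheeger-type reasoning.
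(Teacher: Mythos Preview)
Your proposal is correct, and in fact your $\lambda_3$ bound $\Omega(\varphi^2 d_1/(d_1+d_2))$ is sharper than what the paper obtains, so your main-regime conclusion $\phi(f)=O((\phi/\varphi)\sqrt{(d_1+d_2)/d_1})$ is stronger than the stated corollary. However, the route is genuinely different from the paper's. The paper does not touch Courant--Fischer or Poincar\'e directly; instead it argues \emph{combinatorially} that in any $3$-partition of $V$ the smallest-volume part $S$ satisfies $\phi_G(S)\ge \varphi d_1/(2(d_1+d_2))$ (by counting: the crossing edges of $S$ into $A$ and $B$ are bounded below via the expansion of $G_A,G_B$, and the internal cross-edges $m=|E(S)\cap E(A,B)|$ are controlled via $|S|\ge 2m/d_2$). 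This gives $\phi_3(G)\ge \varphi d_1/(2(d_1+d_2))$, and then the higher-order Cheeger inequality $\phi_3(G)=O(\sqrt{\lambda_3})$ yields $\sqrt{\lambda_3}=\Omega(\varphi d_1/(d_1+d_2))$. Plugging into \autoref{t:main} with $k=3$ gives the corollary directly, without any case split.

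The tradeoff: the paper's argument is short and conceptually clean (it leverages the higher-order Cheeger machinery as a black box), but loses a factor $\sqrt{(d_1+d_2)/d_1}$ in $\sqrt{\lambda_3}$ relative to yours. Your argument is more hands-on and yields a tighter spectral gap, at the cost of the weight-mismatch analysis for $c_A$ and the separate treatment of the regime where $\phi d_2(d_1+d_2)/d_1^2$ is large. Both are valid proofs of the corollary as stated.
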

\begin{proof}
We call $\{S_1,S_2,S_3\}$ a $3$-partition of $V$ if $S_1,S_2,S_3$ are disjoint and $S_1 \cup S_2 \cup S_3 = V$.
We will show that any $3$-partition of $V$ contains a set of large conductance.
This implies that $\phi_3(G)$ is large, and thus $\lambda_3$ is large by the higher-order Cheeger's inequality.
Then \autoref{t:main} will prove the corollary.

Given a $3$-partition, let $S$ be the set of smallest volume, then $\vol(S) \le \vol(V)/3$.
We show that 
\begin{equation}
\label{eq:phi3bound}
\phi_G(S) \geq \frac{\varphi d_1 }{ 2(d_1+d_2)}.
\end{equation}
Let $m = |E(S) \cap E(A, B)|$ be the number of induced edges in $S$ that cross $A$ and $B$.
Then $|S| \ge 2m/d_2$, since the total degree of $S$ in $G'$ is at least $2m$ but the maximum degree in $G'$ is at most $d_2$.
Observe that 
\begin{eqnarray*}
|E(S,\overline{S})| &\geq& |E(S \cap A, A - S \cap A)| + |E(S \cap B, B - S \cap B)| \\
&\geq& \frac12\phi_{G_A}(S \cap A) \cdot \vol(S \cap A) + \frac12\phi_{G_B}(S \cap B) \cdot \vol(S \cap B)\\
& \geq& \frac{\varphi}{2} (\vol(S)-2m),
\end{eqnarray*}
where the second inequality follows by the fact that $\vol(S) \le \frac23\vol(A) = \frac23\vol(B)$,
and the last inequality follows by $\phi(G_A)\geq \varphi$ and $\phi(G_B)\geq \varphi$.
Therefore,
\[
\phi(S) = \frac{|E(S, \overline{S})|}{\vol(S)}
\ge \frac{\varphi \cdot (\vol(S) - 2m)}{2\vol(S)}
= \frac{\varphi}2 - \frac{m \cdot \varphi}{\vol(S)}
\ge \frac{\varphi}2 - \frac{m \cdot \varphi}{2m + d_1|S|}
\ge \frac{\varphi}2 - \frac{\varphi}{2 + 2d_1/d_2}
= \frac{\varphi}2\frac{d_1}{d_1 + d_2},
\]
where the last inequality uses the fact that $|S| \geq 2m/d_2$.
This proves \eqref{eq:phi3bound}. Therefore, $\phi_3(G) \geq \varphi d_1/2(d_1+d_2)$.
But by the higher order Cheeger's inequality, 
$\phi_3(G) = O(\sqrt{\lambda_3})$.  
Therefore, \autoref{t:main} implies that the spectral partitioning algorithm returns a set of conductance
\[
O(\frac{\l_2}{\sqrt{\l_3}}) = O(\frac{\phi}{\varphi} \frac{d_1+d_2}{d_1}).
\]
\end{proof}

We note that the degree requirements on $d_1$ and $d_2$ are necessary.
Otherwise, the bipartite graph $G'$ may only contain a heavy edge (with weight $\phi \cdot \vol(V)/2$) connecting $u \in A$ and $v \in B$ where $d_{G_A}(u) = d_{G_B}(v) = 1$.
Then $A - \{u\}, B - \{v\}, \{u, v\}$ are all sparse cuts and $\lambda_3 \approx \lambda_2$, and \autoref{t:main} would not apply.

\autoref{c:planted} implies that the spectral partitioning algorithm is a constant factor approximation algorithm for planted random instances.
Let $G = (A \cup B, E)$ be a graph such that $|A| = |B| = |V|/2$,
where each induced edge in $A$ and each induced edge in $B$ appears with probability $p$ and each edge crossing $A$ and $B$ appears with probability $q$.
Suppose $p > q > \Omega(\ln n/n)$, then with high probability $\vol(A) \approx \vol(B)$, $\phi(A) \approx \phi(B) \approx q/(p+q)$ and $\phi(G_A) \approx \phi(G_B) \approx \Theta(1)$.
Putting the parameters $\phi \approx q/(p+q)$, $\varphi \approx \Theta(1)$, $d_1 \approx pn$, $d_2 \approx qn$, \autoref{c:planted} implies that the spectral partitioning algorithm returns a set of conductance $O(q/p)$.

\subsection{Stable Instances} 
\label{s:stable}

Several clustering problems are shown to be easier on stable instances~\cite{balcan-blum-gupta,awasthi-blum-sheffet,daniely-linial-saks}, but there are no known results on the stable sparsest cut problem.
As discussed earlier, the algebraic condition that $\l_2$ is small and $\l_3$ is large is of similar flavour to the condition that there is a stable sparse cut, but they do not imply each other.
On one hand, using the definition of stability in the introduction, one can construct an instance with an $\Omega(n)$-stable sparse cut but the gap between $\l_2$ and $\l_3$ is $O(1/n^2)$:
Suppose the vertices are $\{1,...,2n\}$.  
There is an odd cycle, $1$-$3$-$5$-$7$-$9$-$...$-$(2n-1)$-$1$ where each edge is of weight $1$.  
There is an even cycle $2$-$4$-$6$-$8$-$10$-$...$-$2n$-$2$ where each edge is of weight $1$.  
There is an edge between $2i-1$ and $2i$ for each $1 \leq i \leq n$, 
where each edge is of weight $c/n^2$ for a constant $c$.  
Then the optimal cut is the set of odd vertices with conductance $1/n^2$,
and this is an $\Omega(n)$ stable sparse cut.
But the second eigenvector and the third eigenvector will be the same as in the cycle example (if $c$ is a large enough constant), where the vertices are in the order $1,2,3,...,2n$ and the Rayleigh quotients are of order $1/n^2$.  

On the other hand, it is not hard to see that an instance with a large gap between $\l_2$ and $\l_3$ is not necessarily $1$-stable, because there could be multiple optimal sparse cuts.
A more relaxed stability condition is that any near-optimal sparse cut is ``close'' to any optimal solution.
More precisely, we say a cut $(S,\overline{S})$ is $\epsilon$-closed to an optimal cut $(T,\overline{T})$ if the fraction of their symmetric difference $\delta = \vol(S \Delta T)/\vol(V)$ satisfies $\delta < \epsilon$ or $\delta > 1 - \epsilon$.
We call an instance to the sparsest cut problem $(c, \epsilon)$-stable if any $c$-approximation solution is $\epsilon$-close to any optimal solution.
It is possible to show that if $\l_2$ is small and $\l_3$ is large then the instance is stable under this more relaxed notion. 

\begin{corollary}
Any instance to the sparsest cut problem is $(c, \Theta(c\lambda_2/\lambda_3^{3/2}))$-stable for any $c \geq 1$.
\end{corollary}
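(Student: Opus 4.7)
The plan is to combine \autoref{t:maincheeger} with a variational upper bound on $\lambda_3$ built simultaneously from the optimal cut $T$ and the candidate cut $S$. Let $(T,\overline T)$ be an optimal sparsest cut and $(S,\overline S)$ a $c$-approximation, so that $\phi(S) \le c\phi(T) = c\phi(G)$. First I would introduce the centered indicator functions $g_S(v) = 1/\vol(S)$ for $v\in S$ and $g_S(v) = -1/\vol(\overline S)$ for $v\notin S$, with $g_T$ defined analogously. A direct calculation gives $\cR(g_S) \le 2\phi(S) \le 2c\phi(G)$ and $\cR(g_T) \le 2\phi(G)$, and both functions lie in the $w$-orthogonal complement of the constant function $\mathbf{1}$.

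Next I would project $g_T$ onto the orthogonal complement of $g_S$ to obtain $g_T^\perp := g_T - \frac{\langle g_T, g_S\rangle_w}{\|g_S\|_w^2}\,g_S$, producing the $w$-orthogonal triple $\{\mathbf{1}, g_S, g_T^\perp\}$. By the variational characterization \eqref{eq:eigenvar}, $\lambda_3$ is at most the largest Rayleigh quotient on the span; since $\cR(\mathbf{1})=0$, this reduces to a $2\times 2$ generalized eigenvalue problem on $\{g_S, g_T^\perp\}$. Using $\|g_T^\perp\|_w^2 = \|g_T\|_w^2(1-\alpha^2)$ with $\alpha := \langle g_S,g_T\rangle_w / (\|g_S\|_w \|g_T\|_w)$, together with Cauchy--Schwarz applied to $|\langle g_S, L g_T^\perp\rangle|$, I would derive $\cR(g_T^\perp) \le 2c\phi(G)(1+|\alpha|)/(1-|\alpha|)$, and summing with $\cR(g_S)$ yields $\lambda_3 \le 4c\phi(G)/(1-|\alpha|)$.

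The hard part will be to relate $1-|\alpha|$ to the normalized symmetric difference $\delta := \vol(S\Delta T)/\vol(V)$. Writing $1-|\alpha| = \frac{1}{2}\|u - \mathrm{sign}(\alpha)v\|_w^2$ with $u := g_S/\|g_S\|_w$ and $v := g_T/\|g_T\|_w$, I would expand this $w$-norm as a sum over the four atoms $S\cap T$, $S\setminus T$, $T\setminus S$, $\overline{S\cup T}$. In the balanced regime $\vol(S)\approx\vol(T)\approx\vol(V)/2$ the computation collapses to $\alpha = 1-2\delta$, giving $1-|\alpha| = 2\min(\delta,1-\delta)$ directly. The main obstacle is extending this to the uniform lower bound $1-|\alpha| = \Omega(\min(\delta,1-\delta))$ for arbitrary $\vol(S),\vol(T)$, which requires careful bookkeeping across the four atoms and the observation that both cuts, having conductance of the same order, must have volumes within a constant factor whenever $\min(\vol(S), \vol(\overline S))$ and $\min(\vol(T), \vol(\overline T))$ are comparable to $\vol(V)$.

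Granting that estimate, I obtain $\lambda_3 \le O(c\phi(G)/\min(\delta,1-\delta))$. Substituting $\phi(G) \le O(\lambda_2/\sqrt{\lambda_3})$ from \autoref{t:maincheeger} with $k=3$ and rearranging gives $\lambda_3^{3/2} \le O(c\lambda_2/\min(\delta,1-\delta))$, equivalently $\min(\delta, 1-\delta) \le O(c\lambda_2/\lambda_3^{3/2})$, which is exactly $(c,\Theta(c\lambda_2/\lambda_3^{3/2}))$-stability.
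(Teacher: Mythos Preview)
Your approach is genuinely different from the paper's, and the overall skeleton (bound $\lambda_3$ using $S$ and $T$ together, then invoke \autoref{t:maincheeger} with $k=3$) is the same. But the step you yourself flag as the ``main obstacle'' is a real gap, and the justification you offer for it is wrong.

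The paper does not use Gram--Schmidt on centered indicators at all. Instead it builds a $3$-partition directly from the four atoms of $S$ and $T$: take $S_1\in\{S\setminus T,\,T\setminus S\}$ and $S_2\in\{S\cap T,\,\overline{S\cup T}\}$, each chosen to have the larger volume, and set $S_3=V\setminus S_1\setminus S_2$. The symmetric-difference hypothesis gives $\vol(S_i)\ge\epsilon\vol(V)/2$, and since $E(S_i,\overline{S_i})\subseteq E(S,\overline S)\cup E(T,\overline T)$ one gets $\phi(S_i)\le (1+c)\phi(G)/\epsilon$ immediately. Then $\lambda_3\le 2\max_i\phi(S_i)$ (indicator functions are disjointly supported) and the conclusion follows. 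This is a two-line combinatorial argument; no correlation analysis, no unbalanced-volume case.

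Your route can be made to work, but not via the reasoning you propose. The claim that ``both cuts, having conductance of the same order, must have volumes within a constant factor'' is simply false: conductance does not control volume. More to the point, the inequality $1-|\alpha|=\Omega(\min(\delta,1-\delta))$ is a purely set-theoretic statement that has nothing to do with conductances. A correct proof goes through the closed form
\[
\alpha=\frac{\vol(S\cap T)\,\vol(V)-\vol(S)\vol(T)}{\sqrt{\vol(S)\vol(\overline S)\vol(T)\vol(\overline T)}},
\]
from which, writing $a,b,c,d$ for the volumes of the four atoms, one computes
\[
1-\alpha^2=\frac{\vol(V)^2\bigl(ad\,\delta+bc\,(1-\delta)\bigr)}{\vol(S)\vol(\overline S)\vol(T)\vol(\overline T)},
\]
and a short case analysis (using $a+d=(1-\delta)\vol(V)$, $b+c=\delta\vol(V)$, and $\vol(S)\vol(\overline S)\le\vol(V)^2/4$) then yields $1-\alpha^2\ge\Omega(\min(\delta,1-\delta))$. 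That computation is not hard, but it is not the one you sketched, and without it your proof is incomplete. The paper's partition argument sidesteps the issue entirely and is considerably shorter.
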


\begin{proof}
Let $(T,\overline{T})$ be an optimal cut with $\vol(T) \leq \vol(V)/2$ and $\phi = \phi(T)$.
Suppose the instance is not $(c, \epsilon)$-stable.
Then there exists a cut $(S,\overline{S})$ of conductance at most $c \phi$ and $\vol(S) \leq \vol(V)/2$ such that $\vol(S \Delta T)/\vol(V) \in [\epsilon, 1 - \epsilon]$.
Let $S_1$ be $S - T$ or $T - S$, whichever of larger volume.
Let $S_2$ be $S \cap T$ or $V - S - T$, whichever of larger volume.
Then, by our assumption, we have $\vol(S_i) \ge \epsilon \cdot \vol(V)/2$ for $i = 1, 2$.
Also, for $i=1,2$,
\[w(E(S_i, \overline{S_i})) \le w(E(S,\overline{S})) + w(E(T,\overline{T})) \leq \phi \cdot \vol(T) + c\phi \cdot \vol(S) \le (1 + c)\phi \cdot \vol(V)/2.\]
Therefore $\phi(S_i) \le (1 + c)\phi / \eps$.
Finally, observe that $S_3 := V - S_1 - S_2$ is one of these four sets: $T$, $S$, $\overline{T}$, $\overline{S}$.
This implies that $\phi(S_3) \le c\phi/\epsilon$.
Thus,
\[\lambda_3 \le 2\max_i \phi(S_i) = O(\frac{c\phi}{\epsilon}) = O(\frac{c\lambda_2}{\epsilon\sqrt{\lambda_3}}),\]
where the last inequality follows from \autoref{t:main}.
Therefore $\epsilon = O(c\lambda_2/\lambda_3^{3/2})$.
\end{proof}

There is also another interpretation of our result through numerical stability.
By the Davis-Kahan theorem from matrix perturbation theory (see~\cite{von-luxburg}),
when there is a large gap between $\l_2$ and $\l_3$,
then the second eigenvector is stable under perturbations of the edge weights of the graph.
More generally, when there is a large gap between $\l_k$ and $\l_{k+1}$, then the top $k$-dimensional eigenspace is stable under perturbations of the edges weights of the graph.
Our result shows that spectral partitioning performs better when the top eigenspace is stable.
Some similar results are known in other applications of spectral techniques~\cite{azar+,von-luxburg2}.

\bibliographystyle{plain}


\renewcommand{\a}{\alpha}

\begin{appendix}


\section{A New Proof of Cheeger's Inequality} \label{s:cheeger}

We will use \autoref{l:drop} to derive Cheeger's inequality with a weaker constant. 
The proof is a simplified version of our second proof of \autoref{t:main}.
By \autoref{c:two}, we assume that we are given a non-negative function  $f\in\ell^2(V,w)$ with $\cR(f) \leq \l_2$ and $\vol(\supp(f)) \leq \vol(V)/2$ and $\norm{f}_w=1$.
Fix $\a \in (0, 1)$.
Let $I_i = [\a^{i}, \a^{i+1}]$.
By \autoref{l:drop},
\[
\cE(I_i) \geq \frac{\phi^2(f) \cdot \vol^2(\a^{i}) \cdot \len^2(I_i)}{\phi(f) \cdot \vol(\a^i) + \vol(I_i)} 
\geq \frac{\phi^2(f) \cdot \vol^2(\a^{i}) \cdot \len^2(I_i)}{\vol(\a^{i}) + \vol(I_i)} 
= \frac{\phi^2(f) \cdot \vol^2(\a^{i}) \cdot \a^{2i}(1-\a)^2}{\vol(\a^{i+1})}.
\]
Summing over all intervals, we have
\begin{eqnarray*}
\cE_f  \geq  \sum_{i} \cE(I_i) \geq 
\sum_{i} \frac{\phi^2(f) \cdot \vol^2(\a^{i}) \cdot \a^{2i}(1-\a)^2}{\vol(\a^{i+1})}
&=& \phi^2(f) \cdot (1-\a)^2 \sum_{i \geq t} \frac{\vol^2(\a^{i}) \cdot \a^{4i}}{\vol(\a^{i+1}) \cdot \a^{2i}}\\
& \geq & \phi^2(f) \cdot (1-\a)^2 \frac{\big( \sum_{i} \vol(\a^{i}) \cdot \a^{2i}\big)^2 }{\sum_{i} \vol(\a^{i+1}) \cdot \a^{2i}}\\
&=& \phi^2(f) \cdot (1-\a)^2 \a^2 \sum_{i} \vol(\a^{i}) \cdot \a^{2i},
\end{eqnarray*}
where the third inequality follows from (\ref{eq:cauchy}). 
Changing the order of the summation,
\[
\sum_{i} \vol(\a^{i}) \a^{2i}
= \sum_{j} \vol(I_j) \sum_{i \geq j+1} \a^{2i}
= \frac{\a^2}{1-\a^2} \sum_{j} \vol(I_j) \a^{2j} 
\geq \frac{\a^2}{1-\a^2},
\]
where the last inequality holds by the assumption that $\norm{f}_w^2=1$.
Therefore,
\[\cE_f \geq \phi^2(f) \cdot (1-\a)^2 \a^2 \frac{\a^2}{1-\a^2}
= \phi^2(f) \frac{1-\a}{1+\a} \a^4.\]
Setting $\a = (\sqrt{17}-1)/4$, we get $\phi(f) < 4.68\sqrt{\l_2}$.


\section{A Different Proof of \autoref{t:main}}
\label{sec:jumpsecond}
In this section we give a different proof of \autoref{t:main}.
In particular, given a function $g$ that is a $2k+1$ step approximation of $f$, we
lower bound $\cR(f)=\cE_f$ using \autoref{l:drop}. 
This gives \autoref{l:jumpsec} which can be seen as a weaker version of \autoref{l:jump}. 
\begin{corollary} \label{l:step}
If $\cE_f\leq \lambda_k/(C^2k^2)$ for some constant $C$, then for any function $g\in\ell^2(V,w)$ satisfying \eqref{eq:fgclose},
$$ \norm{g}_w^2 \geq \left(1-\frac{4}{Ck}\right)^2. $$
\end{corollary}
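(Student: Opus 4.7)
The plan is to derive this as a short consequence of the triangle inequality in $\ell^2(V,w)$, exploiting the normalization $\norm{f}_w=1$ that is standing throughout the section.

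First I would use the normalization to rewrite the Rayleigh quotient as $\cR(f)=\cE_f/\norm{f}_w^2=\cE_f$. Combining with the hypothesis $\cE_f\le \lambda_k/(C^2k^2)$ and the assumption \eqref{eq:fgclose} from \autoref{l:lambda_k}, I get
\[
\norm{f-g}_w^2 \;\le\; \frac{4\cR(f)}{\lambda_k} \;=\; \frac{4\cE_f}{\lambda_k} \;\le\; \frac{4}{C^2k^2},
\]
so $\norm{f-g}_w\le 2/(Ck)$.

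Next I would apply the reverse triangle inequality in $\ell^2(V,w)$:
\[
\norm{g}_w \;\ge\; \norm{f}_w - \norm{f-g}_w \;\ge\; 1 - \frac{2}{Ck}.
\]
Squaring this and noting that, in the regime where the statement is non-vacuous ($Ck$ large enough that $1-2/(Ck)\ge 0$), one has the elementary inequality $(1-2/(Ck))^2 \ge (1-4/(Ck))^2$, yields the claimed bound
\[
\norm{g}_w^2 \;\ge\; \left(1-\frac{4}{Ck}\right)^2.
\]

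There is no real obstacle here: the corollary is essentially a one-line consequence of \autoref{l:lambda_k} and the triangle inequality, with the factor of $4$ (rather than the tighter $2$) absorbed to present a clean final bound.
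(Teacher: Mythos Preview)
Your proof is correct and follows essentially the same route as the paper: reverse triangle inequality, then plug in \eqref{eq:fgclose} together with the hypothesis $\cE_f\le \lambda_k/(C^2k^2)$. In fact your arithmetic is slightly cleaner than the paper's (which writes $\sqrt{12\cE_f/\lambda_k}$ rather than the $\sqrt{4\cE_f/\lambda_k}$ that \eqref{eq:fgclose} actually gives); the final weakening from $(1-2/(Ck))^2$ to $(1-4/(Ck))^2$ is harmless in the intended regime $Ck\ge 4$, which is the only case where the corollary is applied.
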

\begin{proof}
The statement follows from a simple application of the triangle inequality:
\[\norm{g}_w^2 \geq (\norm{f}_w-\norm{f-g}_w)^2 \geq \left(1  - \sqrt{\frac{12\cE_f}{\lambda_k}}\right)^2 \geq \left(1-\frac{4}{Ck}\right)^2.\]
where the second inequality follows by \eqref{eq:fgclose}.
\end{proof}

\begin{proposition}
\label{l:jumpsec}
For any $2k+1$-step approximation of $f$, called $g$, 
$$ \cE_f  \geq \min\left\{\frac{\phi(f)\norm{g}_w^2}{32k}, \frac{\phi^2(f)\norm{g}_w^4}{2048k^2\norm{f-g}_w^2}\right\}.$$
\end{proposition}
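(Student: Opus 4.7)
The plan is to apply Lemma~\ref{l:drop} to the $2k$ intervals $I_i = [t_{i-1},t_i]$ between consecutive thresholds $0=t_0\leq\cdots\leq t_{2k}$ of $g$, aggregate via Fact~\ref{f:additivity}, and split into two cases matching the two terms in the min. Combining Lemma~\ref{l:drop} with the elementary estimate $P^2/(P+Q)\geq \tfrac{1}{2}\min(P,P^2/Q)$ gives, for each $i$,
\[
\cE(I_i)\geq \tfrac{1}{2}\min\bigl(\phi(f)\vol_f(t_i)\len(I_i)^2,\ \phi(f)^2\vol_f(t_i)^2\len(I_i)^2/\vol_f(I_i)\bigr),
\]
and I would partition the indices into $A=\{i:\vol_f(I_i)\leq \phi(f)\vol_f(t_i)\}$ (where the first term of the min is active) and its complement $B$ (where the second term is active).

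The combinatorial ingredient I need is $\norm{g}_w^2 \leq 2k\sum_i \vol_f(t_{i-1})\len(I_i)^2$. This follows by writing $g(v)^2\leq t_{i(v)}^2 = (\sum_{j\leq i(v)}\len(I_j))^2 \leq 2k\sum_{j\leq i(v)}\len(I_j)^2$ via Cauchy-Schwarz (since there are at most $2k$ terms), where $i(v)$ is the index of the interval containing $f(v)$, and then swapping the order of summation. Using $\vol_f(t_{i-1}) = \vol_f(t_i)+\vol_f(I_i)$ together with the definitions of $A$ and $B$ to bound $\vol_f(t_{i-1})\leq 2\vol_f(t_i)$ on $A$ and $\vol_f(t_{i-1})\leq 2\vol_f(I_i)/\phi(f)$ on $B$ (assuming $\phi(f)\leq 1$) converts this into
\[
\phi(f)\norm{g}_w^2\leq O(k)\Bigl[\phi(f)\sum_{i\in A}\vol_f(t_i)\len(I_i)^2 + \sum_{i\in B}\vol_f(I_i)\len(I_i)^2\Bigr],
\]
so one of the two sums on the right must carry at least half of the left-hand side. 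If the $A$-sum dominates, summing the $A$-half of the energy bound directly yields $\cE_f\geq \phi(f)\norm{g}_w^2/O(k)$, the first term of the min. If the $B$-sum dominates, I would apply (\ref{eq:cauchy}) with $a_i=\vol_f(t_i)\len(I_i)^2$, $b_i=\vol_f(I_i)\len(I_i)^2$ to obtain
\[
\cE_f\geq \frac{\phi(f)^2}{2}\cdot\frac{\bigl(\sum_{i\in B}\vol_f(t_i)\len(I_i)^2\bigr)^2}{\sum_{i\in B}\vol_f(I_i)\len(I_i)^2},
\]
and finish by bounding the numerator below by $\Omega(\norm{g}_w^2/k)^2$ (again via the combinatorial bound above) and the denominator above by $O(k\norm{f-g}_w^2)$.

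The main obstacle is this last denominator estimate: controlling $\sum_{i\in B}\vol_f(I_i)\len(I_i)^2$ by $O(k\norm{f-g}_w^2)$. The pointwise inequality $|f(v)-g(v)|\leq \len(I_i)/2$ for $v\in V_f(I_i)$ only yields the reverse direction $\sum_i \vol_f(I_i)\len(I_i)^2 \geq 4\norm{f-g}_w^2$, and in general $\vol_f(I_i)\len(I_i)^2$ can be arbitrarily larger than the contribution of $V_f(I_i)$ to $\norm{f-g}_w^2$ (for instance when most mass in $I_i$ piles near its endpoints rather than at the middle). Making this direction work likely requires shifting to intervals $[s_{i-1},s_i]$ centered at the midpoints $s_i=(t_i+t_{i+1})/2$, which are the actual jumps of $g$ and on which $|f(v)-g(v)|$ is comparable to the interval length; this refinement is what accounts for the large constant $2048k^2$ appearing in the second term of the min.
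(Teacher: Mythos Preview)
Your diagnosis of the denominator obstacle is correct, but the proposed fix does not quite work: on the intervals $[s_{i-1},s_i]$ with $s_i=(t_i+t_{i+1})/2$, the threshold $t_i$ itself lies in the interior, and $|f(v)-g(v)|$ vanishes there, so it is \emph{not} comparable to the interval length. The paper instead takes $I_i$ to be the middle quarter $\bigl[\tfrac{3t_{i-1}+t_i}{4},\tfrac{t_{i-1}+t_i}{2}\bigr]$ of $[t_{i-1},t_i]$; on this quarter every $v$ satisfies $|f(v)-g(v)|\ge (t_i-t_{i-1})/4$, which immediately yields $\sum_i \vol_f(I_i)(t_i-t_{i-1})^2\le 16\|f-g\|_w^2$.

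There is a second gap you did not flag, in the numerator of your Case~B. If the $B$-sum dominates you only learn that $\sum_{i\in B}\vol_f(I_i)\len(I_i)^2$ is large; this says nothing about $\sum_{i\in B}\vol_f(t_i)\len(I_i)^2$, which is what appears in your Cauchy--Schwarz numerator. On $B$ one has $\vol_f(t_i)<\vol_f(I_i)/\phi(f)$, the wrong direction. The combinatorial bound controls $\sum_i\vol_f(t_{i-1})\len(I_i)^2$, not its restriction to $B$, and on $B$ the gap between $\vol_f(t_{i-1})$ and $\vol_f(t_i)$ can swallow everything.

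The paper sidesteps both issues by dropping the $A/B$ split entirely. After passing to middle-quarter intervals (so the top endpoint is the midpoint $m_i$), it multiplies numerator and denominator of each Lemma~\ref{l:drop} bound by $(t_i-t_{i-1})^2$ and applies \eqref{eq:cauchy} once to the full sum with $a_i=\phi(f)\vol(m_i)(t_i-t_{i-1})^2$ and $b_i=\phi(f)\vol(m_i)(t_i-t_{i-1})^2+\vol(I_i)(t_i-t_{i-1})^2$. This produces a single fraction
\[
\cE_f\ \ge\ \frac{1}{16}\cdot\frac{\phi(f)^2\,X^2}{\phi(f)\,X+Y},\qquad X=\sum_i\vol(m_i)(t_i-t_{i-1})^2,\quad Y=\sum_i\vol(I_i)(t_i-t_{i-1})^2,
\]
and \emph{only then} splits into the two cases of the $\min$ via $P^2/(P+Q)\ge\tfrac12\min(P,P^2/Q)$. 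Because $X$ is the full sum, your combinatorial bound (which the paper proves as $X\ge\|g\|_w^2/(2k)$ using $\vol(m_i)$ in place of $\vol_f(t_{i-1})$) applies directly, and the numerator problem disappears. The ordering ``Cauchy--Schwarz first, case split afterwards'' is the point you are missing.
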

\begin{proof}
Assume that $\range(g)=\{t_0,t_1,\ldots,t_{2k}\}$ such that $0=t_0\leq t_1\leq \ldots\leq t_{2k}$.
For each $1\leq i\leq 2k$, we let $I_i$ be the middle part of the interval $[t_{i-1},t_i]$, i.e.,
$$I_i:=\left[\frac{3t_{i-1}+t_i}{4},\frac{t_{i-1}+t_i}{2}\right].$$
Let $m_i:=(t_{i-1}+t_i)/2$ be the midpoint of $I_i$, and let $m_{2k+1}:=\infty$. Since the intervals are disjoint,  by \autoref{f:additivity} we can write
\begin{align}
\cE_f \geq \sum_{i=1}^{2k} \cE_f(I_i) \geq
\sum_{i=1}^{2k} \frac{\phi^2(f) \cdot \vol^2(m_i) \cdot \len^2(I_i)}{\phi(f) \cdot \vol(m_i)+\vol(I_i)}& = 
\frac{1}{16}\sum_{i=1}^{2k} \frac{\phi^2(f) \cdot \vol^2(m_i) \cdot (t_i-t_{i-1})^4}{\phi(f) \cdot \vol(m_i) \cdot (t_i-t_{i-1})^2+\vol(I_i) \cdot (t_i-t_{i-1})^2} \nonumber\\
&\geq  \frac{1}{16}\frac{\phi^2(f)\left(\sum_{i=1}^{2k} \vol(m_i)(t_i-t_{i-1})^2\right)^2}{\phi(f)\sum_{i=1}^{2k}\vol(m_i)(t_i-t_{i-1})^2 + \sum_{i=1}^{2k} \vol(I_i)(t_i-t_{i-1})^2},\label{eq:energyrelg}
\end{align}
where the second inequality follows by applying \autoref{l:drop} to each interval $I_i$,
and the third inequality follows from \eqref{eq:cauchy}.
Now to prove the proposition we simply use the following two claims.
\begin{claim}
$$ \sum_{i=1}^{2k} \vol(I_i) (t_i-t_{i-1})^2 \leq 16\norm{f-g}_w^2.$$
\end{claim}
\begin{proof}
Since $g$ is a $2k+1$ approximation of $f$, for any vertex $v$ such that $f(v)\in I_i$,
$$ |f(v) - g(v)| \geq \frac{t_i-t_{i-1}}{4}. $$
Therefore,
$$\norm{f-g}_w^2=\sum_{v} w(v)|f(v)-g(v)|^2 \geq \sum_{i=1}^{2k} \sum_{~v:f(v)\in I_i} w(v) |f(v)-g(v)|^2 \geq \frac{1}{16}\sum_{i=1}^{2k} \vol(I_i)(t_i-t_{i-1})^2. 
$$
\end{proof}

\begin{claim}
$$ \sum_{i=1}^{2k} \vol(t_i) (t_i-t_{i-1})^2 \geq \frac{\norm{g}_w^2}{2k}.$$
\end{claim}
\begin{proof}
The claim follows simply from changing the order of summations: 
\begin{eqnarray*}
\sum_{i=1}^{2k} \vol(m_i)(t_{i}-t_{i-1})^2 
= \sum_{i=1}^{2k} (t_{i}-t_{i-1})^2 \sum_{j=i}^{2k} (\vol(m_{j})-\vol(m_{j+1}))
& = & \sum_{i=1}^{2k} (\vol(m_{i})-\vol(m_{i+1})) \sum_{j=1}^{i} (t_{j}-t_{j-1})^2\\
&\geq& \sum_{i=1}^{2k} (\vol(m_{i}) - \vol(m_{i+1})) \frac{t_{i}^2}{2k} = \frac{\norm{g}_w^2}{2k}.
\end{eqnarray*}
where the first inequality follows from the Cauchy-Schwarz inequality, and the last equality follows
by the fact that for all vertices $v$ we have $g(v)=t_i$ when $m_i < f(v) \leq m_{i+1}$. 
\end{proof}
By \eqref{eq:energyrelg} and the above claims, we have
$$ \cE_f \geq \frac{\phi^2(f) \left(\sum_{i=1}^{2k} \vol(m_i)(t_i-t_{i-1})^2\right)^2}{16\phi(f)\sum_{i=1}^{2k} \vol(m_i)(t_i-t_{i-1})^2 + 256\norm{f-g}_w^2} \geq \min\left\{\frac{\phi(f)\norm{g}_w^2}{64k}, \frac{\phi^2(f)\norm{g}_w^4}{2048k^2\norm{f-g}_w^2}\right\} 
$$
\end{proof}

\begin{proofof}{\autoref{t:main}}
Let $g$ be as defined in \autoref{l:lambda_k}. If $\l_2 \geq \frac{\l_k}{256k^2}$, then by Cheeger's inequality,
\[\phi(f) \leq \sqrt{2\l_2} \leq \frac{32 k\l_2}{\sqrt{\l_k}},\]
and we are done. Otherwise, by \autoref{l:step}, we have $\norm{g}_w^2 \geq 1/2$.
Therefore, by \autoref{l:jump}, we have
\begin{eqnarray*}
 \cE_f   \geq \min\left\{\frac{\phi(f)\norm{g}_w^2}{32k}, \frac{\phi^2(f)\norm{g}_w^4}{2048k^2\norm{f-g}_w^2}\right\} \geq \frac{\phi^2(f)}{2^{13} k^2\norm{f-g}^2_w} \geq \frac{\lambda_k \phi^2(f)}{10^5 k^2 \cE_f},
\end{eqnarray*}
where the last inequality follows by \autoref{l:lambda_k}. Now the theorem
follows from the fact that $\cE_f=\cR(f)\leq \lambda_2$.
\end{proofof}


\end{appendix}

\end{document}